\newtheorem{definition}{Definition}
\newtheorem{theorem}{Theorem}
\newtheorem{lemma}{Lemma}
\newtheorem{corollary}{Corollary}
\newtheorem{property}{Property}
\newtheorem{problem}{Problem}
\newtheorem{MIrem}{Remark}
\newenvironment{remark}    {\begin{MIrem}\em}    {\end{MIrem}}
\newenvironment{proof}{\noindent\textbf{Proof.}}{\hfill \ensuremath {\boxtimes}\\}
\newcommand{\ch}{\textrm{CH}}
\newcommand{\rr}{\texttt{r}}
\newcommand{\br}{\texttt{b}}
\date{}
\title{On Hamiltonian alternating cycles and paths\footnote{The authors dedicate this paper to the memory of Prof. Ferran Hurtado who proposed the relaxation from being plane to being $1$-plane for cycles and paths on bicolored point sets. It was a great pleasure to work with him.}}
\author{\normalfont\fontsize{12}{14}\selectfont
Merc\`e Claverol$^1$\and
Alfredo Garc\'{i}a$^2$\and
Delia Garijo$^3$\and
Carlos Seara$^1$\and
Javier Tejel$^2$\and
}
\begin{document}

\maketitle

\addtocounter{footnote}{1} \footnotetext{Departament de Matem\`atiques, Universitat Polit\`ecnica de
Catalunya, Spain. {\bf merce.claverol@upc.edu} and {\bf carlos.seara@upc.edu}.
Partially supported by projects Gen. Cat. DGR 2014SGR46 and MTM2015-63791-R MINECO/FEDER.}
\addtocounter{footnote}{1} \footnotetext{Departamento de M\'etodos Estad\'{i}sticos, IUMA, Universidad de
Zaragoza, Spain. {\bf olaverri@unizar.es} and {\bf jtejel@unizar.es}.
Partially supported by projects E58 (ESF) (Gobierno de Arag\'on) and MTM2015-63791-R MINECO/FEDER.}
\addtocounter{footnote}{1} \footnotetext{Departamento de Matem\'atica Aplicada I, Universidad de Sevilla, Spain. {\bf dgarijo@us.es}. Partially supported by projects MTM2014-60127-P and FQM-164.}

\begin{abstract}
We undertake a study on computing Hamiltonian alternating cycles and paths on bicolored point sets. This has been an intensively studied problem, not always with a solution, when the paths and cycles are also required to be plane. In this paper, we relax the constraint on the cycles and paths from being plane to being 1-plane, and deal with the same type of questions as those for the plane case, obtaining a remarkable variety of results. For point sets in general position, our main result is that it is always possible to obtain a $1$-plane Hamiltonian alternating cycle. When the point set is in convex position, we prove that every Hamiltonian alternating cycle with minimum number of crossings is $1$-plane, and provide $O(n)$ and $O(n^2)$ time algorithms for computing, respectively, Hamiltonian alternating cycles and paths with minimum number of crossings.
\end{abstract}

\section{Introduction}\label{sec:intro}

A \emph{geometric graph} is a graph drawn in the plane whose vertex set is a set of points and whose edges are straight-line segments connecting pairs of points. Two edges of a geometric graph \emph{cross} if they have an intersection point lying in the relative interior of both edges. When there are no crossings, the geometric graph is said to be \emph{plane}, and it is \emph{alternating} if its vertex set is a bicolored point set and all its edges are \emph{bichromatic}, i.e., they connect points of different colors; when all edges connect points of the same color, the geometric graph is \emph{monochromatic}. In this paper, we deal with alternating (geometric) cycles and paths.

The study of alternating paths was initiated, according to Pach~\cite{KPT04}, by Erd\H{o}s~\cite{PH} around 1989, who proposed the following problem:

\vspace{0.1cm}
\emph{Determine or estimate the largest number $\ell(n)$ such that, for every set of $n$ red and $n$ blue points on a circle, there exists a non-crossing alternating
path consisting of $\ell(n)$ vertices.}
\vspace{0.1cm}

This problem comes from the fact that it is not always possible to obtain a plane Hamiltonian alternating path on that point configuration. Erd\H{o}s conjectured that $\ell(n)= \frac{3}{2}n+2+o(n)$, but one can find point configurations for which  $\ell(n)<\frac{4}{3}n +o(n)$, as described in~\cite{AGHT03,KPT04,M11}.
The best bounds up to date for $\ell(n)$ are due to Kyn\v{c}l, Pach and T\'{o}th~\cite{KPT04}, and valid for bicolored point sets in general position (i.e, no three points lie on the same line). However, the conjecture that $|\ell(n)-\frac{4}{3}n|= o(n)$ remains open, even for points in convex position. For more information on this topic see~\cite{AGHNR99,CKMSV09a}.

In this context, Akiyama and Urrutia~\cite{AU90} gave an $O(n^2)$ time algorithm for computing a plane alternating path visiting the maximum possible number of points, provided that the point set is in convex position and the endpoints of the path are previously fixed. Other references on bicolored point sets are~\cite{KK03} (which is the book of reference) and~\cite{DK01,DS00,T96} (where monochromatic paths and matchings are studied). See also~\cite{CKMSV09a} for results on alternating paths on double chains, and~\cite{AGHNR99} for results on plane alternating spanning tress. For other problems on geometric graphs see~\cite{BMP05,HT13,P13}.

Since the plane character for Hamiltonian alternating cycles and paths is not always possible to achieve, it is natural to relax the constraint in order to obtain better results. Thus, different approaches arise based on allowing crossings. Kaneko, Kano and Yoshimoto~\cite{KKY00} focused in getting few crossings in global, and proved that there always exists a Hamiltonian alternating cycle on a bicolored point set in general position with at most $n-1$ crossings, where $n$ is the number of points of each color. Claverol et al.~\cite{CGHLS13} studied the \emph{1-plane} character, i.e., every edge is allowed to have at most one crossing. They proved that one can always obtain a $1$-plane Hamiltonian alternating cycle on a point set in convex position and on a double chain. They also conjectured the existence of a $1$-plane Hamiltonian alternating path for point sets in general position.  Observe that the terms plane graph and $1$-plane graph refer to a geometric object, while to be planar or $1$-planar are properties of the underlying abstract graph; see~\cite{R65} where the concept of 1-planar graph was introduced (in relation to a coloring problem for planar graphs).

Our paper can be considered as a continuation of the study developed by Claverol et al.~\cite{CGHLS13} since we explore the relaxation from being plane to being $1$-plane for paths and cycles on bicolored point sets. To be more precise on the content of this paper, we next provide some notation and additional definitions.

Let $R$ and $B$ be two disjoint sets of red and blue points in the plane, respectively, such that $S=R\cup B$ is in general position\footnote{In all the figures in this paper, red points are illustrated as solid red points, and blue points are depicted as hollow blue points.}.
Unless otherwise stated, $|B|=n$ and $n\leq |R|\leq n+1$. For short, we shall use $1$-PHAC and  $1$-PHAP to refer to a $1$-plane Hamiltonian alternating cycle and a $1$-plane Hamiltonian alternating path, respectively. As usual, $\ch(X)$ denotes the convex hull of a point set $X\subseteq S$. We say that $\ch(X)$ is \emph{bichromatic} if its boundary contains points of both colors, and \emph{monochromatic} otherwise.
A \emph{run} of $X\subseteq S$ is a maximal set of consecutive points of the same color on the boundary of $\ch(X)$; it can be \emph{red} or \emph{blue} depending on the color of the points, and its \emph{cardinality} is the number of its points. Let $\rr(X)$ and \br$(X)$ be the number of red and blue runs of $X$, respectively. Note that $\rr(X)=\br(X)$ whenever $\ch(X)$ is bichromatic; otherwise one of the values equals $1$ and the other zero.

Our main result in Section~\ref{sec:general} is that a $1$-PHAC can always be obtained on all instances of set $S$ in general position, also upper-bounding the number of crossings by $n-\max \{\rr(S), \br(S)\}$. The key tool to prove this result is that, except for some special configurations, one can always draw a $1$-PHAP on $S$, fixing previously its endpoints on the boundary of $\ch(S)$. We conclude the section by showing that a $1$-PHAC on $S$ can be computed in $O(n^2)$ time and space.

Section~\ref{sec:convex} concerns Hamiltonian alternating cycles and paths on point sets $S$ in convex position.
Subsection~\ref{subsec:cycles} deals with cycles:  $n-\rr(S)$
is shown to be a lower bound for the number of crossings, and only some cycles that are 1-plane attain the bound. We also  prove that a $1$-PHAC on $S$ with minimum number of crossings can be computed in $O(n)$ time and space, and as a consequence of the process for cycles, the same complexity is obtained for computing a  $1$-PHAP  with minimum number of crossings, provided that the endpoints of the path are consecutive points of $S$.

Subsection~\ref{subsec:paths} contains our results for paths on sets in convex position. We first prove that the above-mentioned special configurations are the unique point configurations that do not admit a $1$-PHAP with given endpoints. Then, we show that, except for those special configurations, every Hamiltonian alternating path with minimum number of crossings is 1-plane. This result allows us to design an $O(n^2)$ time and space algorithm to compute a $1$-PHAP with fixed endpoints and minimum number of crossings.

We conclude the paper in Section~\ref{sec:conclu} with some open problems.

\section{$1$-PHAC for point sets in general position}\label{sec:general}

Our aim in this section is to prove that there always exists a $1$-PHAC on a set $S=R\cup B$ in general position with at most $n-\max \{\rr(S), \br(S)\}$ crossings, where $n=|R|=|B|\geq 2$ (Figure~\ref{fig:cycle} left shows an example); see Theorem~\ref{the:cycle} below.

The proof of Theorem~\ref{the:cycle} follows that given by Kaneko, Kano and Yoshimoto~\cite{KKY00} to construct a Hamiltonian alternating cycle with at most $n-1$ crossings. Although both proofs are similar, there are two fundamental differences. On the one hand, the construction of the cycle in~\cite{KKY00} is based on the fact that one can always draw a Hamiltonian alternating path with at most $n-1$ crossings connecting any red point to any blue point, both located on the boundary of $\ch(S)$. Nevertheless, this is not true for 1-plane geometric graphs (see Figure~\ref{fig:cycle} right) and so our proof is adapted to this new scenario. On the other hand, our proof makes possible upper bounding the number of crossings, not only by $n-1$ as in~\cite{KKY00}, but using the number of runs of $S$.

\begin{figure}[tb]
\centering
\includegraphics[scale=0.7]{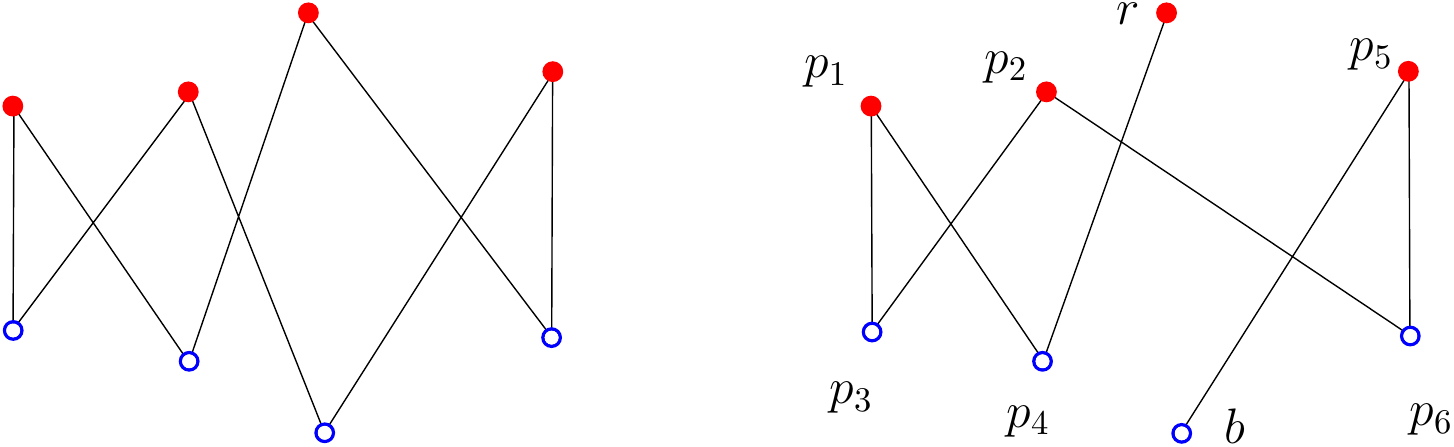}
\caption{Left: A $1$-PHAC. Right: A point configuration that admits no $1$-PHAP with endpoints $r$ and $b$.}\label{fig:cycle}
\end{figure}

The following technical lemma, which can be viewed as an adaptation of Lemma 1 from Kaneko, Kano and Yoshimoto~\cite{KKY00}, is crucial to reach some of the main results in this paper.

\begin{lemma}\label{lemm:partition}
Let $S=R\cup B$ be in general position, and let $r$ be an arbitrary red point of $S$. Then, the following statements hold.
\begin{itemize}
  \item[(i)] Suppose that $|R|=|B|=n\geq 1$, $\{p_1,\ldots,p_{2n-1}\}$ is the counterclockwise radial order of the points of $S\setminus\{r\}$ around $r$, and $p_{i+1}$ is the first point in that sequence such that $$|\{r,p_1,\ldots,p_{i+1}\}\cap R|=|\{r,p_1,\ldots,p_{i+1}\}\cap B|.$$ Then, (a) $p_{i+1}=p_1$ if $p_1$ is blue, (b) $p_i$ and $p_{i+1}$ both exist and are blue whenever $p_1$ is red. In this last case, we say that $S_1=\{p_1,\ldots,p_i\}$ and $S_2=\{p_{i+1},\ldots,p_{2n-1}\}$ form a \emph{counterclockwise partition of $S\setminus \{r\}$ around $r$}; see Figure~\ref{fig:partition} left.
  \item[(ii)] Suppose that $|R|=|B|+1=n+1\geq 2$, $\{p_1,\ldots,p_{2n}\}$ is the counterclockwise radial order of the points of $S\setminus\{r\}$ around $r$, point $p_{2n}$ is red, and $p_{i+1}$ is the first point in that sequence such that $$|\{r,p_1,\ldots,p_{i+1}\}\cap R|=|\{r,p_1,\ldots,p_{i+1}\}\cap B|.$$ Then, (a) $p_{i+1}=p_1$ if $p_1$ is blue, (b) $p_i$ and $p_{i+1}$ both exist and are blue whenever $p_1$ is red. In this last case, we say that $S_1=\{p_1,\ldots,p_i\}$ and $S_2=\{p_{i+1},\ldots,p_{2n}\}$ form a \emph{counterclockwise partition of $S\setminus\{r\}$ around $r$}.
\end{itemize}
\end{lemma}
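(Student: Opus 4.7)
\medskip
\noindent\textbf{Plan.} The argument is essentially a discrete intermediate-value argument on a running difference, mirroring Lemma 1 of Kaneko--Kano--Yoshimoto~\cite{KKY00}. The plan is to introduce a potential function that counts the imbalance between red and blue points along the radial prefix and track its $\pm 1$ steps.

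First, define
\[
 f(k) \;=\; |\{r,p_1,\ldots,p_k\}\cap R| \;-\; |\{r,p_1,\ldots,p_k\}\cap B|, \qquad k=0,1,\ldots,N,
\]
where $N=2n-1$ in case (i) and $N=2n$ in case (ii). Since $r$ is red, $f(0)=1$, and $f(k)-f(k-1)=+1$ if $p_k$ is red and $-1$ if $p_k$ is blue. In case (i), the balance $|R|=|B|$ yields $f(2n-1)=0$, so $f$ must take the value $0$ at some index in $\{1,\ldots,2n-1\}$. In case (ii), $|R|=|B|+1$ gives $f(2n)=1$; the hypothesis that $p_{2n}$ is red then forces $f(2n-1)=0$, so again $f$ reaches $0$ inside $\{1,\ldots,2n-1\}$. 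In both cases, let $i+1$ be the first such index, which is the index appearing in the statement.

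Next, I would handle the two statements (a) and (b) in parallel for (i) and (ii), since the combinatorial content is identical. For (a), if $p_1$ is blue then $f(1)=0$, so by minimality $i+1=1$, giving $p_{i+1}=p_1$. For (b), suppose $p_1$ is red, so $f(1)=2$. By minimality of $i+1$, we have $f(k)\ge 1$ for $0\le k\le i$ and $f(i+1)=0$. The $\pm 1$ step size then forces $f(i)=1$, so $p_{i+1}$ must be blue. To deal with $p_i$, I observe that $f(i-1)\in\{0,2\}$ by the step size, and $f(i-1)\ge 1$ by minimality of $i+1$, so $f(i-1)=2$, whence $p_i$ is blue as well.

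Finally, I would check the index bookkeeping, which is the only mildly delicate part: I need $p_i$ and $p_{i+1}$ to actually belong to the sequence. The case $i=0$ is excluded because it would require $p_1$ blue, contradicting the hypothesis of (b); the case $i=1$ is excluded because it would force $f(0)=2$, contradicting $f(0)=1$. Hence $i\ge 2$, and both $p_i$ and $p_{i+1}$ exist and lie in $S\setminus\{r\}$. The main obstacle, such as it is, is precisely this index check; there is no real geometric content beyond the existence of the counterclockwise radial ordering, which is guaranteed by general position.
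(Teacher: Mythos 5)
Your proof is correct and is essentially the paper's own argument: the paper likewise tracks the running imbalance $\Delta_k=|\{r,p_1,\ldots,p_k\}\cap R|-|\{r,p_1,\ldots,p_k\}\cap B|$, notes it starts at $2$ when $p_1$ is red, changes by $\pm1$, and ends at $0$, so the first zero is reached via the subsequence $\ldots,2,1,0$, forcing $p_i$ and $p_{i+1}$ to be blue. Your treatment is just a slightly more explicit version (including the index bookkeeping and the reduction of case (ii) via $p_{2n}$ being red), so there is nothing to add.
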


\begin{proof}
We prove statement $(i)$. An analogous argument, by removing point $p_{2n}$, proves statement $(ii)$. If $p_1$ is blue, the result is obvious. If $p_1$ is red, let $\Delta_i=|\{r,p_1,\ldots,p_i\}\cap R|-|\{r,p_1,\ldots,p_i\}\cap B|$ for $1\leq i\leq 2n-1$. The sequence of values of $\Delta_i$ begins with value $2$, ends with value $0$,  increases by $1$ if the visited point is red, and decreases by $1$ if the visited point is blue. Therefore, when the value $0$ appears the first time in that sequence, it is produced by a subsequence $\ldots,2,1,0$, which implies that the last two visited points must be blue. See Figure~\ref{fig:partition} left.
\end{proof}

A totally symmetric result holds if we explore the points of $S\setminus\{r\}$ in clockwise order: if $\{p'_1,\ldots,p'_{2n-1}\}$ ($\{p'_1,\ldots,p'_{2n}\}$) is the clockwise radial order of the points of $S\setminus\{r\}$ around $r$ and point $p'_1$ is red, then there is a first clockwise point $p'_{i'+1}$ such that $$|\{r,p'_1,\ldots,p'_{i'+1}\}\cap R|=|\{r,p'_1,\ldots,p'_{i'+1}\}\cap B|$$ and points $p'_{i'}$, $p'_{i'+1}$ are blue. We say that $S_1=\{p'_1,\ldots,p'_{i'}\}$ and $S_2=\{p'_{i'+1},\ldots,p'_{2n-1}\}$ ($S_2=\{p'_{i'+1},\ldots,p'_{2n}\}$) form a \emph{clockwise partition of $S\setminus\{r\}$ around $r$}. See Figure~\ref{fig:partition} right.

Obviously, clockwise and counterclockwise partitions of $S\setminus\{b\}$ around a blue point $b$ can also be defined when $|R|=|B|=n\geq 1$ or when $|B|=|R|+1=n+1\geq 2$ and $p_{2n}$ is blue.

\begin{paragraph}
\noindent \textbf{Notation according to Lemma~\ref{lemm:partition}.} Hereafter, given a partition $S_1\cup S_2$ of $S\setminus\{r\}$ around $r$,
we shall use  $n_1=|R\cap S_1|=|B\cap S_1|$ and $n_2=|R\cap S_2|=|B\cap S_2|$ when $|R|=|B|+1$. For $|R|=|B|$ it follows that $|B\cap S_2|= n_2+1$ and $|R\cap S_2|= n_2$. Further, we shall write $P_{X}(p,q)$ to indicate a $1$-PHAP on a set $X\subseteq S$ with endpoints $p,q\in X$ (here, it is understood that $p$ and $q$ have adequate colors, and that $X$ contains a proper number of red and blue points in order to construct such a path).
\end{paragraph}

\begin{figure}[tb]
\centering
\includegraphics[scale=0.7]{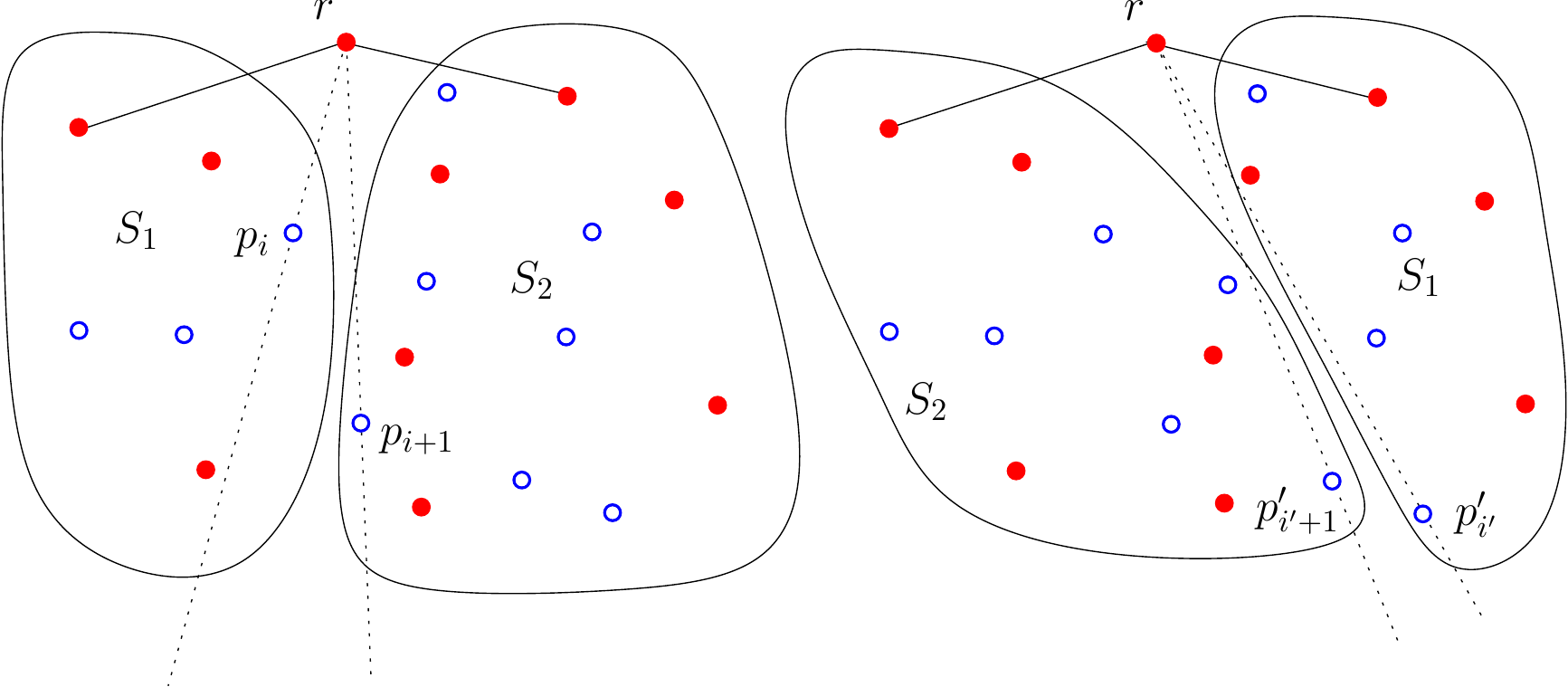}
\caption{A counterclockwise partition (left) and a clockwise partition (right) of $S$ around $r$.}\label{fig:partition}
\end{figure}

\vspace{0.5cm}

Lemma~\ref{lemm:partition} and the notion of \emph{visibility} provided below are the key tools to prove Lemma \ref{lemm:path}, which states that a $1$-PHAP connecting any two points on the boundary of $\ch(S)$ can always be drawn except for the so-called  \emph{special configurations} (see Definition~\ref{def:special}),  also upper-bounding the number of crossings in terms of the number of runs of $S$. This is the main result to reach Theorem~\ref{the:cycle}.

For a point set $X$ (not necessarily consisting of red and blue points) and a point $x$ outside $\ch(X)$, a point $p\in X$ is {\it visible} from $x$ or point $x$ {\it sees} point $p$ if the segment $xp$ does not cross $\ch(X)$. Note that if $p$ and $p'$ are both visible from $x$, and $p$ is placed counterclockwise before $p'$ on the boundary of $\ch(X)$, then all points on the boundary of $\ch(X)$ placed clockwise between $p'$ and $p$ are also visible from $x$.

\begin{definition}\label{def:special}
Let $S=R\cup B$, and let $r\in R$ and $b\in B$ be two arbitrary points on the boundary of $\ch(S)$. The triple $(S,r,b)$ is a {\rm special configuration} if the  three following conditions are satisfied:
\begin{itemize}
\item[(i)] The two neighbors of $r$ on the boundary of $\ch(S)$ are red, and those of $b$ are blue.
\item[(ii)] $p_{i+1}=p'_{i'+1}=b$ in the clockwise and counterclockwise partitions of $S$ around $r$.
\item[(iii)] $p_{i+1}=p'_{i'+1}=r$ in the clockwise and counterclockwise partitions of $S$ around $b$.
\end{itemize}
Otherwise, the configuration $(S,r,b)$ is {\rm non-special}.
\end{definition}

Figure~\ref{fig:cycle} (right) illustrates a special configuration $(S,r,b)$. In the counterclockwise partition of $S$ around $r$, $S_1=\{p_1, p_2, p_3, p_4\}$ and so $p_{i+1} =b$; for the clockwise partition we have $S_1=\{p_5, p_6\}$, and again $p'_{i'+1} =b$. The same happens for the clockwise and counterclockwise partitions of $S$ around $b$: points $p_{i+1}$ and $p'_{i'+1}$ coincide with $r$.

\begin{lemma}\label{lemm:path}
Let $S=R\cup B$, and let $r$ be an arbitrary red point on the boundary of $CH(S)$. Then, the following statements hold.
\begin{itemize}
\item[(i)] If $|R|=|B|=n\geq 1$ and there is a blue point $b$ on the boundary of $\ch(S)$ such that the configuration $(S,r,b)$ is non-special, then there exists a $1$-PHAP on $S$ with endpoints $r$ and $b$ and at most $n-\rr(S)=n-\br(S)$ crossings.

 \item[(ii)] If $|R|=|B|+1 = n+1\geq 2$ and there is a red point $r'\in R\setminus\{r\}$ on the boundary of $\ch(S)$, then there exists a $1$-PHAP  on $S$  with endpoints $r$ and $r'$ and  at most $n-\br(S)$ crossings.
\end{itemize}
\end{lemma}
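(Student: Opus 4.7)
I would prove (i) and (ii) jointly by strong induction on $n$. The base cases $n = 1$ are immediate: the edge $rb$ is a $1$-PHAP for (i), and the two-edge path $r,b,r'$ is a $1$-PHAP for (ii), each with $0 = n - \rr(S) = n - \br(S)$ crossings.

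For the inductive step, the main tool is Lemma~\ref{lemm:partition}. Given the red pivot $r$ on the boundary of $\ch(S)$, a counterclockwise or clockwise partition $S \setminus \{r\} = S_1 \cup S_2$ supplies a blue bridge point $p_{i+1}$ such that the chord $r p_{i+1}$ separates $S_1$ from $S_2$ angularly at $r$; equivalently, $S_1 \cup \{r\}$ and $S_2$ lie in disjoint closed angular sectors at $r$. In case (i), non-speciality of $(S,r,b)$ means one of the three defining conditions of Definition~\ref{def:special} fails; exploiting the symmetry between clockwise and counterclockwise partitions of $S$ around $r$ or around $b$, and the boundary-neighbours clause, I would select a partition whose bridge is distinct from $b$ (when the neighbours clause fails, a direct bichromatic boundary edge is used instead). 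After such a choice, $b$ lies strictly inside one part, WLOG $b \in S_2$ and $b \ne p_{i+1}$. I would then invoke the inductive hypothesis twice: statement (ii) on $S_1 \cup \{r\}$, which has one more red than blue and contains $r$ on its hull boundary, to produce a $1$-PHAP from $r$ to some other red boundary point $r_1$; and the colour-swapped version of (ii) on $S_2$, which has one more blue than red and contains both $p_{i+1}$ and $b$ on its hull boundary, to produce a $1$-PHAP from $p_{i+1}$ to $b$. Concatenating the two sub-paths via the bichromatic bridge edge $r_1 p_{i+1}$ yields a $1$-PHAP from $r$ to $b$ on $S$. Case (ii) is treated analogously: use the partition for $|R| = |B| + 1$ (which forces $p_{2n}$ red), and place $r'$ either in $S_1$, in $S_2$, or as $p_{2n}$ itself.

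The hardest step will be to verify simultaneously the $1$-plane property and the crossing bound. The angular disjointness ensures that no edge of the $S_1$-subpath can cross any edge of the $S_2$-subpath, so the only possibly new crossings involve the bridge edge $r_1 p_{i+1}$; a local analysis of the recursive sub-path edges incident to $r_1$ and $p_{i+1}$ shows that the bridge meets at most one of them, so $1$-planarity is preserved. The crossing count then reads at most $n_1 - \br(S_1 \cup \{r\}) + n_2 - \rr(S_2) + 1 = n - \br(S_1 \cup \{r\}) - \rr(S_2)$, and one must verify, by a case analysis on the colours of the boundary neighbours of $r$ and $p_{i+1}$ and on the extreme points of $S_1$ and $S_2$, that $\br(S_1 \cup \{r\}) + \rr(S_2) \ge \rr(S)$ in order to close the induction. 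Degenerate situations must also be dispatched: the case $p_1$ blue with $S_1 = \emptyset$ collapses to a single recursive call, and the case where $S_1 \cup \{r\}$ has no red boundary point other than $r$ forces a re-selection of the partition (clockwise instead of counterclockwise, or swap of the roles of $r$ and $b$).
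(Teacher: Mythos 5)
Your overall strategy (joint induction on the two statements, driven by the partition of Lemma~\ref{lemm:partition}) is the same as the paper's, but your decomposition differs in a way that creates two genuine gaps.

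First, the bridge edge $r_1p_{i+1}$ is not under control. You claim that angular disjointness confines the damage to ``a local analysis of the edges incident to $r_1$ and $p_{i+1}$,'' but the segment $r_1p_{i+1}$ joins a vertex of $\ch(S_1\cup\{r\})$ to a vertex of $\ch(S_2)$ and can perfectly well pass through the \emph{interior} of either hull (e.g.\ if $r_1=p_1$, the segment $p_1p_{i+1}$ sweeps across the whole angular sector containing $S_1$), so it may cross many sub-path edges that are not incident to its endpoints. Making a connecting edge safe requires choosing its endpoints mutually visible with respect to the two hulls, which is exactly the tangent-point machinery the paper deploys in its Case~3.2; it is not automatic. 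The paper's main case avoids the issue entirely by a different split: it recurses on $S_1\cup\{r,p_{i+1}\}$ (endpoints $r$ and $p_{i+1}$, a red--blue instance of statement $(i)$) and on $S_2$ (endpoints $p_{i+1}$ and $b$), and simply concatenates the two paths at the \emph{shared vertex} $p_{i+1}$ --- no bridge edge, hence no new crossings, and $1$-planarity follows from the disjointness of the two hulls.

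Second, your crossing count does not close. You need $\br(S_1\cup\{r\})+\rr(S_2)\ge \rr(S)$, and this is false in general: take $\rr(S)=\br(S)=3$ with the arc of $\partial\ch(S)$ falling into $S_2$ consisting of a single blue run (so $A_2$ contributes no red hull point) and all red points of $S_2$ interior to $\ch(S_2)$, giving $\rr(S_2)=0$, while $S_1\cup\{r\}$ picks up only the two blue runs lying in its arc, giving $\br(S_1\cup\{r\})=2<3$. The inequality fails precisely because statement $(ii)$'s bound is phrased in the \emph{minority} colour's runs, so your two recursive bounds mix $\br$ of one part with $\rr$ of the other. The paper's split is engineered so that both recursive bounds are in terms of red runs, and the counting argument ($\rr(S_1\cup\{r,p_{i+1}\})+\rr(S_2)\ge \rr(S)+1$, the ``$+1$'' coming from the red run at $r$ being counted in both parts) goes through. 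Finally, note that most of the paper's labour lies in statement $(ii)$ when the points visible from $r$ are all red or all blue (its Cases~3 and~4, including the sub-case where the induced configuration $(S_2,p_{i+1},r')$ is itself special); your sketch's ``treated analogously'' does not engage with these, and they are where both the $1$-planarity and the run inequality become delicate.
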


\begin{proof}
We prove simultaneously statements $(i)$ and $(ii)$ by induction on $|S|$. Note that one of the differences between the two statements is the color of the endpoints of the corresponding paths, they have either distinct color (red-blue) or the same color (red-red); obviously one can also consider the blue version of statement $(ii)$ for $|B|=|R|+1$ and two distinct blue points on the boundary of $\ch(S)$.
Thus, the three possible combinations of color will be used along the proof to construct, by induction, adequate 1-plane paths on subsets of $S$ and connect them in order to obtain a 1-plane path on $S$. Assume that $|S|>3$ (the result is straightforward for  $|S|=2$ or $|S|=3$).

\vspace{0.3cm}
\noindent{\bf Proof of statement $(i)$}
\vspace{0.3cm}

Suppose first that there exists a blue point $b'\ne b$ on the boundary of $\ch(S\setminus \{r\})$ that is visible from $r$. Then, by induction, there is a $1$-plane path $P_{S\setminus \{r\}}(b',b)$ with at most $n-1-\rr(S\setminus \{r\})$ crossings. Clearly, $\rr(S\setminus \{r\}) \ge \rr(S)$ except if $\{r\}$ is a red run that $\rr({S\setminus \{r\}}) = \rr(S)-1$. Thus, the path $P_{S}(r,b)=\{rb'\}\cup P_{S\setminus \{r\}}(b',b)$ connects $r$ and $b$ and has at most $n-1-(\rr(S)-1)=n-\rr(S)$ crossings. Since edge $rb'$ does not intersect $\ch(S\setminus \{r\})$, $P_{S}(r,b)$ is the desired $1$-plane path.

The preceding argument also applies if there is a red point $r'\ne r$ on the boundary of $\ch(S\setminus \{b\})$ that is visible from $b$. Thus, we  assume the following \emph{visibility property}: all points on the boundary of $\ch(S\setminus \{r\})$ visible from $r$ are red and all points on the boundary of $\ch(S\setminus \{b\})$ visible from $b$ are blue except for, respectively, $b$ and $r$  if they are consecutive on the boundary of $\ch(S)$.
We now distinguish two cases according to the position of $r$ and $b$.

\vspace{0.3cm}
\noindent{\emph{Case 1: points $r$ and $b$ are consecutive on the boundary of $\ch(S)$.}}
\vspace{0.3cm}

From the fact that $r$ and $b$ are consecutive and the visibility property, it follows that there exist a red point $r'$ and a blue point $b'$ consecutive on $\ch(S\setminus \{r,b\})$ and visible from both $r$ and $b$ (see Figure~\ref{fig:consecutive} left). Since condition $(i)$ of Definition~\ref{def:special} is not satisfied, the configuration $(S\setminus \{r,b\},b',r')$ is non-special  and so, by induction, there is a $1$-plane path $P_{S\setminus \{r,b\}}(b',r')$ with at most $n-1-\rr({S\setminus \{r,b\}})$ crossings.
Edges $rb'$ and $r'b$ do not intersect that path as their endpoints are visible, which implies that $P_{S}(r,b) = \{rb'\}\cup P_{S\setminus \{r,b\}}(b',r')\cup \{r'b\}$ is a $1$-PHAP on $S$ with endpoints $r$ and $ b$.

The red run of $S$ containing point $r$ has more than one point (note that $r$ only sees $b$ and red points) and so it remains in $S\setminus\{r,b\}$; an analogous argument follows for $b$. Therefore, $\rr({S\setminus \{r,b\}})\ge \rr(S)$ which upper bounds the number of crossings in $P_{S}(r,b)$ by $n-1-\rr(S)+1=n-\rr(S)$ (edges $rb'$ and $r'b$ might intersect).

\begin{figure}[!htbp]
\centering
\includegraphics[scale=0.7]{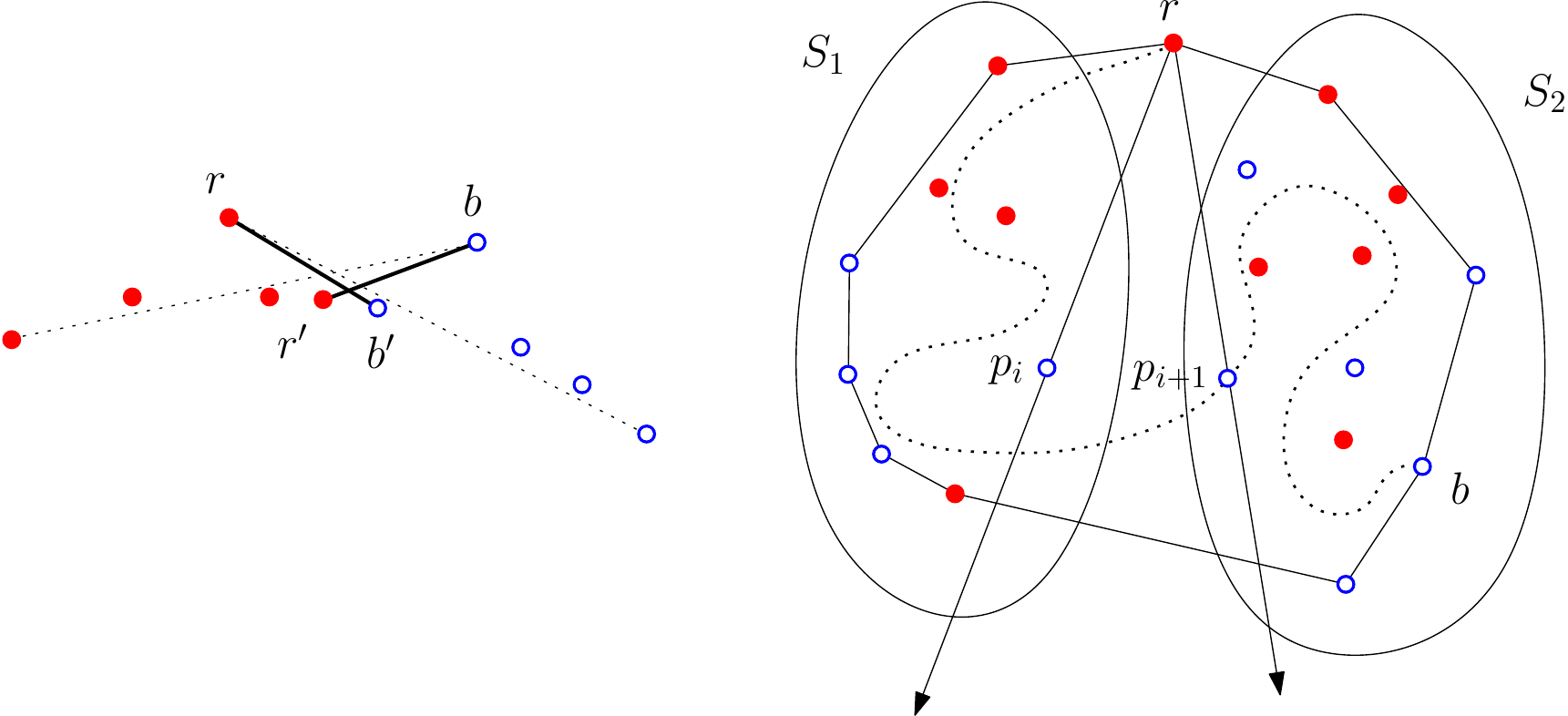}
\caption{Left: $r$ and $b$ are consecutive points on the boundary of $\ch(S)$. Right: $r$ and $b$ are not consecutive points.}\label{fig:consecutive}
\end{figure}

\vspace{0.3cm}
\noindent{\emph{Case 2: points $r$ and $b$ are non-consecutive on the boundary of $\ch(S)$.}}
\vspace{0.3cm}

By the visibility property, the two neighbors of $r$ on the boundary of $\ch(S)$ are red points (blue points for $b$), and so we can apply Lemma~\ref{lemm:partition} around $r$ or $b$ to partition $S$. Since the configuration $(S,r,b)$ is non-special, condition $(ii)$ or $(iii)$ of Definition~\ref{def:special} is not satisfied for some partition; assume, without loss of generality, that $p_{i+1}\neq b$ in the counterclockwise partition of $S$ around $r$. Recall that $p_i$ and $p_{i+1}$ are consecutive blue points in the counterclockwise radial order around $r$, as specified in Lemma~\ref{lemm:partition}.

Suppose that $b\in S_2$ (a symmetric construction can be done for $b\in S_1$); see Figure~\ref{fig:consecutive} right. By induction, there are 1-plane paths $P_{S_1\cup \{r, p_{i+1}\}}(r, p_{i+1})$ and $P_{S_2}(p_{i+1},b)$ with at most $n_1+1-\rr(S_1\cup \{r, p_{i+1}\})$ and $n_2-\rr(S_2)$ crossings, respectively. Connecting both 1-plane paths, we obtain a path $P_S(r,b)$ that is also 1-plane since $\ch(S_1\cup \{r, p_{i+1}\})$ and $\ch(S_2)$ do not intersect. Further, $n=n_1+n_2+1$ and so it  has at most $n-\rr(S_1\cup \{r, p_{i+1}\})-\rr(S_2)$ crossings.

Each red run of $S$ is also a red run of  $S_1\cup \{r, p_{i+1}\}$ or a red run of $S_2$, and the red run of $S$ containing point $r$ is counted twice, one in $S_1\cup \{r, p_{i+1}\}$ and the other in $S_2$. Thus $\rr(S_1\cup \{r, p_{i+1}\})+\rr(S_2)\geq \rr(S)+1$, which upper bounds the number of crossings in $P_S(r,b)$ by $n-\rr(S)-1$.

\vspace{0.3cm}
\noindent{\bf Proof of statement $(ii)$}
\vspace{0.3cm}

Let $C_r$ be the set of points of $\ch(S\setminus \{r\})$ that are visible from $r$, and suppose first that it contains red and blue points. Then, there are two consecutive points $b\in B$ and $r''\in R$ on the boundary of $\ch(S\setminus \{r\})$, and so the configuration $(S\setminus \{r\}, b,r')$ is  non-special (point $b$ has a red neighbor). Hence, by induction, one can construct a $1$-plane path $P_{S\setminus \{r\}}(b,r')$ with at most $n-\rr(S\setminus \{r\})=n-\br(S\setminus \{r\})$ crossings.

Since points $r$ and $b$ are visible, we can conclude that the path $P_{S}(r,r')=\{rb\}\cup P_{S\setminus \{r\}}(b,r')$ is a $1$-PHAP on $S$ with endpoints $r,r'$ and at most $n-\br(S\setminus \{r\}) \le n-\br(S)$ crossings (this inequality comes from the fact that $C_r$ contains points of both colors).

\begin{figure}[!htbp]
\centering
\includegraphics[scale=0.7]{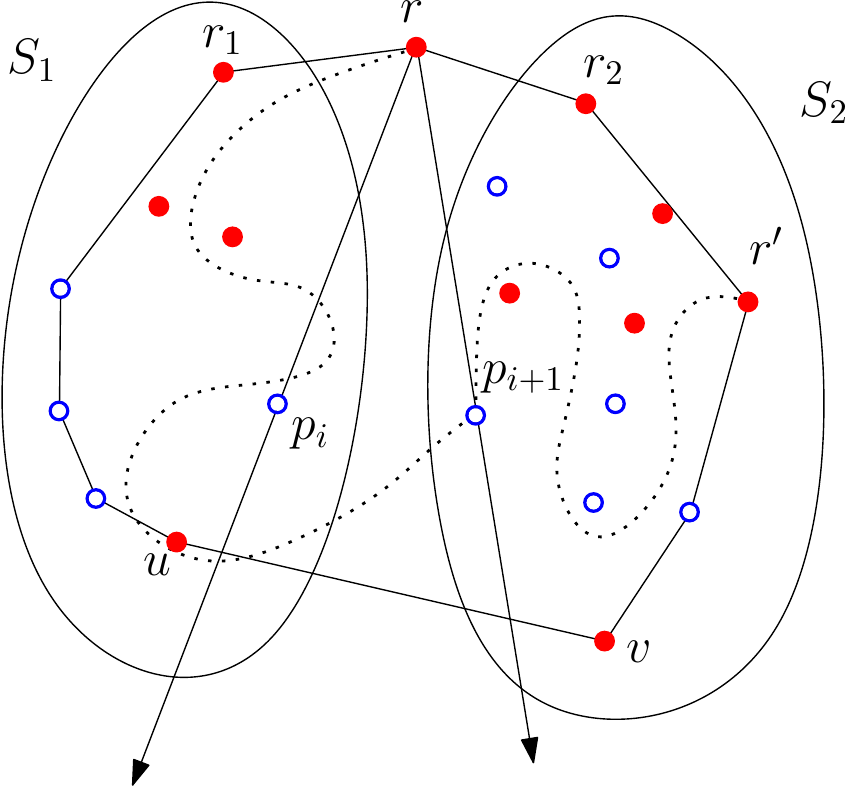}
\caption{The configuration $(S_2, p_{i+1}, r')$ is non-special.}\label{fig:subcase31}
\end{figure}

The preceding argument also applies if the set $C_{r'}$ of points of $\ch(S\setminus \{r'\})$ visible from $r'$ contains red and blue points. Therefore, we may assume that $C_r$ and $C_{r'}$ consist of points of the same color, and distinguish two cases according to it.

\vspace{0.3cm}
\noindent{\emph{Case 3: $C_r$ or $C_{r'}$ or both consist of only red points.}}
\vspace{0.3cm}

Suppose, without loss of generality, that $C_r$ consists of red points. Let $r_1$ and $r_2$ be the neighbors of $r$ on the boundary of $\ch(S)$, clockwise. Since $r_1$ is a red point, there is a counterclockwise partition $S_1\cup S_2$ of $S\setminus \{r\}$ around $r$ as described in Lemma~\ref{lemm:partition}. Moreover, points $p_i$ and $p_{i+1}$ (also specified in Lemma~\ref{lemm:partition}) are consecutive in the radial order around $r$, and so rays $rp_i$ and $rp_{i+1}$ cross the same edge $uv$ of the boundary of $\ch(S)$, with $u\in S_1$ and $v\in S_2$. Assume that $r'\in S_2$ (analogous for $r'\in S_1$), and consider two cases depending on whether $(S_2,p_{i+1},r')$ is a special configuration or not.

\vspace{0.3cm}
\noindent{\emph{Case 3.1: the configuration $(S_2,p_{i+1}, r')$ is non-special.}}
\vspace{0.3cm}

By induction, there are $1$-plane paths $P_{S_2}(p_{i+1},r')$ and $P_{S_1\cup  \{r, p_{i+1}\}}(r,p_{i+1})$ with at most $n_2-\rr(S_2)$ and $n_1+1-\rr(S_1\cup\{r,p_{i+1}\})$ crossings, respectively. See Figure~\ref{fig:subcase31}. The connection of both paths gives rise to a path on $S$ with endpoints $r$ and $ r'$ that, in addition, is $1$-plane since $\ch{(S_2)}$ and $\ch{(S_1\cup  \{r, p_{i+1}\})}$ do not intersect. Moreover, the number of crossings in that path is at most $$n+1-\br(S_2)-\br(S_1\cup \{r, p_{i+1}\})$$ as $\rr(S_2)=\br(S_2)$, $\rr(S_1\cup \{r, p_{i+1}\})=\br(S_1\cup \{r, p_{i+1}\})$ and $n_1+n_2=n$.
Therefore, it remains to prove that $\br(S_2)+\br(S_1\cup \{r, p_{i+1}\})\geq \br(S)+1$.

If point $u$ is red and $p_{i+1}\neq v$, then $p_{i+1}$ is a blue run of $S_1\cup\{r, p_{i+1}\}$ but not of $S$. Further, if $p_{i+1}=v$ then it is counted twice as a blue run, one in $S_1\cup\{r,p_{i+1}\}$ and the other in $S_2$. The situation is analogous if $v$ is red. If $u$ and $v$ are both blue, then the blue run of $S$ containing them is counted twice. Hence, $\br(S_2)+\br(S_1\cup\{r, p_{i+1}\})\geq \br(S)+1$.

\vspace{0.3cm}
\noindent{\emph{Case 3.2: the configuration $(S_2, p_{i+1}, r')$ is special.}}
\vspace{0.3cm}

Since all points of $C_r$ are red, there exist two consecutive red points $r'_1,r'_2\in C_r$ such that $r'_1\in S_1$ and $r'_2\in S_2$. Let $C_1=\{p_i=q_1,q_2,\ldots,q_k= r'_1\}$ be the sequence of consecutive points on the boundary of $\ch(S_1)$ ordered counterclockwise from $p_i=q_1$ to $r'_1=q_k$ (see Figure~\ref{fig:subcase32}). By construction, all these points are visible from $r$. Let $q_{j+1}$ be the first red point of $C_1$, which exists since $p_i=q_1$ is blue and $r'_1=q_k$ is red. Analogously, let  $C_2=\{p_{i+1}=q'_1, q'_2, \ldots , q'_{k'} = r'_2\}$ be the sequence of consecutive points on the boundary of $\ch(S_2)$ ordered clockwise from $p_{i+1}=q'_1$ to $r'_2=q'_{k'}$, and let $q'_{j'+1}$ be the first red point of $C_2$.

\begin{figure}[!htbp]
\centering
\includegraphics[scale=0.7]{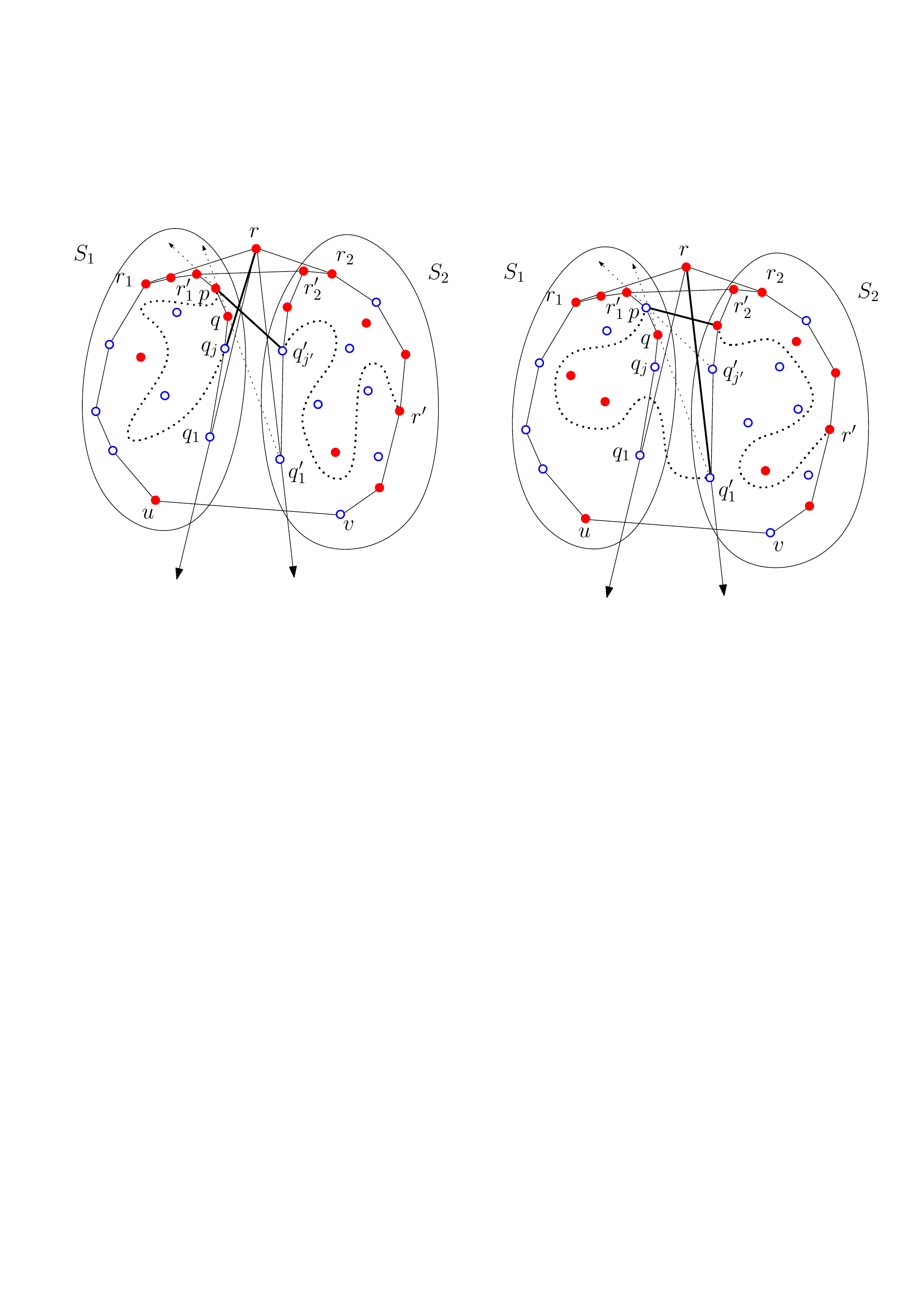}
\caption{$(S_2,p_{i+1},r') $ is a special configuration. Left: point $p$ is red. Right: point $p$ is blue.}\label{fig:subcase32}
\end{figure}

By construction, one of the tangent points from $q'_1$ to $S_1$, say $q$, belongs to $C_1$ and, for any other point of $C_2$, its corresponding tangent point is also $q$ or a point of $C_1$ placed after $q$ counterclockwise. Let $p$ be the tangent point of $C_1$ from $q'_{j'}$ to $S_1$.

Suppose first that point $p$ is red (see Figure~\ref{fig:subcase32} left). By induction, there are 1-plane paths $P_{S_1}(q_j,p)$ and $P_{S_2}(q'_{j'},r')$ with at most $n_1-\br(S_1)$ and $n_2-\br(S_2)$ crossings, respectively; note that the corresponding configurations are non-special since points $q_j$ and $q'_{j'}$ are blue with red neighbors $q_{j+1}$ and $q'_{j'+1}$. Therefore, $$P_S(r,r')=\{rq_j\}\cup P_{S_1}(q_j,p)\cup \{pq'_{j'}\}\cup P_{S_2}(q'_{j'}, r')$$ is a $1$-PHAP on $S$ with endpoints $r,r'$ and at most $n_1-\br(S_1)+n_2-\br(S_2)+1$ crossings. Observe that $\ch(S_1)$ and $\ch(S_2)$ do not intersect, and by construction, edges  $pq'_{j'}$ and $rq_j$ only produce one crossing that is their intersection point located outside of those convex hulls, and so one can guarantee the 1-plane character of $P_S(r,r')$. That $n-\br(S)$ is an upper bound for the number of crossings comes from the fact that $\br(S_2)+\br(S_1)\geq \br(S)+1$ (analogous to case 3.1) and $n=n_1+n_2$.

Assume now that point $p$ is blue (see Figure~\ref{fig:subcase32} right). Point $q'_{j'+1}\in \ch(S_2)$ is red and has a blue neighbor $q'_{j'}$, then $q'_{j'+1}\neq r'$ since the configuration $(S_2, p_{i+1}, r')$ is special. Hence, by induction, there are 1-plane paths $P_{S_2\setminus \{q'_1\}}(q'_{j'+1},r')$ and $P_{S_1\cup \{q'_1\}}(q'_{1},p)$ with at most $n_2-1-\br(S_2\setminus \{q'_1\})$ and $n_1-\rr(S_1\cup \{q'_1\})$ crossings, respectively. A $1$-plane path $P_S(r,r')$ is obtained by connecting the two paths via the edges $rq'_1$ and $pq'_{j'+1}$, that is
$$P_S(r,r') = \{rq'_1\}\cup P_{S_1\cup \{q'_1\}}(q'_{1},p)\cup  \{pq'_{j'+1}\}\cup P_{S_2\setminus \{q'_1\}}(q'_{j'+1},r').$$
Again, the $1$-plane character is guaranteed since $\ch(S_1\cup \{q'_1\})$ and $\ch(S_2\setminus \{q'_1\})$ do not intersect, and edges $rq'_1$ and $pq'_{j'+1}$ only produce one crossing (their intersection point) that is located outside the convex hulls as the endpoints of the edges are visible.

The number of crossings in $P_S(r,r')$ is at most $$n_2-1-\br(S_2\setminus \{q'_1\})+n_1-\rr(S_1\cup \{q'_1\})+1$$ which is $n-\br(S_2\setminus \{q'_1\})-\br(S_1\cup \{q'_1\})$; note that $\ch(S_1\cup \{q'_1\})$ is bichromatic, and so $\rr(S_1\cup \{q'_1\})=\br(S_1\cup \{q'_1\})$. Thus, it remains to prove that $\br(S_2\setminus \{q'_1\})+\br(S_1\cup \{q'_1\})\geq \br (S)$.

Since $(S_2, r', p_{i+1})$ is special, the two neighbors of $p_{i+1}=q'_1$ on the boundary of $\ch(S_2)$ are blue. Hence, either $q'_1$ produces a blue run in $S_1\cup \{q'_1\}$  (if $u$ is red), or the neighbors of $q'_1$ produce a blue run in $S_2\setminus \{q'_1\}$ (if $v$ is red), or the blue run of $S$ containing points $u$ and $v$ is counted twice (if $u$ and $v$ both are blue). In any case, $\br(S_2\setminus \{q'_1\})+\br(S_1\cup \{q'_1\})> \br (S)$.

\vspace{0.3cm}
\noindent{\emph{Case 4: Both $C_r$ and $C_{r'}$ consist of blue points.}}
\vspace{0.3cm}

Let $b\in C_r$. The configuration $(S\setminus \{r\},b,r')$ is non-special ($C_{r'}$ consists of blue points and so $r'$ has a blue neighbor on the boundary of $\ch(S\setminus \{r\})$) then, by induction, there is a 1-plane path $P_{S\setminus \{r\}}(b,r')$ with at most $n-\br(S\setminus \{r\})$ crossings. One might extend this path by the edge $rb$  but this would give $n-\br(S)+1$ as upper bound for the number of crossings. Indeed, the two neighbors of $r$ on the boundary of $\ch(S)$ are blue points, say $b_1,b_2$ (clockwise), and so the two blue runs of $S$ containing, respectively, points $b_1$ and $b_2$ produce a unique blue run in $S\setminus \{r\}$. Thus $\br(S\setminus \{r\})=\br(S)-1$. To reach the upper bound $n-\br(S)$, we provide a more elaborated construction by distinguishing two cases.

\vspace{0.3cm}
\noindent{\emph{Case 4.1: A red point $r''\in \ch(S\setminus \{r,b_2\})$ is visible from $b_2$.}}
\vspace{0.3cm}

\begin{figure}[!htbp]
\centering
\includegraphics[scale=0.7]{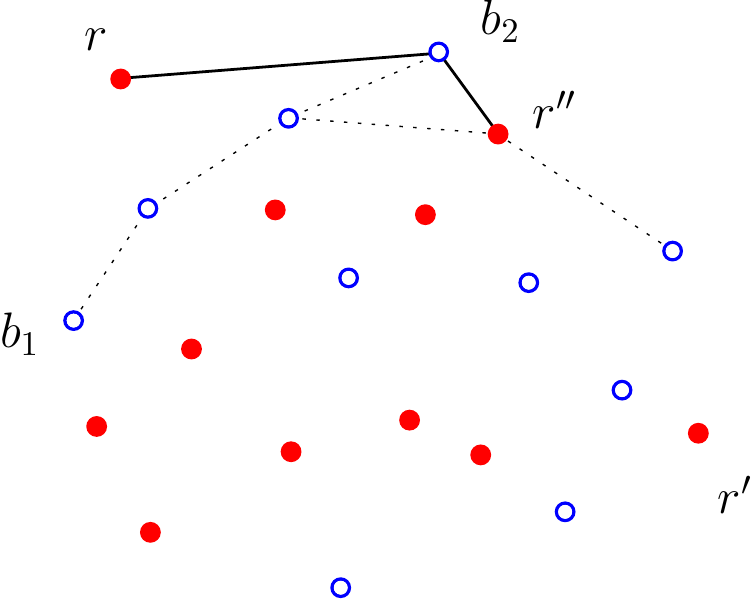}
\caption{A red point $r''\in \ch(S\setminus \{r,b_2\})$ is visible from $b_2$.}\label{fig:subcase41}
\end{figure}

By induction, there is a 1-plane path $P_{S\setminus \{r,b_2\}}(r'',r')$  with at most $n-1-\br(S\setminus \{r,b_2\})$ crossings. Thus, $P_S(r,r')=\{rb_2\}\cup \{b_2r''\}\cup P_{S\setminus \{r,b_2\}}(r'',r') $ is
clearly a $1$-PHAP on $S$ that also has at most $n-1-\br(S\setminus \{r,b_2\})$ crossings. See Figure~\ref{fig:subcase41}.

Since $r''$ is on the boundary of $\ch(S\setminus \{r,b_2\})$ then $\br(S\setminus \{r,b_2\})\geq \br(S\setminus \{r\})=\br(S)-1$. Therefore, the number of crossings in $P_S(r,r')$ is at most $n-1-(\br(S)-1) = n-\br(S)$.

\vspace{0.3cm}
\noindent{\emph{Case 4.2: Every point of $\ch(S\setminus \{r,b_2\})$ visible from $b_2$ is blue.}}
\vspace{0.3cm}

By Lemma~\ref{lemm:partition}$(i)$ on $S\setminus \{r\}$, there is a counterclockwise partition $S_1\cup S_2$ of $S\setminus \{r, b_2\}$ around $b_2$, where $p_i$ and $p_{i+1}$ are red. Suppose first that $r'\neq p_{i+1}$, and  assume that $r'\in S_1$ (a symmetric construction can be done for $r' \in S_2\setminus \{p_{i+1}\}$); see Figure~\ref{fig:subcase42} left.

\begin{figure}[!htbp]
\centering
\includegraphics[scale=0.7]{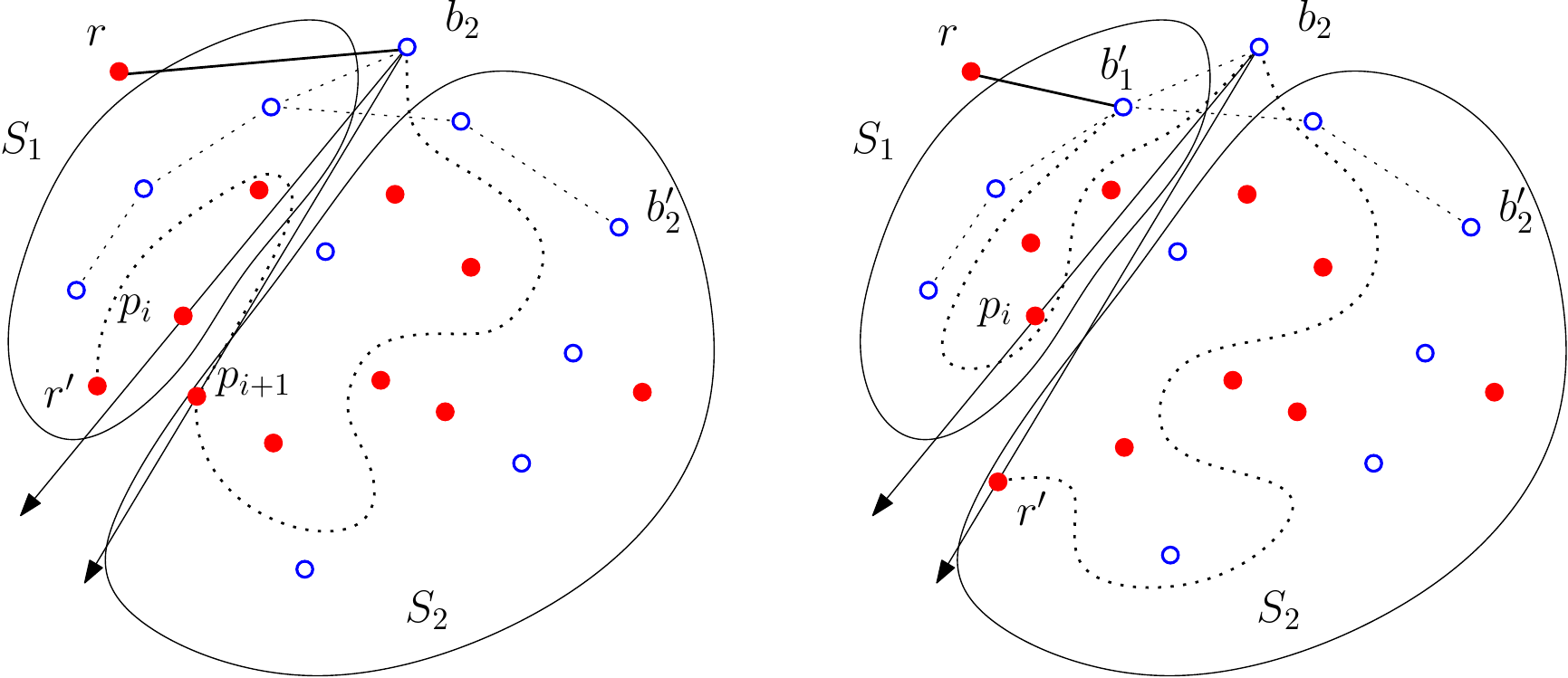}
\caption{Every point of $\ch(S\setminus \{r,b_2\})$ visible from $b_2$ is blue.}\label{fig:subcase42}
\end{figure}

Points $b_2$ and $p_{i+1}$ are neighbors on $\ch(S_2\cup \{b_2\})$ and have distinct color, then the configuration $(S_2\cup \{b_2\}, p_{i+1}, b_2)$ is non-special and, by induction,  there is a 1-plane path $P_{S_2\cup \{b_2\}}(b_2, p_{i+1})$ with at most $n_2+1-\br(S_2\cup \{b_2\})$ crossings, where $n_2=|(B\setminus\{b_2\})\cap S_2|$.
Again by induction, there is a 1-plane path $P_{S_1\cup \{p_{i+1}\}} (p_{i+1},r')$ with at most $n_1-\br(S_1\cup \{p_{i+1}\})$ crossings, where $n_1=|(B\setminus\{b_2\})\cap S_1|$. Thus, $$P_S(r,r') = \{rb_2\} \cup P_{S_2\cup \{b_2\}}(b_2,p_{i+1})\cup P_{S_1\cup p_{i+1}} (p_{i+1},r')$$ is a $1$-PHAP with endpoints $r$ and $r'$ and at most $n-\br(S_2\cup \{b_2\})-\br(S_1\cup \{p_{i+1}\})$ crossings (note that $n=n_1+n_2+1$). The same arguments as those of case 3 prove that $ P_S(r,r')$ is 1-plane. With respect to the number of crossings, the bound of $n-\br(S)$ is obtained from the fact that $$\br(S_2\cup \{b_2\})+\br(S_1\cup \{p_{i+1}\})\geq \br((S\setminus \{r\})+1=\br(S),$$ note that the blue run of $S\setminus \{r\}$ containing point $b_2$ is counted twice, one in $S_2\cup \{b_2\}$  and the other in $S_1\cup \{p_{i+1}\}$.

\begin{figure}[!htbp]
\centering
\includegraphics[scale=0.7]{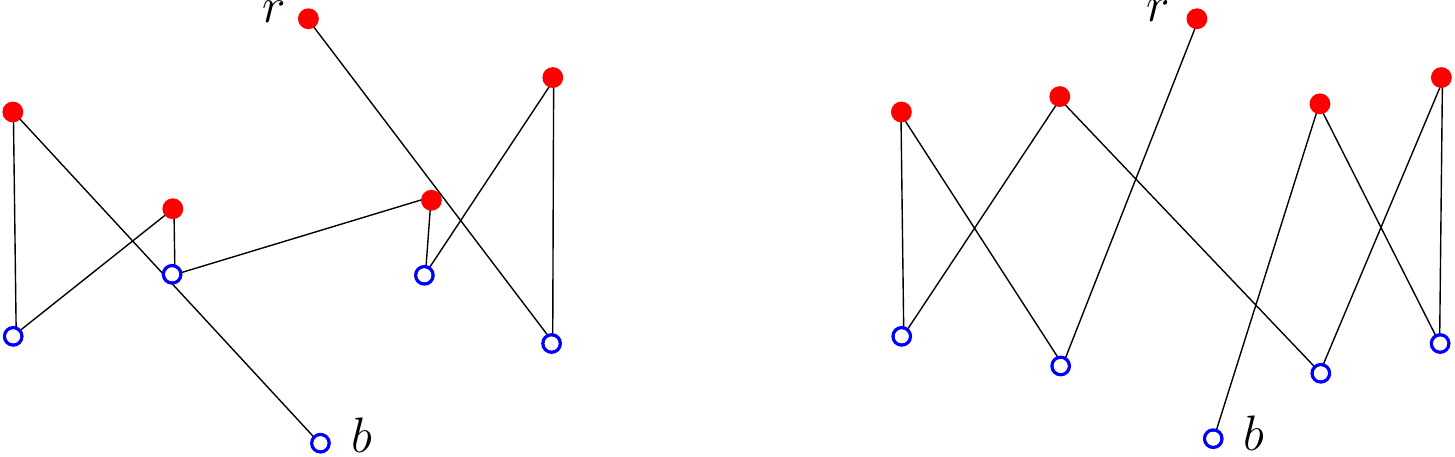}
\caption{A special configuration $(S,r,b)$ that: (left) admits a $1$-PHAP, (right) does not admit a $1$-PHAP.}\label{fig:configurations}
\end{figure}

Suppose now that $r'=p_{i+1}$ (see Figure~\ref{fig:subcase42} right). Let $b'_1$ be the previous point (clockwise) to $b_2$ on the boundary of $\ch(S\setminus \{r\})$ which, by construction, is blue and visible from $r$. The desired $1$-PHAP on $S$ is $$P_S(r,r') = \{rb'_1\}\cup P_{S_1\cup b_2}(b'_1,b_2)\cup P_{S_2\cup b_2}(b_2, r')$$ where $P_{S_1\cup b_2}(b'_1,b_2)$ and $P_{S_2\cup \{b_2\}}(b_2, r')$ are $1$-PHAP obtained by induction on $S_1\cup \{b_2\}$ and $S_2\cup \{b_2\}$, respectively; note that  $r'$ and $b_2$ are neighbors of distinct color on $\ch(S_2\cup \{b_2\})$ which implies that the configuration $(S_2\cup \{b_2\},b_2, r')$ is non-special.
The number of crossings in $P_S(r,r')$ is at most $n-\br(S)$ since again the blue run of $S\setminus \{r\}$ containing $b_2$ is counted twice.
\end{proof}

\begin{remark}
One cannot guarantee the existence of a $1$-PHAP for a special configuration. For instance, Figure~\ref{fig:configurations} shows two similar special configurations: the one on the left admits a $1$-PHAP and the other (right) does not. In this last configuration (which is also shown in Figure~\ref{fig:cycle} as an example of special configuration), at least one of the edges must be crossed twice in any Hamiltonian alternating path because the points are essentially in convex position, and for point sets in convex position, the only configurations that do not admit a $1$-PHAP are precisely the special configurations, as we prove in the next section (see Theorem~\ref{the:specialconfig}).
\end{remark}

We are now ready to prove our main result in this section.

\begin{theorem}\label{the:cycle}
For given $S=R\cup B$ with $|B|=|R|=n\geq 2$, there exists a 1-plane Hamiltonian alternating cycle on $S$ with at most $n-\max \{\rr(S), \br(S)\}$ crossings.
\end{theorem}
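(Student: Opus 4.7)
The plan is to reduce the cycle construction to the path construction of Lemma~\ref{lemm:path} and then close the resulting path with a single extra edge, splitting into two cases according to whether $\ch(S)$ is bichromatic.

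First suppose $\ch(S)$ is bichromatic, so $\rr(S)=\br(S)=\max\{\rr(S),\br(S)\}$. Then there exist consecutive points $r\in R$ and $b\in B$ on the boundary of $\ch(S)$. Because $b$ is a blue neighbor of $r$ on $\partial\ch(S)$, condition $(i)$ of Definition~\ref{def:special} fails and $(S,r,b)$ is non-special. Lemma~\ref{lemm:path}(i) then produces a $1$-PHAP on $S$ with endpoints $r,b$ and at most $n-\rr(S)$ crossings. Adding the segment $rb$, which lies on $\partial\ch(S)$ and therefore crosses no other edge, closes this path into a Hamiltonian alternating cycle without introducing any new crossings, giving a $1$-PHAC with at most $n-\max\{\rr(S),\br(S)\}$ crossings. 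This case is essentially a one-line consequence of Lemma~\ref{lemm:path}(i).

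Second suppose $\ch(S)$ is monochromatic; by symmetry take all its boundary points to be red, so $\rr(S)=1$, $\br(S)=0$, and the target becomes $n-1$ crossings, matching the Kaneko--Kano--Yoshimoto bound~\cite{KKY00}. I would pick a red $r\in\partial\ch(S)$ and use Lemma~\ref{lemm:partition}(i) to partition $S\setminus\{r\}=S_1\cup S_2$ around $r$, with $p_i\in S_1$ and $p_{i+1}\in S_2$ both blue and consecutive in the radial order around $r$. The rays $rp_i$ and $rp_{i+1}$ separate the two regions, so edges drawn inside one do not cross edges drawn inside the other. I would construct a $1$-PHAP on $S_1\cup\{r,p_{i+1}\}$ from $r$ to $p_{i+1}$ using Lemma~\ref{lemm:path}(i) (the configuration is non-special because $p_{i+1}$ is a blue neighbor of $r$ in the smaller set), and a Hamiltonian alternating subpath on $S_2$ starting at $p_{i+1}$ via the appropriate variant of Lemma~\ref{lemm:path}. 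Splicing at $p_{i+1}$ and closing through $r$ then yields the desired cycle, with the radial separation preserving $1$-planarity across the splice and the boundary-like closing edges keeping the crossing count at $n-1$.

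The main obstacle I anticipate is in the monochromatic case: the detailed bookkeeping of crossings when splicing the two subpaths and closing the cycle, together with handling degenerate subcases in which $\ch(S_2)$ has only one blue boundary point (so that the blue version of Lemma~\ref{lemm:path}(ii) cannot be applied directly and a further recursion or subpartition is required), and similarly if $S_1$ is very small. The bichromatic case, by contrast, is a clean immediate corollary of Lemma~\ref{lemm:path}(i) together with the observation that the closing edge along $\partial\ch(S)$ is free of crossings.
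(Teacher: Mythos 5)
Your bichromatic case is exactly the paper's argument and is fine: consecutive $r\in R$ and $b\in B$ make $(S,r,b)$ non-special, Lemma~\ref{lemm:path}$(i)$ supplies the path with at most $n-\rr(S)$ crossings, and the hull edge $rb$ closes it without new crossings.

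The gap is in the monochromatic case, and it is precisely the obstacle you flag but do not resolve. Your plan builds the second subpath on $S_2$ alone via the blue version of Lemma~\ref{lemm:path}$(ii)$ (note $|B\cap S_2|=|R\cap S_2|+1$) and then closes the cycle with an edge from that path's other blue endpoint back to $r$. This fails in two ways. First, the blue version of $(ii)$ requires a second blue point on the boundary of $\ch(S_2)$, which need not exist when the hull of $S_2$ is essentially red. Second, even when it does exist, you have no control over which blue point serves as the other endpoint, so the closing edge back to $r$ may cut through $\ch(S_2)$ and cross the subpath several times, destroying both the $1$-plane property and the crossing count; ``a further recursion or subpartition'' is not an argument. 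The paper's device is different and removes the problem entirely: it takes the second subpath on $S_2\cup\{r\}$ with endpoints $p_{i+1}$ and $r$, i.e.\ it routes the path back into $r$ itself instead of closing with an extra edge. Since $p_{i+1}$ is the tangent point from $r$ to $\ch(S_2)$ on that side, $r$ and $p_{i+1}$ are consecutive points of distinct colors on the boundary of $\ch(S_2\cup\{r\})$, so that configuration is non-special and Lemma~\ref{lemm:path}$(i)$ applies with no degenerate subcases; the bound
$\bigl(n_1+1-\rr(S_1\cup\{r,p_{i+1}\})\bigr)+\bigl(n_2+1-\rr(S_2\cup\{r\})\bigr)\le n_1+n_2=n-1$
then falls out immediately. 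You would need to adopt this (or an equivalent) device to complete the proof.
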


\begin{proof}
Suppose first that  $\ch(S)$ is bichromatic (which implies $\rr(S)=\br(S)$) and consider two consecutive points of distinct color, say a red point $r$ and a blue point $b$. Thus, the configuration $(S,r,b)$ is  non-special and, by Lemma~\ref{lemm:path}, there is a $1$-PHAP on $S$ with endpoints $r$ and $b$ and at most $n-\rr(S)$ crossings. By adding the edge $rb$,  the desired cycle is obtained.

Assume now that the boundary of $\ch(S)$ consists of points of only one color, say red ($\rr(S)=1$ and $\br(S)=0$), and let $r$ be one of those points. Partition counterclockwise $S\setminus \{r\}$ around $r$ as described in Lemma~\ref{lemm:partition}$(i)$ obtaining the sets $S_1, S_2$ and the point $p_{i+1}$.

By Lemma~\ref{lemm:path}, there are 1-plane paths $P_{S_1\cup \{r, p_{i+1}\}}(r,p_{i+1})$ and  $P_{S_2\cup \{r\}}(p_{i+1},r)$ with at most $n_1+1-\rr(S_1\cup \{r, p_{i+1}\})$ and $n_2+1-\rr(S_2\cup \{r\})$ crossings, respectively. Connecting both paths, we construct a $1$-PHAC on $S$ with at most $$n_1+n_2+2-\rr(S_1\cup \{r, p_{i+1}\})-\rr(S_2\cup \{r\})$$ crossings. Since $n_1+n_2+1=n$ and, in the worst case $\rr(S_1\cup \{r, p_{i+1}\})=\rr(S_2\cup \{r\})=1$, the cycle has at most $n-1$ crossings.
\end{proof}

\begin{remark}
Theorem \ref{the:cycle} can also be proved by using the Ham-sandwich theorem and Lemma~\ref{lemm:path}. Further, the bound of $n-\max \{\rr(S), \br(S)\}$ for the number of crossings is tight. Indeed, Kaneko et al.~\cite{KKY00} proved that, for a point set $S$ consisting of $2n$ points in convex position, $n$ consecutive red points and $n$ consecutive blue points, any Hamiltonian alternating cycle has at least $n-1$ crossings. However, there are configurations of points for which this bound is far to be tight. For instance, if the $n$ red points are on the boundary of $\ch(S)$ and the $n$ blue points are inside $\ch(S)$, then it is not difficult to prove that there always exists a Hamiltonian alternating cycle on $S$ with no crossings (see Lemma 3.4 of~\cite{GT95}), and the bound of Theorem  \ref{the:cycle} would be $n-1$.
\end{remark}

To conclude this section, we show that Lemma~\ref{lemm:path} and Theorem~\ref{the:cycle} let us compute a $1$-PHAC on $S=R\cup B$ (not necessarily minimizing the number of crossings) in $O(n^2)$ time and space where, as usual,  $n=|R|= |B|$.

An $O(n^2\log n)$ time and $O(n)$ space algorithm for this problem is clear, as Kaneko et al. pointed out in~\cite{KKY00}: each time that one (or two) edges are assigned to the cycle, we must compute a convex hull of one (or two) subset, and the radial order of the points of a given subset around a given point $r$. In principle, computing a convex hull and a radial order can be done in $O(n\log n)$ time and, since this is done $O(n)$ times, the overall complexity is $O(n^2\log n)$ time and $O(n)$ space. However, these two operations can be done more efficiently as follows.

All the radial orders around the $2n$ points of  $S$ can be obtained in $O(n^2)$ time and space (see for instance~\cite{OW88}). Thus, in a generic step of the above algorithm, computing the radial order of the points of a subset $S'$ around a point $r$ requires only linear time.

Once $\ch(S')$ has been computed, its updates after point deletions can be done in $O(\log n)$ amortized time per deletion, using the semi-dynamic data structure of Hershberger and Suri~\cite{HS92}. Therefore, computing the different convex hulls in the overall process only requires $O(n \log n)$. Moreover, the same semi-dynamic data structure allows to compute a tangent from a point outside $\ch(S')$ with the same complexity, $O(\log n)$ amortized time. Thus, we obtain the following corollary.

\begin{corollary}\label{cor:cyclegp}
For given $S=R\cup B$ with $|B|= |R|=n \geq 2$, a 1-plane Hamiltonian alternating cycle on $S$ can be computed in $O(n^2)$ time and space.
\end{corollary}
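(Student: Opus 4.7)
The plan is to implement the constructive procedure implicit in the proofs of Lemma~\ref{lemm:path} and Theorem~\ref{the:cycle} as a recursive algorithm, and to show that with two pre/post-processing tricks it runs in $O(n^2)$ time and space. First, I would observe that the recursion tree has only $O(n)$ nodes: at every internal node the algorithm either attaches one point via a visible edge and recurses on the remainder, or it selects a partition $S_1\cup S_2$ around a pivot and recurses on the two disjoint subsets (after adding at most two ``pivot'' points to each side). Thus the sums $\sum |S'|$ over all nodes at a fixed recursion level is $O(n)$, and the subproblems at all levels together amount to $O(n)$ nodes on $O(n)$ total points.

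Inside each node, the naive work is dominated by three operations: (a) computing the radial order of the points of the current subset $S'$ around a chosen pivot $r\in S'$, (b) building/updating $\ch(S')$, and (c) computing tangents from a point outside $\ch(S')$ to $\ch(S')$. Done from scratch, each of these costs $\Theta(n\log n)$ and gives the trivial $O(n^2\log n)$ bound mentioned in the excerpt. To remove the logarithmic factor I would, as a one-time preprocessing step, compute the radial order of every point of $S$ around each other point of $S$; using the standard dual-arrangement / slope-sorting technique (as in~\cite{OW88}) this takes $O(n^2)$ time and $O(n^2)$ space. Then at a recursive call on $(S',r)$, the radial order of $S'\setminus\{r\}$ around $r$ is obtained by a single linear pass over the precomputed order at $r$, keeping only the members of $S'$. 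Summed over the $O(n)$ nodes this is $O(n^2)$ time.

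For the convex-hull and tangent operations I would maintain the semi-dynamic convex-hull structure of Hershberger and Suri~\cite{HS92}, which supports a point deletion and a tangent query in $O(\log n)$ amortized time. Deletions suffice when the recursion peels off a point; the delicate step is the partition, where the current hull must be split into $\ch(S_1)$ and $\ch(S_2)$. Here I would apply the standard smaller-half trick: rebuild the hull of the smaller subset from scratch in $O(|S_{\min}|\log|S_{\min}|)$ time while continuing to use the existing data structure for the larger subset. A classical charging argument bounds the total rebuild cost by $O(n\log^2 n)$, and the total of all deletion/tangent operations is $O(n\log n)$ amortized; both are comfortably absorbed by the $O(n^2)$ budget.

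I expect the main obstacle to be the bookkeeping around the splitting step, because the Hershberger--Suri structure is only semi-dynamic and does not natively support a split operation; handling this correctly via the smaller-half rebuild, without breaking the amortization of the underlying structure on the larger piece, is the part that most requires care. Once that is settled, the total time is $O(n^2)$ (dominated by the radial-order precomputation and the linear scans at each recursive node) and the total space is $O(n^2)$ (dominated by storing the $2n$ precomputed radial orders), giving Corollary~\ref{cor:cyclegp}.
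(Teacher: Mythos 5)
Your proposal is correct and follows essentially the same route as the paper: precompute all $2n$ radial orders in $O(n^2)$ time and space via~\cite{OW88} so that each recursive subproblem extracts its radial order by a linear scan, and maintain the convex hulls with the semi-dynamic structure of Hershberger and Suri~\cite{HS92} for deletions and tangent queries. Your additional care about the split step (the smaller-half rebuild) addresses a bookkeeping detail the paper passes over silently, and does not change the overall approach or the $O(n^2)$ bound.
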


\section{$1$-PHAC and $1$-PHAP for point sets in convex position}\label{sec:convex}

In this section, we study 1-plane hamiltonian alternating cycles and paths for a well-known restricted position of the point set: the convex position. Thus, hereafter the set $S=R\cup B$ is assumed to be in convex position, unless otherwise stated.

\subsection{1-plane Hamiltonian alternating cycles}\label{subsec:cycles}

We first prove that every Hamiltonian alternating cycle on $S$ has at least $n-\max \{\rr(S), \br(S)\}=n-\rr(S)$ crossings (note that $\rr(S)=\br(S)$), and moreover, only those cycles that are 1-plane attain the bound. We shall write \emph{optimum} for a path or cycle with  minimum number of crossings. The two endpoints of a run of $S$ are called \emph{limits}, and a \emph{bridge} of $S$ is a bichromatic edge that connects consecutive runs of $S$, i.e., its endpoints are the two consecutive points of $S$ that are limits of those runs.

\begin{theorem}\label{the:convexcycles}
For given $S=R\cup B$ in convex position with $|R|=|B|=n\geq 2$, the following statements hold.
\begin{itemize}
  \item[(i)] Every Hamiltonian alternating cycle on $S$ has at least $n-\rr(S)$ crossings.
  \item[(ii)] A Hamiltonian alternating cycle on $S$ with $n-\rr(S)$ crossings is 1-plane.
  \item[(iii)] Every bridge of $S$ belongs to any $1$-PHAC on $S$ with $n-\rr(S)$ crossings.
\end{itemize}
\end{theorem}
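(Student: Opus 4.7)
The plan is to establish the three statements in sequence. The lower bound in~(i) is the core technical result, and (ii) and (iii) are deduced from it by contradiction through local rearrangements that exploit the convex position of $S$.

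For~(i), I would derive the lower bound from the run structure. Enumerate the runs around $\ch(S)$ as $\rho_1,\ldots,\rho_{2k}$ with sizes $n_1,\ldots,n_{2k}$, where $k=\rr(S)=\br(S)$; the target bound $n-k$ equals $\sum_{\rho_i \text{ red}}(n_i-1)$. I would proceed by induction on $|S|$, with base case $k=n$: all runs have size one and $\ch(S)$ itself is a plane alternating Hamiltonian cycle achieving the bound~$0$. For the inductive step, pick a run of size at least two, an interior point $p$ of that run (so its two hull-neighbours share its colour), and a cycle-neighbour $b$ of $p$ chosen so that deleting $b$ does not collapse a run of $S$. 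Contracting the length-three subpath of $C$ through $b$ and $p$ to a single edge yields an alternating Hamiltonian cycle $C'$ on $S' = S\setminus\{p,b\}$ with $\rr(S') = \rr(S)$. A careful comparison of the chord sets of $C$ and $C'$ then shows $\mathrm{cr}(C)\geq \mathrm{cr}(C')+1$, because the deleted long chord at $p$ must interleave with at least one other chord of $C$. The inductive hypothesis yields $\mathrm{cr}(C)\geq n-\rr(S)$.

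For~(ii), I would argue by contradiction: assume $\mathrm{cr}(C) = n - \rr(S)$ but some edge $e\in C$ is crossed by two distinct edges $e_1, e_2\in C$. In convex position, two chords cross exactly when their endpoints interleave on $\ch(S)$, so a $2$-opt exchange between $\{e, e_i\}$ replaces this crossing pair by the non-crossing pair on the same four vertices. The colour data of the four endpoints, constrained by $e$ being crossed twice, guarantee that at least one choice $i\in\{1,2\}$ yields an exchange that is simultaneously Hamiltonicity-preserving and alternation-preserving; the resulting alternating Hamiltonian cycle has at most $\mathrm{cr}(C) - 1$ crossings, contradicting~(i). For~(iii), also by contradiction: assume a bridge $pq$ is missing from an optimum $1$-PHAC~$C$. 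Then at least one cycle-neighbour $u$ of $p$ and at least one cycle-neighbour $u'$ of $q$ are long chords, since $q$ (resp.~$p$) is the only bichromatic hull-neighbour of $p$ (resp.~$q$) up to the singleton-run case. I would perform the $2$-opt exchange replacing $\{pu, qu'\}$ by $\{pq, uu'\}$: it yields a single alternating Hamiltonian cycle because $uu'$ is bichromatic ($u, u'$ are opposite-colour neighbours of $p, q$) and $pq$ lies on $\ch(S)$ so contributes no crossing. A local analysis using the $1$-plane hypothesis then shows the exchange strictly decreases the number of crossings, again contradicting~(i).

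The hard part will be the induction in~(i): the contraction step must both preserve the number of runs and guarantee that the deleted long chord actually witnesses a crossing lost in $C'$, which forces a delicate choice of the pair $\{p, b\}$; the degenerate case of a single red run opposite a single blue run (the Kaneko--Kano--Yoshimoto configuration of~\cite{KKY00}) is likely to require a separate base case. Once~(i) is in place, the $2$-opt arguments for~(ii) and~(iii) are routine, but the colour bookkeeping must be set up to ensure that a single local swap simultaneously preserves alternation and strictly decreases the number of crossings.
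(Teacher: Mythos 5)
There are genuine gaps here, and the largest problem is that you have overlooked a direct counting argument that proves all three statements at once; the paper's proof is exactly that. Fix a Hamiltonian alternating cycle $C$. Every point of a red run of cardinality $k$ has degree $2$ in $C$, and among the $2k$ edge--endpoint incidences at most two can come from bridges of $S$; since every edge of $C$ is bichromatic, any non-bridge edge of $C$ joins non-consecutive points of $S$ and is therefore a chord. Summing over the red runs (each chord has exactly one red endpoint, so it is counted once) gives at least $2n-2\rr(S)$ chords. Each chord separates $S$ into two non-empty arcs that $C$ must connect, so each chord is crossed at least once, and every crossing involves exactly two chords; hence $C$ has at least $n-\rr(S)$ crossings. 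If equality holds, both inequalities are tight: there are exactly $2n-2\rr(S)$ chords, so every bridge is used (statement (iii)), and every chord is crossed exactly once, so $C$ is $1$-plane (statement (ii)). No induction and no exchange argument is needed.

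Beyond being a detour, your route has concrete failures. In (i), a run of size at least two need not contain an interior point in your sense (take runs $RRBBRRBB\ldots$), so the contraction step cannot even begin there; and the key inequality $\mathrm{cr}(C)\geq \mathrm{cr}(C')+1$ is asserted rather than proved --- the shortcut edge $xy$ can pick up one crossing for each crossing lost on the three deleted edges, so strictness needs an argument you do not supply. In (ii), the claim that one of the two edges crossing $e$ always admits a $2$-opt preserving both Hamiltonicity and alternation is false: for a directed cycle the reconnection that keeps a single cycle is forced by the cyclic order of the four endpoints, and it is alternating only when the head of one edge and the tail of the other share a colour (Corollary~\ref{cor:quadrangular}); both crossing partners can simultaneously have the bad orientation/colour pattern, and in an optimum cycle they must, since otherwise the swap would already contradict optimality. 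This is precisely the hard configuration that the proof of Theorem~\ref{the:minimalpath} has to resolve with multi-edge exchanges and a positional case analysis. In (iii), the edges $pu$ and $qu'$ you exchange need not cross each other, so Property~\ref{pro:quadrangular} does not apply and the swap need not strictly decrease the number of crossings; without strict decrease there is no contradiction, only the weaker conclusion that \emph{some} optimum cycle uses the bridge.
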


\begin{proof}
Let $C$ be a Hamiltonian alternating cycle on $S$. Given a red run of $S$ of cardinality $k$, there are at most two edges of $C$ that are bridges of $S$ and have an endpoint in the run. Thus, the remaining edges of $C$ with an endpoint in that red run, at least $2k-2$, are necessarily chords. Extending this argument to every red run of $S$, we obtain that at least $2n-2\rr(S)$ edges of $C$ are chords. Further, every chord must be crossed at least once to connect a point to the left to a point to the right of the chord. Since a crossing requires two chords and a chord must be crossed at least once, then the number of crossings in $C$ is at least $n-\rr(S)$; this proves statement (i).
Moreover, if the number of crossings is precisely $n-\rr(S)$, then every chord is crossed exactly once and every bridge of $S$ is an edge of $C$. Thus, statements $(ii)$ and $(iii)$ follow.
\end{proof}

\begin{remark}\label{rem:converse}
The converse of statement $(ii)$ in the preceding theorem is not true: it is possible to construct a $1$-PHAC on a set $S$ with more than $n-\rr(S)$ crossings. For instance, if $S$ consists of points alternating in color then $\rr(S)=n$, and obviously one can draw a $1$-PHAC without crossings, but also a $1$-PHAC with $\lceil \frac{n}{2} \rceil$ or $\lceil \frac{n}{2} \rceil+1$ crossings (depending on the parity of $n$). Moreover, the number of different 1-plane Hamiltonian alternating cycles on a point set can be exponential. Indeed, if $S$ consists of points alternating in color, for every subset $P=\{p_j, p_{j+1}, p_{j+2}, p_{j+3}\}$ of four consecutive points we can connect the points either in the same order $p_j, p_{j+1}, p_{j+2}, p_{j+3}$ or in the order $p_j, p_{j+3}, p_{j+2}, p_{j+1}$. Now,  divide $S$ into $\frac{n}{2}$ groups of four consecutive points each, and connect each group to the next one, clockwise, to form a cycle: either using  edge $p_{j+3}p_{j+4}$  or  edge $p_{j+1}p_{j+4}$. Clearly, the resulting cycle is 1-plane; the only possible crossings correspond to edges of type $p_jp_{j+3}$ and $p_{j+1}p_{j+4}$. Since for every group there are two options to connect its points, then there are $\Omega (2^{n\over 2})$ different 1-plane Hamiltonian alternating cycles on $S$.
\end{remark}

Observe that, by removing a bridge of an optimum $1$-PHAC, we obtain a $1$-PHAP whose endpoints are two consecutive points of $S$ of different colors. Hence, an optimum $1$-PHAP connecting two consecutive points of different colors inherits the properties of an optimum $1$-PHAC.  When $|R|=|B|+1=n+1$, a similar situation happens for an optimum $1$-PHAP on $S$ with two consecutive red points $r$ and $r'$ as endpoints; one can add a dummy blue point between $r$ and $r'$ so that the path can be completed to a cycle. Thus, we have the following corollary.

\begin{corollary}\label{cor:consecutive}
For given $S=R\cup B$ in convex position, the following statements hold.
\begin{itemize}
  \item[(i)] Let $r\in R$ and $b\in B$ be consecutive points on $S$. If $|R|=|B|=n\geq 2$, then every optimum Hamiltonian alternating path on $S$ with endpoints $r$ and $b$ has $n-\rr(S)$ crossings, is 1-plane and uses all  bridges of $S$ except for $rb$.
  \item[(ii)] Let $|R|=|B|+1=n+1\geq 2$ and let $r, r'\in R$ be consecutive red points on $S$. Then every optimum  Hamiltonian alternating path on $S$ with endpoints $r$ and $r'$ has $n-\br(S)$ crossings, is 1-plane and uses all  bridges of $S$.
\end{itemize}
\end{corollary}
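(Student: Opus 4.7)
The overall plan is exactly the one hinted at in the paragraph preceding the corollary: reduce both statements to Theorem~\ref{the:convexcycles} by completing an optimum Hamiltonian alternating path to a Hamiltonian alternating cycle, either on $S$ itself in case (i) or on a slightly augmented point set in case (ii). The key observation that makes this work is that the added edges lie on the convex hull of the relevant point set, and hence cannot cross any edge of the given path, so completion preserves the number of crossings.

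For statement (i), I will start with an arbitrary optimum Hamiltonian alternating path $P$ on $S$ with endpoints $r$ and $b$. Because $r$ and $b$ are consecutive on $\ch(S)$, the segment $rb$ is a hull edge of $S$ and cannot cross any chord or hull edge of $P$; hence $C := P\cup\{rb\}$ is a Hamiltonian alternating cycle on $S$ with exactly the same number of crossings as $P$. Theorem~\ref{the:convexcycles}(i) applied to $C$ yields the lower bound of $n-\rr(S)$ crossings on $P$. For a matching upper bound I will take any $1$-PHAC $C'$ on $S$ with the minimum possible number $n-\rr(S)$ of crossings (it exists by Theorem~\ref{the:cycle}); since $rb$ is a bridge of $S$, Theorem~\ref{the:convexcycles}(iii) forces $rb\in C'$, so $C'\setminus\{rb\}$ is a Hamiltonian alternating path with endpoints $r,b$ and exactly $n-\rr(S)$ crossings. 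Therefore $P$ itself has exactly $n-\rr(S)$ crossings, so $C$ is an optimum cycle; parts (ii) and (iii) of Theorem~\ref{the:convexcycles} then hand me that $C$ is $1$-plane and contains every bridge of $S$, whence $P=C\setminus\{rb\}$ is $1$-plane and uses every bridge of $S$ other than $rb$.

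For statement (ii), I will insert a dummy blue point $b^\ast$ on the open arc of $\ch(S)$ between the consecutive red points $r$ and $r'$, obtaining $S^\ast=S\cup\{b^\ast\}$ in convex position with $n+1$ red and $n+1$ blue points. Given an optimum Hamiltonian alternating path $P$ on $S$ with endpoints $r,r'$, the set $C^\ast := P\cup\{rb^\ast,b^\ast r'\}$ is a Hamiltonian alternating cycle on $S^\ast$: the two new edges are hull edges of $S^\ast$ (the neighbors of $b^\ast$ on $\ch(S^\ast)$ are exactly $r$ and $r'$), so they cross nothing, and $C^\ast$ has the same number of crossings as $P$. Inserting $b^\ast$ between two consecutive reds splits one red run of $S$ into two and creates a singleton blue run, so $\rr(S^\ast)=\rr(S)+1$ and $\br(S^\ast)=\br(S)+1=\rr(S)+1$. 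Theorem~\ref{the:convexcycles}(i) applied to $S^\ast$ then gives at least $(n+1)-\rr(S^\ast)=n-\br(S)$ crossings in $C^\ast$, hence in $P$. For the matching upper bound I take any optimum $1$-PHAC on $S^\ast$; since $rb^\ast$ and $b^\ast r'$ are bridges of $S^\ast$, Theorem~\ref{the:convexcycles}(iii) forces both into the cycle, and removing them produces an alternating $r$--$r'$ path on $S$ with $n-\br(S)$ crossings. So $P$ has exactly $n-\br(S)$ crossings, $C^\ast$ is optimum, and Theorem~\ref{the:convexcycles}(ii)--(iii) give that $C^\ast$ is $1$-plane and contains every bridge of $S^\ast$. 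Since the bridges of $S^\ast$ are the bridges of $S$ together with $\{rb^\ast,b^\ast r'\}$, removing the two dummy edges leaves $P$ as $1$-plane and using every bridge of $S$.

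The only delicate points are the two I have flagged: confirming that the completion edges ($rb$ in case (i); $rb^\ast$, $b^\ast r'$ in case (ii)) are genuine convex-hull edges of the relevant point set so that they cross no edge of the path, and the small combinatorial verification that inserting a blue point between two consecutive red points increases both $\rr$ and $\br$ by exactly one. Both are direct consequences of the convex-position hypothesis and the definition of a run, so I do not anticipate any substantive obstacle; the real content is all packaged inside Theorem~\ref{the:convexcycles}.
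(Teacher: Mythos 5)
Your proposal is correct and follows essentially the same route as the paper: the paper's justification (given in the paragraph preceding the corollary) is precisely to complete the path to a cycle via the hull edge $rb$ in case (i), or via a dummy blue point inserted between $r$ and $r'$ in case (ii), and then invoke Theorem~\ref{the:convexcycles}. You have simply filled in the same outline with the explicit run-counting and the bridge argument for the matching upper bound, all of which checks out.
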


We now prove that an optimum $1$-PHAC on a set $S$ in convex position can be computed in $O(n)$ time and space. Recall that, by Corollary \ref{cor:cyclegp}, to find a $1$-PHAC for a set in general position takes $O(n^2)$ time and space, and the cycle in that case is not necessarily optimum.

We begin by describing a linear procedure to compute, for every red point, the first clockwise blue point such that, between them, there are the same number of red and blue points. For simplicity in the arguments, in the explanation below, we modify the notation used in the proof of Lemma \ref{lemm:path} for the endpoints of the paths: they were called $r,b,r'$, etc, and now they will be denoted as $p_i,p_j$, etc. We shall go back to that original notation to state Theorem \ref{Theorem}.

Consider the clockwise circular order $<p_1,\ldots, p_{2n}>$ of the points of $S$, where $|R|=|B|$. Let $S[p_i,p_j]=\{ p_i,p_{i+1}\ldots,p_{j}\}$ (clockwise) and $S(p_i,p_j)=S[p_i,p_j]\setminus\{p_i,p_j\}$. For each red point $p_i$, let $p_{J(i)}$ be the first point of $S$ such that $|S[p_i,p_{J(i)}]\cap R|=|S[p_i,p_{J(i)}]\cap B|$.

By Lemma \ref{lemm:partition} (in its clockwise version), if $p_{i+1}$ is blue then $p_{J(i)}=p_{i+1}$, otherwise $p_{J(i)-1}$ and $p_{J(i)}$ are both blue. Note that $J(i)\neq J(i')$ whenever $i\neq i'$, and so the mapping that assigns to each value of $i$ the value $J(i)$ is bijective.

The following procedure computes point $p_{J(i)}$ for each red point $p_i$.

\vspace{0.2cm}
\textsc{Procedure: $J$-PAIRS}

\textbf{Input:} $S=B\cup R$ in convex position, $|B|=|R|=n$

\textbf{Output:} All pairs $(p_i, p_{J(i)})$ for $p_i\in R$

\begin{enumerate}
\item Consider two sorted lists of red and blue points: $L_R$ is the red list and $L_B$ is the blue list. Initially both lists are empty.

\item Explore the point set $S=<p_1,\ldots, p_{2n}>$ given in clockwise circular order:
\begin{enumerate}
\item If the explored point $p_j$ is red, add it at the end of list $L_R$.
\item If the explored point $p_j$ is blue and $L_R\neq \emptyset $, assign $p_j$ to the last red point of list $L_R$ and remove that red point from $L_R$.
\item If the explored point $p_j$ is blue and $L_R=\emptyset$, add $p_j$ at the end of list $L_B$.
 \end{enumerate}
\item  When step 2 has finished, $L_R$ and $L_B$ have the same number of points. Assign the remaining points of $L_B$ to the remaining points of $L_R$ in reverse order, that is, the first point of $L_B$ is assigned to the last point of $L_R$. Remove both points from the lists, and repeat the assignation last-first until both lists are empty.
\end{enumerate}
\textsc{End Procedure}

\vspace{0.2cm}

Figure~\ref{AlgorithmOptimalCycle} illustrates the assignation of points described in the above procedure. The assignations done in step 3 (with the points that remain in lists $L_R$ and $L_B$) are represented by a thicker line than those of step 2.

\begin{figure}[tb]
\centering
\includegraphics[scale=0.7]{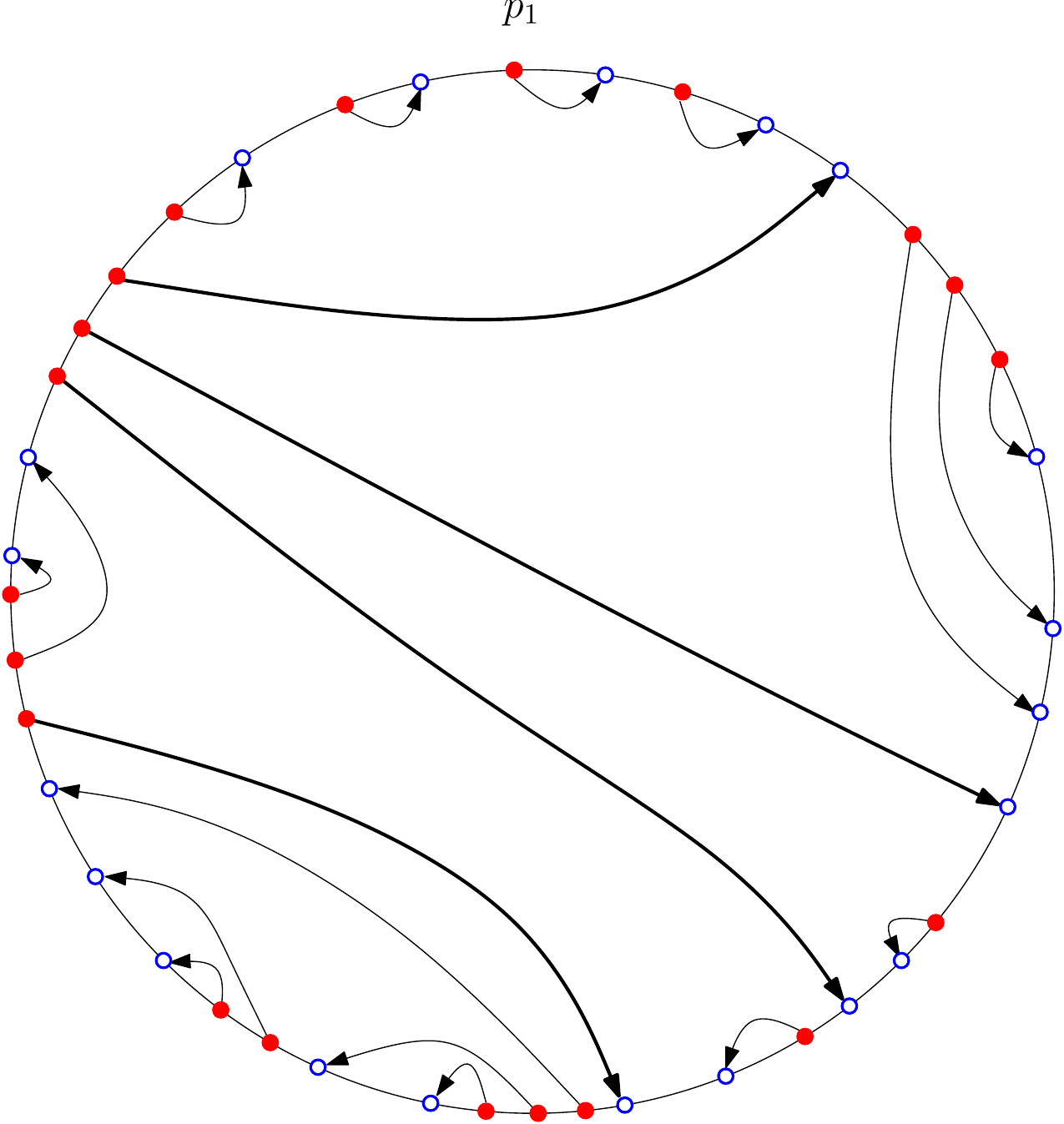}
\caption{Computing pairs $(p_i, p_{J(i)})$.}\label{AlgorithmOptimalCycle}
\end{figure}

\begin{lemma}\label{Auxiliar}
\textsc{Procedure $J$-PAIRS} computes all pairs $(p_i,p_{J(i)})$ with $p_i\in R$ in $\Theta(n)$ time.
\end{lemma}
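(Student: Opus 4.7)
My plan is to argue correctness first and then the running time. The whole procedure is essentially a stack-based parenthesis matching (reds as opening, blues as closing), with the circular nature of the sequence handled by step~3.

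First I would address step~2. The list $L_R$ behaves exactly as a stack: reds are pushed at the end, and the first blue that sees a non-empty stack pops the top red. I would show by a simple invariant on the scan (tracking $\Delta_j=|S[p_i,p_j]\cap R|-|S[p_i,p_j]\cap B|$) that whenever a red $p_i$ is popped by a blue $p_j$ in step~2, $p_j$ is exactly the first point where $\Delta$ returns to $0$, so $p_j=p_{J(i)}$. This is the standard fact that a well-formed subsequence is matched innermost-first by the stack. Thus every pair produced in step~2 is a correct $(p_i,p_{J(i)})$ pair.

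The delicate step is showing step~3 is correct, i.e., that the leftover reds and blues get paired in the right way across the wrap-around. The key structural lemma I want to establish is: at the end of step~2, \emph{every point of $L_B$ precedes every point of $L_R$ in the clockwise order $<p_1,\dots,p_{2n}>$}. Indeed, suppose $b\in L_B$ appeared at position $j$ and $r\in L_R$ appeared at position $i<j$; then when $b$ was scanned, $r$ was already (and still) in $L_R$ because $r$ is never popped, so $L_R$ was non-empty, contradicting the fact that $b$ was placed in $L_B$. Hence if $L_B=[b_1,\dots,b_k]$ and $L_R=[r_1,\dots,r_k]$ in order of insertion, the clockwise circular order of the unmatched points is $b_1,\dots,b_k,r_1,\dots,r_k$, with matched pairs (contributing $0$ to the running count) interleaved between them.

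Now I would compute $J(r_i)$ directly for an unmatched $r_i$. Starting the delta at $r_i$ and sweeping clockwise, the contribution of any matched red-blue pair found in step~2 is $0$. The cumulative delta therefore reaches $+(k-i+1)$ at $r_k$, then wraps to $b_1,b_2,\dots$, decreasing by one each time, and first hits $0$ at $b_{k-i+1}$. Hence $p_{J(r_i)}=b_{k+1-i}$. This is precisely the pairing produced by step~3, since pairing first-of-$L_B$ with last-of-$L_R$ and iterating gives $(r_k,b_1),(r_{k-1},b_2),\dots,(r_1,b_k)$. Combined with the step~2 pairs, this produces all pairs $(p_i,p_{J(i)})$ for $p_i\in R$, and bijectivity of $J$ guarantees that no pair is produced twice.

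For the running time, step~1 is $O(1)$. Step~2 performs a single clockwise traversal of the $2n$ points, and each point is inserted into (and at most once removed from) one of the two lists, each such operation at the end of the list taking constant time; so step~2 runs in $\Theta(n)$. Step~3 processes the residual lists $L_R$ and $L_B$ with $|L_R|=|L_B|\le n$ and spends constant time per pair, which is $O(n)$. Reading out the output also takes $\Theta(n)$ time, yielding the announced $\Theta(n)$ bound. The main obstacle is the structural claim that $L_B$ lies entirely before $L_R$ in circular order, because without it the reverse-order pairing in step~3 would not correspond to the $J$-function; once that is in hand, the rest is a direct count of $\Delta$.
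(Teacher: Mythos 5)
Your proof is correct and follows essentially the same route as the paper's: both read step 2 as stack-based parenthesis matching and verify step 3 by observing that after step 2 the unmatched blues all precede the unmatched reds in clockwise order, so the wrap-around interval from the last point of $L_R$ to the first point of $L_B$ contains only matched, hence balanced, blocks. The only cosmetic differences are that you establish the ``first point with equal counts'' property via the $\Delta$-prefix argument (which tacitly uses that every prefix of a balanced matched block has nonnegative relative $\Delta$ --- worth one explicit line), whereas the paper argues by induction on the assignments and the bijectivity of $J$, and that you state explicitly the $L_B$-before-$L_R$ structural fact that the paper leaves implicit.
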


\begin{proof}
Observe first that, in step 2, a point in list $L_B$ is never removed from that list, and it can be included in $L_B$ only if $L_R=\emptyset$; a point in list $L_R$ can be removed from the list, but in that case it is never inserted again.

We next prove that if $p_j$ is assigned to $p_i$ in \textsc{Procedure $J$-PAIRS} then $p_j=p_{J(i)}$, that is, $|S[p_i,p_j]\cap B|=|S[p_i,p_j]\cap R|$ and $p_j$ is the first point after $p_i$, clockwise, satisfying that equality.

Suppose that such result is true for all the assignments done in step 2 before exploring point $p_j$, and then $p_j$ must be assigned to $p_i$. This means that $p_j$ is blue and $p_i$ is the last red point in list $L_R$. Further, all the blue points in $S(p_i,p_j)$ have been assigned to red points in the same set $S(p_i,p_j)$ (otherwise $p_i$ cannot be in list $L_R$), and all the red points in $S(p_i,p_j)$ have received an assignment of a blue point in $S(p_i,p_j)$ (otherwise they should be in $L_R$ after $p_i$). Therefore, $|S[p_i,p_j]\cap B|=|S[p_i,p_j]\cap R|$.

Now, if $j$ is not the first index such that $J(i)=j$, but there is an index $j'<j$ so that $J(i)=j'$, then $j'$ would have been previously assigned, i.e.,  $j'=J(i')$ for some other point $p_{i'}$. This contradicts that  $J$ is bijective. Thus, if $p_j$ is assigned to $p_i$ in step 2 then $p_j=p_{J(i)}$.

In step 3, before the last point $p_k\in L_R$ and the first point $p_{\ell}\in L_B$ (clockwise), all the red and blue points have been assigned. Reasoning as above, $|S[p_k,p_{\ell}]\cap B|=|S[p_k,p_{\ell}]\cap R|$ and  $p_{\ell}$ is the first point after $p_k$, clockwise, satisfying that equality. Thus, $p_{\ell}=p_{J(k)}$. If we now remove these two points from the corresponding lists, the new last point of $L_R$ and the new first point  of $L_B$ are again in the same conditions, and so the same argument can be applied.
\end{proof}

If \textsc{Procedure $J$-PAIRS} is performed counterclockwise, we obtain another blue point $p_{J'(i)}$ for each red point $p_i$. When $|R|= |B|+1$, the blue point $p_{J(i)}$ does not exist for one of the red points, but the same procedure computes $p_{J(i)}$ for the remaining red points. Note that, if $p_k$ is the red point for which $p_{J(k)}$ is not defined, then $p_{J'(k)}$ necessarily exists.

The special configuration of Definition \ref{def:special} (for $S$ in convex position) occurs when $p_{J(i)}=p_{J'(i)}$; points $r$ and $ b$ in that definition would be $p_i\in R$ and $p_{J(i)}\in B$. Moreover, if the assignation of \textsc{Procedure $J$-PAIRS} is done counterclockwise and blue points play the role of red points, we obtain the same assignation as above: the red point $p_i$ is assigned to the blue point $p_{J(i)}$.

With \textsc{Procedure $J$-PAIRS} in hand, we can compute optimum Hamiltonian alternating cycles or paths on $S$ in $O(n)$ time and space.

\begin{theorem}\label{Theorem}
For given $S=R\cup B$ in convex position, the following statements hold.
 \begin{itemize}
  \item[(i)] An optimum Hamiltonian alternating cycle on $S$ can be computed in $O(n)$ time and space provided that $|R|= |B|=n \geq 2$.
 \item[(ii)] Let $r\in R$ and $b\in B$ consecutive points on $S$. If $|R|=|B|=n\geq 2$, then an optimum Hamiltonian alternating path on $S$ with endpoints $r$ and $b$ can be computed in $O(n)$ time and space.
 \item[(iii)] Let $|R|=|B|+1=n+1\geq 2$, and let $r, r'\in R$ be consecutive red points on $S$. Then, an optimum Hamiltonian alternating path on $S$ with endpoints $r$ and $ r'$ can be computed in $O(n)$ time and space.
 \end{itemize}
Moreover, in all cases, the optimum is $1$-plane.
\end{theorem}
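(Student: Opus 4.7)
By Theorem \ref{the:convexcycles} and Corollary \ref{cor:consecutive}, every Hamiltonian alternating cycle on $S$ has at least $n-\rr(S)$ crossings, every Hamiltonian alternating path on $S$ as in part (ii) has at least $n-\rr(S)$ crossings, and every Hamiltonian alternating path on $S$ as in part (iii) has at least $n-\br(S)$ crossings, with equality in each case forcing the object to be 1-plane. Hence in all three statements it suffices to produce, in $O(n)$ time and space, a Hamiltonian alternating cycle or path attaining the respective bound.

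The plan is to precompute, in $O(n)$ time, the clockwise and counterclockwise $J$-pairs via \textsc{Procedure $J$-PAIRS} (Lemma \ref{Auxiliar}) and then implement the inductive construction of Lemma \ref{lemm:path} and Theorem \ref{the:cycle} as an iterative procedure on a doubly linked list that represents the ``alive'' points in their cyclic order. In convex position the partition point $p_{i+1}$ of Lemma \ref{lemm:partition} around a red point $r$ is exactly $p_{J(r)}$, so every partitioning step becomes a constant-time lookup. Moreover, since every subset $X\subseteq S$ is itself in convex position, the convex-hull, tangent, and visibility primitives used in the general-position proof collapse to constant-time neighbor queries on the linked list. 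Starting from any bridge $rb$ of $S$, one recursively builds a $1$-PHAP from $r$ to $b$ by following the branches of Lemma \ref{lemm:path}, and finally adds $rb$ to close the cycle. This yields part (i): each recursive call either emits a single edge (a bridge or a $J$-chord) or splits the current arc at a precomputed $J$-partner and recurses on both halves, and the total number of emitted edges is $2n$, giving $O(n)$ total work.

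For part (ii), the optimum path on $S$ with consecutive endpoints $r,b$ of different colors is obtained by removing the bridge $rb$ from the optimum cycle produced in part (i); by Corollary \ref{cor:consecutive}(i) it has exactly $n-\rr(S)$ crossings and is $1$-plane. For part (iii), one inserts a fictitious blue point $b^{\star}$ between the consecutive red endpoints $r$ and $r'$ (as sketched just before Corollary \ref{cor:consecutive}), runs the part-(ii) procedure on $S\cup\{b^{\star}\}$ to obtain an optimum path from $r$ to $r'$ passing through $b^{\star}$, and then contracts $b^{\star}$ away. Corollary \ref{cor:consecutive}(ii) certifies that the resulting path on $S$ achieves the lower bound $n-\br(S)$ and is therefore $1$-plane; the dummy insertion and deletion cost only $O(1)$.

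The main obstacle is the bookkeeping needed to argue that the branchy case analysis of Lemma \ref{lemm:path} (cases 1, 2, 3.1, 3.2, 4.1, 4.2) really runs in amortized constant time per emitted edge. In convex position, I expect the case distinction to reduce to a constant number of color checks on neighbors in the linked list plus one lookup in the $J$-pair table, which is $O(1)$ per call. A second subtlety is that a partition may share a point between the two subproblems (the partition point $p_{i+1}$ belongs to both $S_1\cup\{r,p_{i+1}\}$ and $S_2$), so one must verify that the total size summed over all recursive subproblems remains $O(n)$; this holds because each split introduces only $O(1)$ duplicated points and the number of splits is bounded by the number of chord edges in the output, which is $O(n)$.
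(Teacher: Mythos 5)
Your proposal is correct and follows essentially the same route as the paper: reduce all three parts to producing an object attaining the lower bounds of Theorem~\ref{the:convexcycles} and Corollary~\ref{cor:consecutive}, precompute the $J$-pairs with \textsc{Procedure $J$-PAIRS}, and observe that in convex position the recursion of Lemma~\ref{lemm:path} costs $O(1)$ per emitted edge because all hull/tangent/visibility primitives become neighbor and $J$-table lookups on arcs (the paper likewise handles (iii) by reorienting the precomputation, equivalent to your dummy-point trick sketched before Corollary~\ref{cor:consecutive}). The only nit is the phrase ``optimum path from $r$ to $r'$ passing through $b^{\star}$'' in part (iii) — what you actually want is the optimum cycle on $S\cup\{b^{\star}\}$ with the two uncrossed bridges at $b^{\star}$ deleted — but the intended argument is clear and correct.
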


\begin{proof}
By Theorem~\ref{the:convexcycles} and Corollary~\ref{cor:consecutive} the optimum is 1-plane in all cases, and proving statements $(i)$ and $(ii)$ is equivalent. Hence, it suffices to show that the process described in Lemma~\ref{lemm:path} to compute $1$-plane Hamiltonian alternating paths can be done in linear time.

Suppose that $|R|=|B|=n$, and assume that points $p_{J(i)}$ have been pre-computed in linear time (regardless of the color of the points $p_i$) by applying \textsc{Procedure $J$-PAIRS} twice, one for the red points and the other for the blue points.

For the special case of points in convex position, in a generic step of the process described in the proof of Lemma~\ref{lemm:path}, we only have to compute optimum 1-plane Hamiltonian alternating paths with endpoints $p_i$ and $p_j$ on some subsets of consecutive points $S[p_i,p_j]$.

If $p_i$ and $p_j$ have different colors, in constant time, we can compute one or two edges of that optimum path, and study a smaller subproblem. The same happens when $p_i$ and $p_j$ have the same color, and one of them has a neighbor of different color.
Now, if $p_i$, $p_j$ and their neighbors have the same color, in constant time we can study two subproblems by using point $p_{J(i)}$ that, by Lemma~\ref{lemm:partition}, has a different color. Thus, in all cases, the cost of computing each new edge of the desired optimum path can be done in constant time. This implies that the procedure described in Lemma~\ref{lemm:path} is $O(n)$.

A similar argument applies when $|R|=|B|+1$ and the desired optimum $1$-PHAP on $S$ has two red consecutive points, $r,r'\in R$, as endpoints. In the pre-processing step, if $r'$ is after $r$ (clockwise), then it suffices to compute $p_{J(i)}$ beginning at $r'$ and $p_{J'(i)}$ beginning at $r$.
\end{proof}

\subsection{1-plane Hamiltonian alternating paths}\label{subsec:paths}

We first show that if $S=R\cup B$ is in convex position and $|R|=|B|$, the special configurations $(S,r,b)$ of Definition~\ref{def:special} are the unique configurations of points that do not admit a $1$-PHAP with fixed endpoints $r$ and $b$. Recall that by Lemma~\ref{lemm:path}, in any other case, such a path does exist, and also a $1$-PHAP with fixed red endpoints provided that $|R|=|B|+1$.

In this section, again, to simplify statements and explanations, the notation of Section~\ref{sec:general} for the endpoints of the paths will be modified when necessary:  $r,r',b, b'$ may be denoted as $p,q,p_s,p_t$, etc. We begin with a technical lemma, which considers that the $1$-PHAP is oriented: if $p_jp_k$  is an edge, point $p_j$ is visited before point $p_k$.

\begin{lemma}\label{lemm:1convex}
Let $S=R\cup B$ be in convex position with $|R|=|B|\geq 1$ or $|R|=|B|+1\geq 2$, and let $<p_1,\ldots, p_{h}>$ be the clockwise circular order of its points. Suppose that $P$ is an oriented $1$-PHAP on $S$ from $p=p_j$ to $q$, which contains a chord $p_jp_k$. If $q$ is to the left of the oriented chord $p_jp_k$, then $P$ visits the following points in the given order: (1) $p_k$, (2) all points of the set $\{p_{k+1}, \ldots , p_{j-2} \}$ (regarded cyclically), (3) $p_{j-1}$, (4) either $p_{j+1}$ or $p_{k-1}$.
\end{lemma}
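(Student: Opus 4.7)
The plan is to exploit the 1-planarity of $P$ (which allows the chord $p_jp_k$ to be crossed at most once) together with the assumption $q\in L$ to force a very rigid structure on the visits after $p_k$.

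First I set the stage. The chord $p_jp_k$ splits $\ch(S)$ into two open sub-polygons; let $L=\{p_{j+1},\ldots,p_{k-1}\}$ (cyclically) denote the vertices on the left of the oriented chord $p_j\to p_k$ and $R=\{p_{k+1},\ldots,p_{j-1}\}$ those on the right. By hypothesis $q\in L$, and both $L,R$ are nonempty since $p_jp_k$ is a chord. Any edge of $P$ with one endpoint in $L$ and the other in $R$ must cross the chord, so by 1-planarity at most one such edge exists; call it the transition edge $xy$, with $x\in R$ and $y\in L$. Since all vertices must be visited and $P$ ends at $q\in L$, from $p_k$ onward $P$ necessarily exhausts $R$ first, traverses $xy$, and then exhausts $L$ ending at $q$. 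In particular the set $\{p_{k+1},\ldots,p_{j-2}\}=R\setminus\{x\}$ is visited (in some order) before $x$, which is item (2), and $y$ is the first vertex of $L$ visited.

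Next I pin down $x=p_{j-1}$. Let $z=xy\cap p_jp_k$ and write $x=p_a$. The subsegment $xz$ lies inside the $R$-region (bounded by $p_jp_k$ and the polygon arc $p_k,p_{k+1},\ldots,p_{j-1},p_j$) and splits it into two sub-regions: one incident to $p_j$ containing exactly $\{p_{a+1},\ldots,p_{j-1}\}$, and one incident to $p_k$ containing $\{p_k\}\cup\{p_{k+1},\ldots,p_{a-1}\}$. The $R$-phase of $P$ starts at $p_k$ in the second sub-region and ends at $x$ on the separator while visiting every vertex of $R$. If $a\ne j-1$ the first sub-region is nonempty, so some edge of the $R$-phase is forced to cross the segment $xz$; together with $xy\cap p_jp_k=\{z\}$, this gives $xy$ two crossings in $P$, contradicting 1-planarity. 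Hence $x=p_{j-1}$, which is item (3).

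A completely symmetric argument inside the $L$-region then handles $y$. Writing $y=p_b$, the segment $zy$ splits $L$ into sub-regions containing $\{p_{j+1},\ldots,p_{b-1}\}$ and $\{p_{b+1},\ldots,p_{k-1}\}$, and the $L$-phase of $P$ starts at $y$, visits every vertex of $L$ and ends at $q$. If $b\notin\{j+1,k-1\}$ both sub-regions are nonempty, so some $L$-phase edge must cross $zy$, again forcing $xy$ to be crossed twice. Therefore $y\in\{p_{j+1},p_{k-1}\}$, which is item (4). I expect the main obstacle to be precisely this separator step: I must argue carefully that, once the transition endpoint is placed in the interior of its side, the short segment from that endpoint to $z$ really does isolate vertices on both sides that the relevant subpath is obliged to visit, and that any edge doing so counts as a genuine additional crossing of $xy$ (not the crossing already used up by $p_jp_k$). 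Once this geometric bookkeeping is accepted, everything else follows from the single bound ``$p_jp_k$ is crossed at most once'' and the endpoint condition $q\in L$.
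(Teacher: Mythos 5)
Your proof is correct and follows essentially the same route as the paper's: first use the single allowed crossing of $p_jp_k$ to force the path to exhaust the right side before the unique transition edge, then observe that if the transition edge's endpoint on either side were not adjacent to $p_j$ or $p_k$, it would separate nonempty sets of vertices that the corresponding phase must connect, producing a forbidden second crossing on that edge. The only cosmetic remark is that you identify $R\setminus\{x\}$ with $\{p_{k+1},\ldots,p_{j-2}\}$ one paragraph before you actually prove $x=p_{j-1}$, but that is an ordering issue of exposition, not a gap.
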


\begin{proof}
The oriented chord $p_jp_k$ splits $S\setminus \{p_j,p_k\}$ into two subsets: $S_{\ell}$ containing all points that are to the left of the chord, and $S_r=\{p_{k+1}, p_{k+2}, \ldots , p_{j-1} \}$ whose points are to the right of the chord. Suppose that after $p_k$, path $P$ visits a point in $S_{\ell}$ before than a point in $S_r$. Then, to reach the endpoint $q\in S_{\ell}$ we must cross twice the chord $p_jp_k$, which contradicts that $P$ is 1-plane. Hence, $P$ visits all points of $S_r$  before crossing the chord $p_jp_k$ to go to a point in $S_{\ell}$. We now prove that the last point visited in $S_r$ is $p_{j-1}$.

Suppose on the contrary that $p'\in S_r\setminus \{p_{j-1}\}$ is the last visited point by $P$  in $S_r$. Then, there is a chord $p'q'$ with $q'\in S_{\ell}$ that crosses $p_jp_k$ and splits $S_{r}$ into two subsets: points to the left of the chord $p'q'$, and those that are to the right (here we include point $p_k$). Since $P$ connects both sets (which are non-empty) before crossing the chord $p_jp_k$, there must be another crossing in chord $p'q'$ (besides the intersection point with $p_jp_k$); a contradiction. Thus, $p'=p_{j-1}$. An analogous argument proves that $P$ visits either $p_{j+1}$ or $p_{k-1}$ just after $p_{j-1}$.
\end{proof}

\begin{theorem}\label{the:specialconfig}
Let $S=R\cup B$ be in convex position with $|R|=|B|=n\geq 1$. Given $r\in R$ and $b\in B$, there exists a 1-plane Hamiltonian alternating path on $S$ with endpoints $r$ and $ b$ if and only if the configuration $(S,r,b)$ is  non-special.
\end{theorem}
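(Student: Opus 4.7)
The ``if'' direction is immediate from Lemma~\ref{lemm:path}(i): whenever $(S,r,b)$ is non-special, a $1$-PHAP on $S$ with endpoints $r$ and $b$ exists. The plan is therefore to concentrate on the ``only if'' direction, arguing by contradiction. Assume $(S,r,b)$ is a special configuration and that there exists a $1$-PHAP $P$ on $S$ with endpoints $r$ and $b$. Label the $2n$ points of $S$ in clockwise order as $r=p_0,p_1,\ldots,p_{2n-1}$, with $b=p_m$. By condition $(i)$ of Definition~\ref{def:special}, $p_1$ and $p_{2n-1}$ are red and $p_{m-1}$ and $p_{m+1}$ are blue.

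Since $p_1$ and $p_{2n-1}$ are red and $P$ is alternating, its first edge is a chord $r p_k$ with $p_k$ blue and $p_k\ne b$; without loss of generality I assume $1<k<m$ (the case $m<k<2n-1$ is symmetric). I then apply Lemma~\ref{lemm:1convex} in its symmetric form (with ``left'' and ``right'' interchanged) to the oriented chord $r p_k$: since $b$ lies in $S_r=\{p_{k+1},\ldots,p_{2n-1}\}$, the lemma forces $P$, after visiting $p_k$, to traverse all of $S_\ell=\{p_1,\ldots,p_{k-1}\}$, to end this portion at $p_1$, and then to transition through the unique chord crossing $r p_k$ to either $p_{2n-1}$ or $p_{k+1}$. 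Because $p_1$ is red, the next vertex of the alternating path must be blue; since $p_{2n-1}$ is red, the transition is forced to $p_{k+1}$, and $p_{k+1}$ must therefore be blue. I then iterate this ``Lemma~\ref{lemm:1convex} plus alternation'' forcing on the subpath running inside $S_r$ from $p_{k+1}$ to $b$, using conditions $(ii)$ and $(iii)$ of Definition~\ref{def:special} to control the running red-blue imbalance along the two arcs joining $r$ to $b$. At each forcing step, either the next clockwise point on the arc is required to be blue (which, combined with the strict positivity of the imbalance guaranteed by condition $(ii)$, must eventually fail), or the path is forced to land on $b$ before every point of $S$ has been visited. Both alternatives contradict the Hamiltonian character of $P$.

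The main obstacle will be to organize the iteration cleanly. The intended invariant is that the already-forced prefix of $P$ ends at two consecutive points of an arc still to be visited, and that the residual arc still satisfies the cumulative inequality inherited from condition $(ii)$ (respectively $(iii)$ when working from the $b$ side); the iteration then halts either because no valid blue transition partner exists at the current step or because the next forced vertex would have to be $b$ itself. An entirely symmetric analysis, starting from the last edge $p_{k'}b$ of $P$ and applying Lemma~\ref{lemm:1convex} in reverse, covers the remaining cases.
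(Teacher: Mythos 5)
Your ``if'' direction and your opening moves for the ``only if'' direction match the paper: orient $P$ from $r$, observe that by condition $(i)$ of Definition~\ref{def:special} the first edge is a chord $rp_k$ with $p_k$ blue, apply Lemma~\ref{lemm:1convex} to force $P$ to sweep the arc on the side of $rp_k$ not containing $b$, end that sweep at the neighbor of $r$ on that side (red, by specialness), and then use alternation to force the next vertex to be $p_{k+1}$ and to be blue. Up to that point the argument is sound.

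The gap is everything after that. You propose to \emph{iterate} the forcing along the remaining arc, maintaining an invariant you describe only as ``intended,'' and you yourself flag that ``the main obstacle will be to organize the iteration cleanly'' --- so the decisive part of the contradiction is never actually derived. The missing observation is a one-line counting argument that makes the iteration unnecessary: the forced initial segment of $P$ covers exactly $\{r\}\cup\{p_1,\ldots,p_k,p_{k+1}\}$ with an alternating path from $r$ (red) to $p_{k+1}$ (blue), so this set contains equally many red and blue points; since $b$ does not lie in $\{p_1,\ldots,p_k\}$, condition $(ii)$ of Definition~\ref{def:special} (which says the \emph{first} balanced prefix around $r$ in either rotational direction ends at $b$) forces $p_{k+1}=b$. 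Hence $P$ has already reached its terminal endpoint $b$ while the other blue neighbor of $b$ (which exists, is blue, and is distinct from $r$ by condition $(i)$) is still unvisited --- contradicting that $P$ is Hamiltonian. In other words, your ``second alternative'' (the path lands on $b$ prematurely) is the one that occurs, and it occurs immediately after the single application of Lemma~\ref{lemm:1convex}; but to see that you must invoke the balance condition in the definition of specialness explicitly, which your write-up never does. Without that step, the claim that the iteration ``must eventually fail'' is an assertion, not a proof.
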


\begin{proof}
By Lemma~\ref{lemm:path}, if the configuration $(S,r,b)$ is  non-special then the desired path exists. We now prove
that there is no $1$-PHAP for a special configuration $(S,r,b)$. Suppose on the contrary that this path, say $P$, does exist, and consider it to be oriented from $r$ to $b$.

\begin{figure}[!htb]
\centering
\includegraphics[scale=0.7]{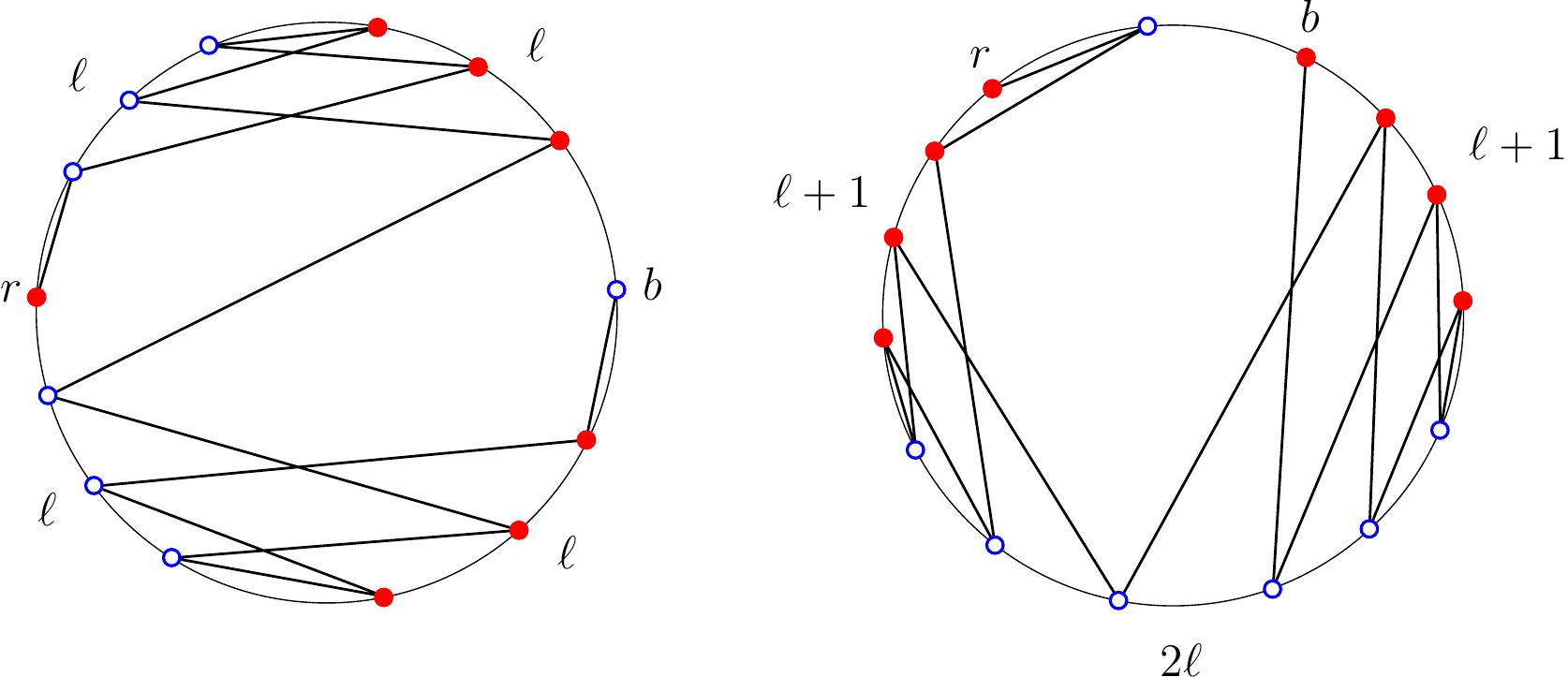}
\caption{Left: A point configuration that has two runs of cardinality 1 and four runs of cardinality $\ell$. It admits no $1$-PHAP with endpoints $r$ and $b$ and less than $2\ell-2$ crossings. Right: A point configuration that has one run of cardinality 1, two runs of cardinality $\ell+1$, and one run of cardinality $2\ell$. It admits no $1$-PHAP with endpoints $r$ and $b$ and less than $2\ell-2$ crossings.}\label{fig:tight}
\end{figure}

Following the notation of Lemma \ref{lemm:1convex}, $P$ visits a blue point $p_k$ just after $r=p_j$. Assume, without loss of generality, that $b=q$ is to the left of the oriented edge $p_jp_k$.

By Lemma~\ref{lemm:1convex}, and considering the clockwise circular order  $< p_1,\ldots, p_{2n}>$ of the points of $S$, path $P$ visits first all points of $\{p_{k+1},\ldots,p_{j-2} \}$, then point $p_{j-1}$ (which is red since $(S,r,b)$ is special), and then point $p_{k-1}$ that is blue (note that point $p_{j+1}$ is also red).

Since the path from $p_{k}$ to $p_{j-1}$ has endpoints of different color, the set of points from $p_{k}$ to $p_{j-1}$ (clockwise) contains the same number of red and blue points. Thus, points $p_k$ and $p_{k-1}$ define a partition of $S\setminus \{r\}$ around $p_j=r$, as described in Lemma~\ref{lemm:partition}. Further, $(S,r,b)$ is special and so $p_{k-1}=b$. Hence, path $P$ reaches the endpoint $b$ without visiting point $p_{k-2}$ which, by Definition \ref{def:special} of special configuration, exists and is blue. This contradicts that $P$ is Hamiltonian.
\end{proof}

\begin{remark}\label{remextra}
The bound of $n-\rr(S)$ given in Lemma~\ref{lemm:path} for the number of crossings in a $1$-PHAP is also tight when considering point sets $S$ in convex position. Figure~\ref{fig:tight} illustrates two point configurations in which every $1$-PHAP with endpoints $r$ and $b$ has at least $n-\rr(S)$ crossings. Thus, by Lemma~\ref{lemm:path}, at least one of those paths has exactly $n-\rr(S)$ crossings.
\end{remark}

We now turn to a natural problem for Hamiltonian alternating paths on non-special configurations: to determine whether being optimum implies being 1-plane.
This is the analogous problem  as that considered in Theorem  \ref{the:convexcycles}$(ii)$ for cycles, but there is a fundamental difference: the path will be assumed to have given endpoints.

We first state a well-known property based on the quadrangular property, and a consequence of it  (whose straightforward proof is omitted); see also Figure~\ref{fig:quadrangular}.

\begin{figure}[!htb]
\centering
\includegraphics[scale=0.7]{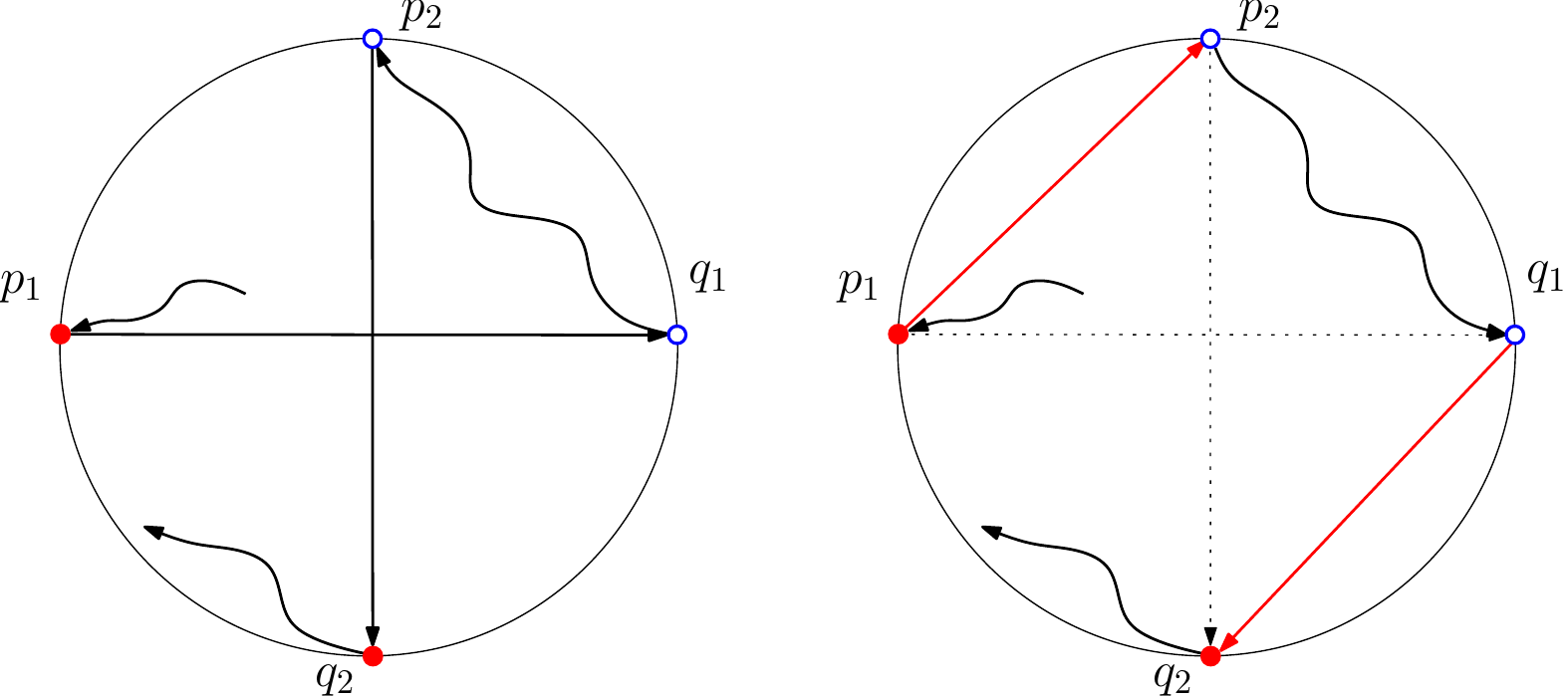}
\caption{Replacing edges $p_1q_1$ and $p_2q_2$ by edges $p_1p_2$ and $q_1q_2$.}\label{fig:quadrangular}
\end{figure}

\begin{property}\label{pro:quadrangular}
If $G$ is a geometric graph whose edges $pq$ and $p'q'$ intersect, then the geometric graph obtained by replacing those edges by $pp'$ and $qq'$ (or by $pq'$ and $qp'$) has fewer crossings than $G$. Even more, the graph resulting from applying this transformation several times has fewer crossings than $G$.
\end{property}

\begin{corollary}\label{cor:quadrangular}
Let $S=R\cup B$ be in convex position with $|R|=|B|\geq 1$ or $|R|=|B|+1\geq 2$, and let $P$ be an oriented Hamiltonian alternating path on $S$ from $p$ to $q$. Suppose that $p_1q_1$ and $p_2q_2$ are two crossing edges of $P$ such that $q_1$ and $p_2$ have the same color. Then, there exists a Hamiltonian alternating path on $S$ from $p$ to $q$ with fewer crossings than $P$. See Figure~\ref{fig:quadrangular}.
\end{corollary}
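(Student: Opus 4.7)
The plan is to apply Property~\ref{pro:quadrangular} to the crossing pair $p_1q_1$, $p_2q_2$, choosing the replacement by the edges $p_1p_2$ and $q_1q_2$ (rather than by $p_1q_2$ and $q_1p_2$). First I would confirm that this is the correct choice of uncrossing from the color standpoint: since $P$ is alternating, $p_1$ differs in color from $q_1$ and $p_2$ differs in color from $q_2$; combined with the hypothesis that $q_1$ and $p_2$ share a color, this forces $p_1$ and $q_2$ to share the opposite color. Consequently both of the new edges $p_1p_2$ and $q_1q_2$ are bichromatic, whereas the discarded alternative $\{p_1q_2,\, q_1p_2\}$ would have been monochromatic and therefore unusable in an alternating path.

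Second, I would check that this substitution still yields a Hamiltonian alternating path from $p$ to $q$. By relabeling if necessary, I may assume that $p_1q_1$ is encountered before $p_2q_2$ along the orientation of $P$, so $P$ decomposes as $p \to \cdots \to p_1 \to q_1 \to \cdots \to p_2 \to q_2 \to \cdots \to q$; let $Q$ denote the intermediate sub-path from $q_1$ to $p_2$. Deleting $p_1q_1$ and $p_2q_2$ leaves three vertex-disjoint sub-paths, and inserting the new edges $p_1p_2$ and $q_1q_2$ splices them back together into the Hamiltonian path $p \to \cdots \to p_1 \to p_2 \to Q^{\mathrm{rev}} \to q_1 \to q_2 \to \cdots \to q$, where $Q^{\mathrm{rev}}$ denotes $Q$ traversed in reverse. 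The endpoints are still $p$ and $q$; alternation is preserved because the reversal of an alternating sub-path is alternating and both new edges are bichromatic.

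Finally, Property~\ref{pro:quadrangular} guarantees that this single uncrossing strictly decreases the total number of crossings, which is exactly the conclusion of the corollary. I do not anticipate a real obstacle beyond the above bookkeeping: once the color-consistent replacement is singled out, the rest is a routine sub-path reversal argument.
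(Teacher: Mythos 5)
Your proposal is correct and is exactly the argument the paper intends (the paper omits the proof as straightforward, but Figure~\ref{fig:quadrangular} and Property~\ref{pro:quadrangular} encode precisely your replacement of $p_1q_1,p_2q_2$ by $p_1p_2,q_1q_2$ with the middle sub-path reversed). The color bookkeeping ruling out the alternative replacement $\{p_1q_2,q_1p_2\}$ and the appeal to Property~\ref{pro:quadrangular} for the strict decrease in crossings are both as the paper relies on them.
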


\begin{theorem}\label{the:minimalpath}
Let $S=R\cup B$  be in convex position with $|R|=|B|\geq 1$ or $|R|=|B|+1\geq 2$, and let $p,q\in S$ such that the configuration $(S,p,q)$ is non-special. Then, every optimum Hamiltonian alternating path on $S$ with endpoints $p$ and $q$ is 1-plane.
\end{theorem}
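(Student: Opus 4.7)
The plan is to argue by contradiction: assume $P$ is an optimum Hamiltonian alternating path from $p$ to $q$ on $S$ that is not $1$-plane. Then some edge $e=uv$ of $P$, oriented from $u$ to $v$ along $P$, is crossed by at least two other edges $e_1=a_1b_1$ and $e_2=a_2b_2$ of $P$, each also oriented in the $P$-direction. The engine of the argument will be Corollary~\ref{cor:quadrangular}: whenever two crossing edges $f_1=x_1y_1$ and $f_2=x_2y_2$ of $P$ (with $f_1$ appearing first) satisfy that $y_1$ and $x_2$ have the same color, swapping them produces a Hamiltonian alternating path from $p$ to $q$ with strictly fewer crossings. Optimality of $P$ will therefore block this swap for every crossing pair, forcing $y_1$ and $x_2$ to have different colors --- equivalently, $x_1$ and $x_2$ to share color.

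The next step is to apply this blocking condition to the pairs formed by $e$ with each $e_i$ (with the $P$-order dictated by which of $e,e_i$ comes first) to derive a rigid color pattern: both $a_1$ and $a_2$ have the color of $u$, while both $b_1$ and $b_2$ have the color of $v$. In convex position the chord $e$ splits $S\setminus\{u,v\}$ into two open arcs $L$ and $R$, and each $e_i$ has exactly one endpoint on each arc. I would then carry out a case analysis on (i) the $P$-order of $e,e_1,e_2$, (ii) which arc contains each of $a_1$ and $a_2$, and (iii) whether $e_1$ and $e_2$ themselves cross. In every case, the goal is to exhibit either a further pair of crossing edges of $P$ on which the quadrangular swap is color-admissible (contradicting optimality), or a structural obstruction to $P$ being Hamiltonian. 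A recurrent subargument is a sidedness-transition observation: if $a_1$ and $a_2$ lie on the same arc of $e$, then traversing $P$ from $a_1$ through $b_1$ (on the opposite arc) back to $a_2$ forces an additional edge of $P$ to cross $e$ in the reverse direction, whose colors, combined with those of $e_1$ or $e_2$, will typically provide the admissible swap.

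The main obstacle will be keeping this case analysis exhaustive and uniform, ensuring that in each geometric and ordering subcase the color constraints force either a valid swap or a Hamiltonicity obstruction. The non-special hypothesis on $(S,p,q)$ is essential for the conclusion itself: by Theorem~\ref{the:specialconfig} a special configuration admits no $1$-PHAP at all, so in that regime an optimum path can never be $1$-plane and the statement would fail. For the case $|R|=|B|+1\geq 2$ with $p,q$ both red, this hypothesis is automatic since the notion of special configuration is defined only for red-blue endpoints, and the same argument will apply verbatim.
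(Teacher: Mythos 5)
Your opening moves match the paper's: assume an edge $e_1$ is crossed twice, and use Corollary~\ref{cor:quadrangular} to force the color pattern on the endpoints of the crossing edges (the origins of $e_1,e_2,e_3$ all red, say, the terminals all blue), then case-split on the arcs and on whether $e_2,e_3$ cross. But there is a genuine gap in your plan: you state that in every case you will find either a color-admissible quadrangular swap or a Hamiltonicity obstruction, and you relegate the non-special hypothesis to an explanation of why the theorem would be false without it. That cannot work. For a \emph{special} configuration $(S,r,b)$, the optimum path genuinely has one edge crossed twice (this is Corollary~\ref{the:special} in the paper), and that path satisfies every local color-blocking constraint your swaps can impose --- otherwise the swap would improve it. So no argument built purely from local swaps and Hamiltonicity obstructions can rule out a twice-crossed edge; at some point the non-special hypothesis must enter the argument constructively, and your outline never uses it.

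Concretely, the case your plan cannot close is the one the paper isolates as Case 2.2.4: after all other positions are eliminated, one is forced into a configuration where $p$ and $q$ are the unique points on their respective arcs between the origins (resp.\ terminals) of the crossing edges, the path decomposes into two $1$-plane subpaths joined by the twice-crossed edge $e_1$, and no swap is available. There the paper switches to a global counting argument: using Theorem~\ref{the:convexcycles} it lower-bounds the number of crossings of this putative optimum by $n-\rr(S)$, and then uses non-specialness to pick a partition of $S$ around $p$ with $p_{i+1}\neq q$ and build, as in Case 2 of the proof of Lemma~\ref{lemm:path}, a Hamiltonian alternating path with at most $n-\rr(S)-1$ crossings --- contradicting optimality. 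You need this (or some equivalent quantitative use of the non-special hypothesis) to finish; without it the case analysis terminates in a configuration you cannot refute. A smaller issue: in that final case one shows the two path endpoints must have distinct colors, so the red-red case $|R|=|B|+1$ never reaches it; your remark that the argument applies ``verbatim'' there is fine in spirit but should be justified by that observation rather than by the definition of special configurations.
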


\begin{proof}
Let $Opt$ be an optimum Hamiltonian alternating path with endpoints $p$ and $q$. We can consider $Opt$ as a directed path, where an edge $p_iq_j$ is directed from $p_i$ to $q_j$. Further, to apply symmetry, we complete the abstract graph $Opt$ to a cycle $C$ by adding the final (directed) edge $e=qp$. Edge $e$ is a dummy edge that is not crossed by any other edge (one can think on it as a curve from $p$ to $q$ placed outside $\ch(S)$), and its endpoints can have the same color.

Suppose on the contrary that $Opt$ is not 1-plane, i.e., there is at least one edge $e_1=p_1q_1$ in $Opt$ that is crossed by two other edges, say $e_2=p_2q_2$ and $e_3=p_3q_3$.

Assume, without loss of generality, that $C$ begins at point $p_1$ and then continues with: (1) edge $e_1$, (2) a path $P_1$ from $q_1$ to $p_2$, (3) edge $e_2$, (4) a path $P_2$ from $q_2$ to $p_3$, (5) edge $e_3$, (6) a path $P_3$ from $q_3$ to $p_1$. The dummy edge $e$ may belong to $P_1$, $P_2$ or $P_3$.

Suppose that point $p_1$ is red, and so $q_1$ is blue (the argument is analogous otherwise).
Since $e_1$ crosses edges $e_2, e_3$ and $Opt$ is optimum, by Corollary~\ref{cor:quadrangular}, points $p_2$ and $p_3$ are also red, and thus $q_2$ and $ q_3$ are blue. In particular, this implies that $P_2$ cannot consist of only one point.

The directed edge $e_1$ splits the points of $S\setminus \{p_1,q_1\}$ into two subsets, points to the left and points to the right of $e_1$, respectively. Assume, without lost of generality, that point $p_2$ is to the left of $e_1$ (otherwise, totally symmetric cases to those below would appear).

\begin{figure}[!htb]
\centering
\includegraphics[scale=0.7]{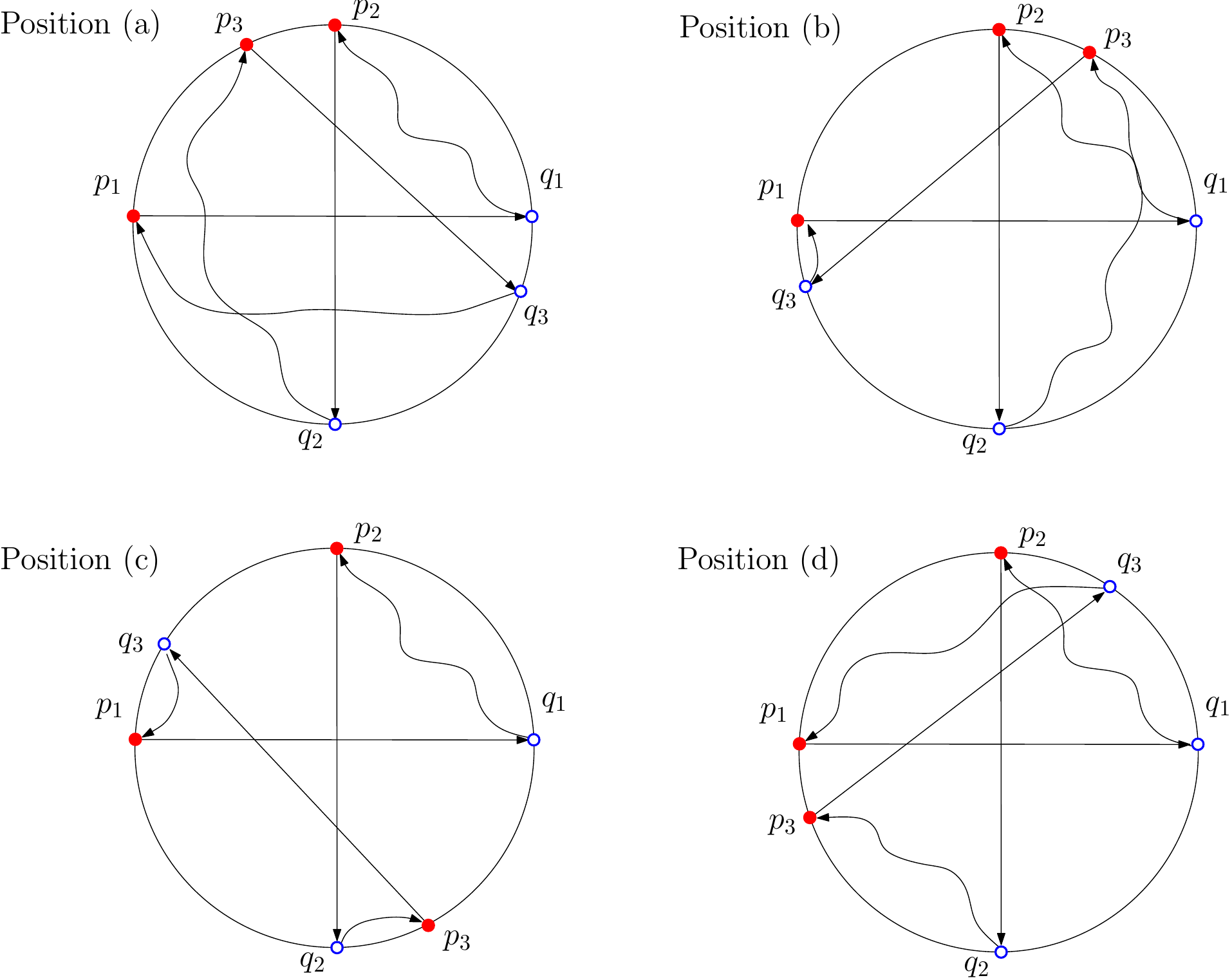}
\caption{The four positions when $e_2$ crosses $e_3$.}\label{fig:ThreeCrossings1}
\end{figure}

\begin{figure}[!tb]
\centering
\includegraphics[scale=0.7]{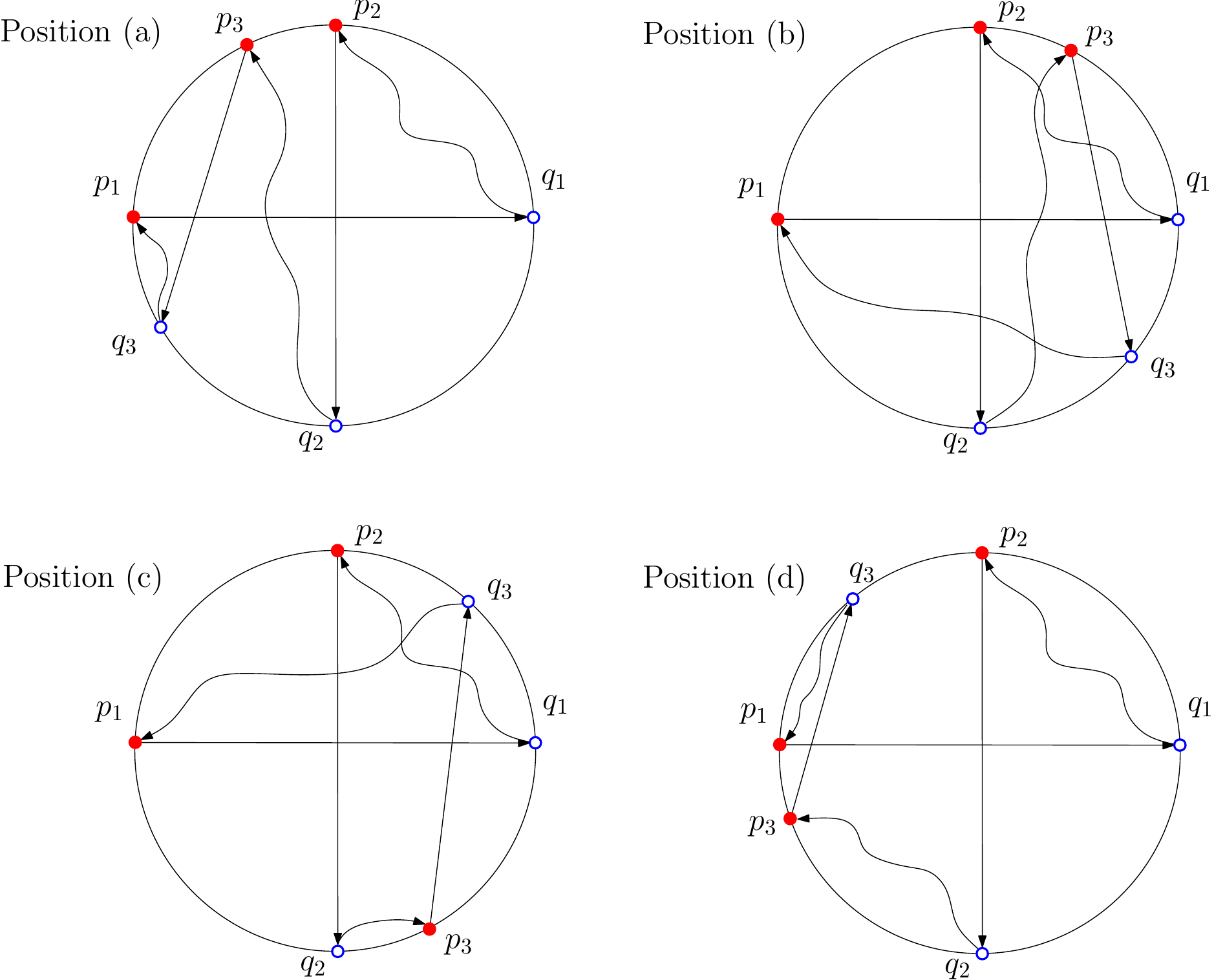}
\caption{The four positions when $e_2$ does not cross $e_3$.}\label{fig:TwoCrossings1}
\end{figure}

We next distinguish two cases depending on whether $e_2$ and $e_3$ intersect or not. Further, for each case, we shall consider four possibilities according to the position of $p_3$ with respect to $p_1, p_2, q_1$ and $q_2$ (clockwise).
\begin{itemize}
\item \emph{Position (a)}: $p_3$ is located between $p_1$ and $p_2$.
\item \emph{Position (b)}: $p_3$ is located between $p_2$ and $q_1$.
\item \emph{Position (c)}: $p_3$ is located between $q_1$ and $q_2$.
\item \emph{Position (d)}: $p_3$ is located between $q_2$ and $p_1$.
\end{itemize}

Figures~\ref{fig:ThreeCrossings1} and~\ref{fig:TwoCrossings1} show the four positions in each of the two cases. The three paths $P_1, P_2$ and $P_3$ are illustrated as curves, although in fact they are polygonal lines connecting points of $S$.

\vspace{0.3cm}
\noindent\emph{Case 1: edges $e_2$ and $e_3$ cross.}
\vspace{0.3cm}

In each of the four positions, by applying Property~\ref{pro:quadrangular} two or three times, we obtain a Hamiltonian alternating cycle that uses all the edges of the paths $P_1$, $P_2$ and $P_3$ and  has fewer crossings than $C$. This is a contradiction since, by removing edge $e$, there would be a path with fewer crossings than $Opt$. Therefore, $Opt$ cannot contain three edges crossing each other.

Figure~\ref{fig:ThreeCrossings2} shows the new cycles obtained for each position. The dotted edges are the original edges $e_1, e_2$ and $e_3$, the dashed edges represent intermediate edges that appear after applying Property~\ref{pro:quadrangular}, and the thick edges are the final edges belonging to the new cycle. Further, numbers 1, 2 and 3 denote the intersection points of the edges to which we apply Property~\ref{pro:quadrangular} and the order in which it is done. For example, for position (a), we first replace edges $p_2q_2$ and $p_3q_3$ by $p_2q_3$ and $p_3q_2$. Then, $p_1q_1$ and $p_3q_2$ are replaced by $p_1q_2$ and $p_3q_1$.

\begin{figure}[!htb]
\centering
\includegraphics[scale=0.7]{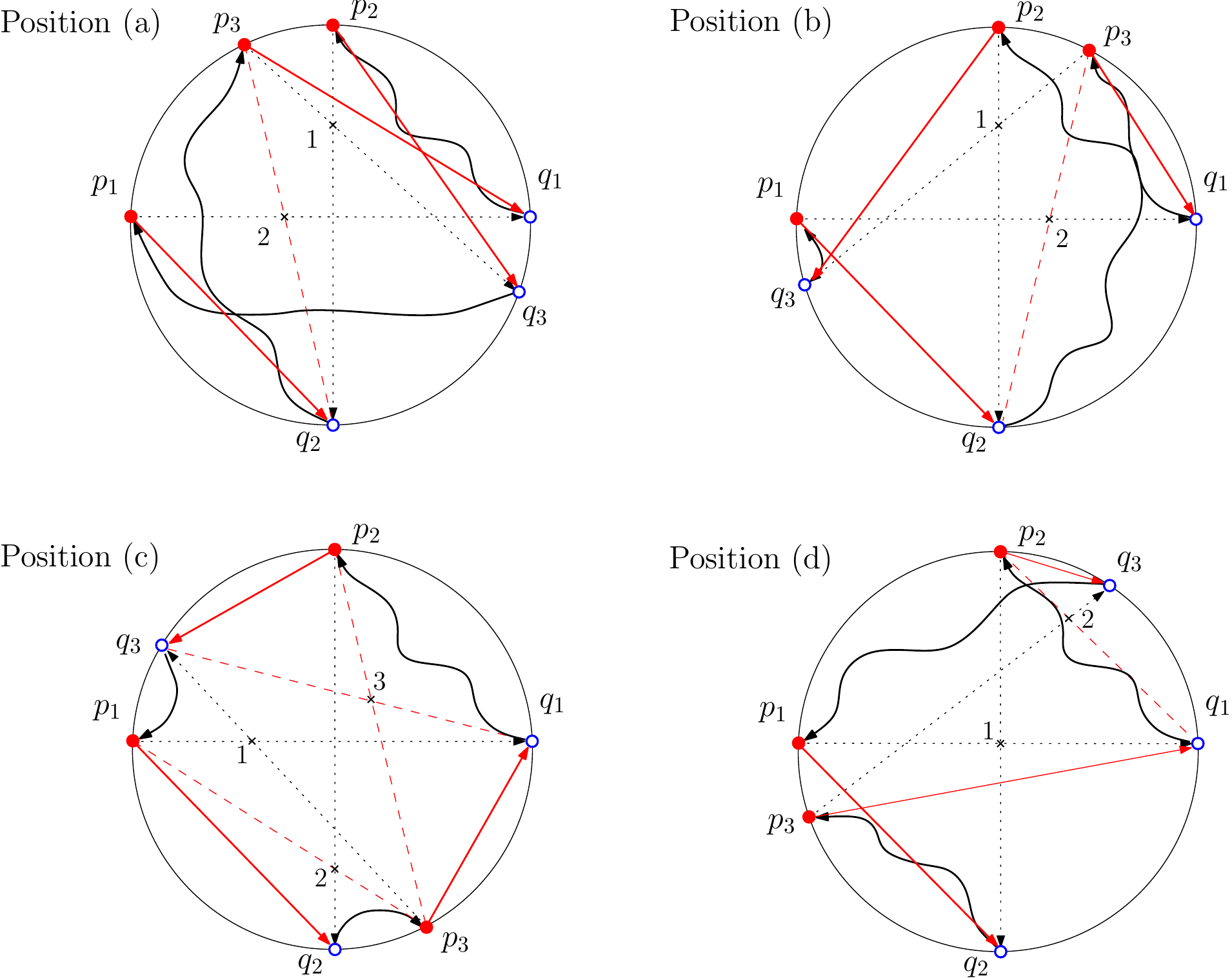}
\caption{Constructing new paths with fewer crossings than $Opt$ when $e_2$ crosses $e_3$.}\label{fig:ThreeCrossings2}
\end{figure}

\begin{figure}[!tb]
\centering
\includegraphics[scale=0.7]{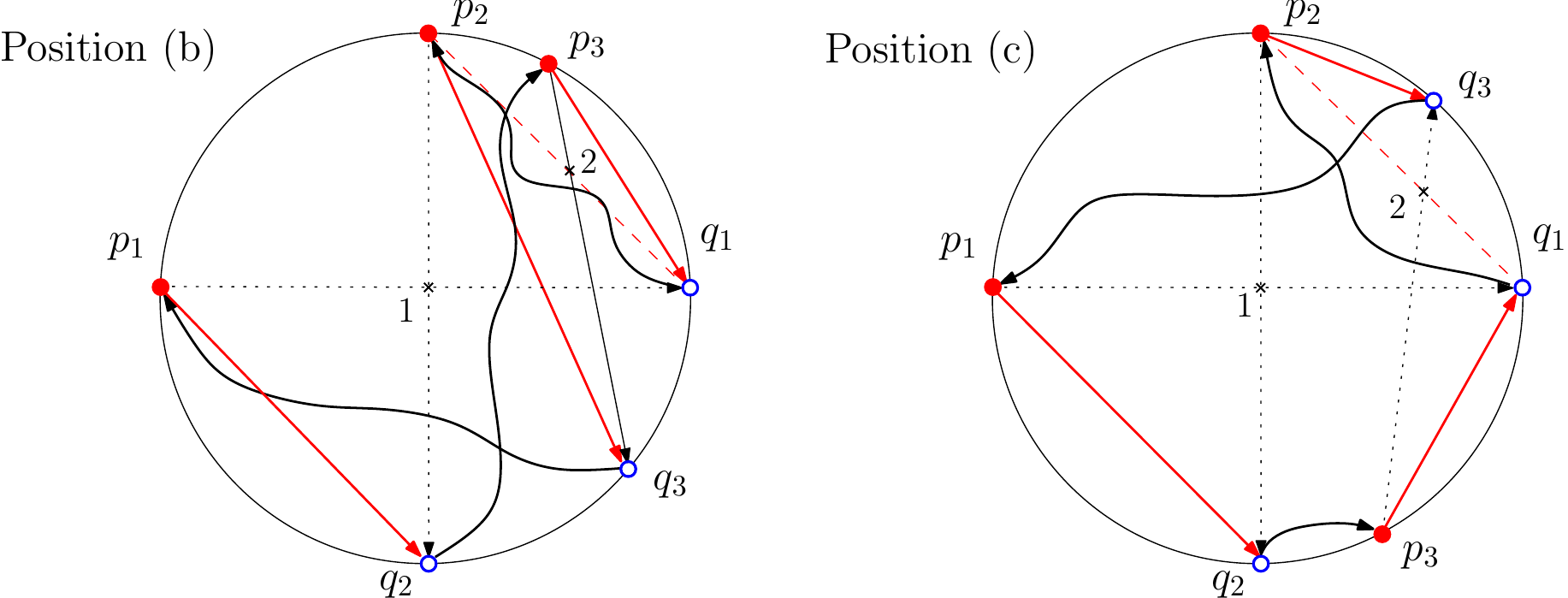}
\caption{Alternative paths (or cycles) for positions (b) and (c).}\label{fig:TwoCrossingsbc}
\end{figure}

\vspace{0.3cm}
\noindent\emph{Case 2: edges $e_2$ and $e_3$ do not cross.}
\vspace{0.3cm}

For positions (b) and (c) we reach the same contradiction as in case 1 by applying Property~\ref{pro:quadrangular} (twice); see Figure~\ref{fig:TwoCrossingsbc}. Thus, $Opt$ contains no edges with such positions for their endpoints.

\vspace{0.3cm}
\noindent
\emph{Case 2.1: The endpoints of $e_1,e_2,e_3$ are in position (d)}.
\vspace{0.3cm}

There can be different pairs of non-crossing edges intersecting $e_1$. Let us choose $e_2$ and $e_3$ such that the number of edges $k$ of $Opt$ with an endpoint between $q_3$ and $p_2$ (clockwise) and the other between $q_2$ and $p_3$ (again clockwise) is minimum.
Suppose that $k>0$, and let $e_d=p_dq_d$ be one of those $k$ edges. By Corollary~\ref{cor:quadrangular}, point $p_d$ is red and point $q_d$ is blue.

Assume first that point $p_d$ is to the left of edge $e_1$ (see Figure \ref{fig:TwoCrossingsd} left). Edge $e_d$ belongs to neither $P_1$ nor $P_3$; otherwise
$e_1,e_d$ and $e_2$ would be in position (b) (for $P_1$) or $e_1,e_3$ and $e_d$ would be in position (c) (for $P_3$). Further, if $e_d$ belongs to $P_2$ , then edges $e_1,e_d,e_3$ are also in position (d), but the number of edges of $Opt$ with one endpoint between $q_3$ and $p_d$, and the other  between $q_d$ and $q_3$ is smaller than $k$, which is a contradiction since $k$ is  minimum. Therefore, $p_d$ is to the right of edge $e_1$ which also leads to a contradiction in all the possible situations: $e_1,e_d, e_2$ are in position (c) (edge $e_d\in P_1$); $e_1,e_3,e_d$ are in position (b) (edge $e_d\in P_2$); $e_1,e_2,e_d$ are in position (d), and the contradiction comes again from the minimum value $k$. Note that, in fact, the two first situations are not exactly positions (b) and (c), but their symmetric ones.

Hence, if position (d) occurs in $Opt$ then $k=0$.

We now replace in $C$ edges $e_1,e_2,e_3$ by $p_1q_2$, $p_2q_3$, $p_3q_1$ (see Figure~\ref{fig:TwoCrossingsd} right), and thus obtain a new cycle $C'$.
Since there are no edges $e_d$ with an endpoint between $q_3$ and $p_2$ and the other between $q_2$ and $p_3$, one can check that an edge crossing one or two edges among $p_1q_2$, $p_2q_3$, $p_3q_1$ also crosses at least one or two of the replaced edges $e_1,e_2,e_3$. Hence, $C'\setminus \{e\}$ is a path that has fewer crossings than $Opt$, which is a contradiction.

Thus, we can conclude that $Opt$ contains no edges with endpoints in position (d).

\begin{figure}[tb]
\centering
\includegraphics[scale=0.7]{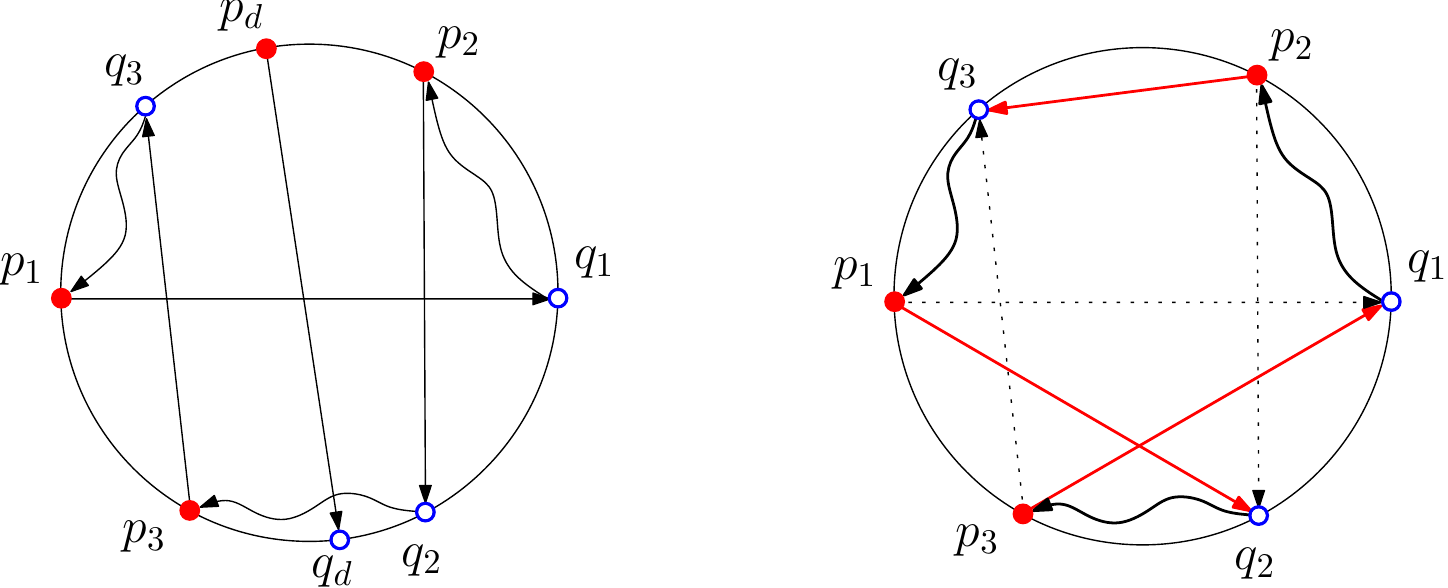}
\caption{Analysis of position (d).}\label{fig:TwoCrossingsd}
\end{figure}

\vspace{0.3cm}
\noindent\emph{Case 2.2: The endpoints of $e_1,e_2,e_3$ are in position (a)}.
\vspace{0.3cm}

Since positions (c) and (d) are forbidden, there is no edge in $Opt$ crossed by two edges with opposite directions. This property, which shall be called the {\it non-opposite edges property}, has several implications that will be assumed in the remaining of the proof:

\begin{itemize}
\item There cannot exist edges in $Opt$ with origin on the right of $e_1$ and extreme on its left.
\item The dummy edge $e=qp\in C$ does not belong to $Opt$ (and crosses no edge in $Opt$), and so it must be in path $P_2$ in order to go from $q_2$ to $p_3$, with $p$ to the left of $e_1$ and $q$ to its right.
\item All edges of $P_3$ must have both endpoints between $q_3$ and $p_1$ (clockwise); otherwise, either $e_3$ or $e_1$ would be crossed by edges in opposite directions. For the same reason, all edges of $P_1$ must have both endpoints between points $p_2$ and $q_1$ (clockwise).
\item  Point $p$ cannot be placed between $p_2$ and $q_1$ (clockwise), and $q$ cannot be a point between $q_3$ and $p_1$ (clockwise); otherwise, $e_2$ or $e_3$ would be crossed by edges in opposite directions.
\end{itemize}

An arbitrary straight-line intersecting $C$ and not containing points of $S$ is crossed by the same number of edges in direction left-right than in direction right-left. As a consequence, the difference between the number of edges crossing a given edge in one direction and in the opposite direction is at most 1. By the non-opposite edges property, an edge cannot be crossed by three different edges in $Opt$, but since one of the edges of $C$ is the dummy edge $e$, there can be an edge crossed by two edges (both in the same direction). Therefore, $e_1$ is intersected exactly twice, and $e_2$ and $e_3$ can be crossed once or twice.

Let $P'$ be the path from $p$ to $p_3$. We now distinguish four cases.

\vspace{0.3cm}
\noindent\emph{Case 2.2.1: Point $p$ is placed between $p_3$ and $p_2$, clockwise}. See Figure~\ref{fig:TwoCrossingsa} top left.
\vspace{0.3cm}

Both endpoints of all edges of path $P'$ are between $p_3$ and $p_2$ as $e_1$ is intersected only by edges $e_2$ and $e_3$, which cannot be crossed by two edges in opposite directions. Moreover, $P_3$ only visits points between $q_3$ and $p_1$, $P_1$ only visits points between $p_2$ and $q_1$, and the path from $q_2$ to $q$ cannot cross $e_1$. This implies that there are no points between $p_1$ and $p_3$.

Let $t$ be the previous point to $p_3$ in  $P'$. A new path from $p$ to $q$ is obtained by replacing edges $tp_3$, $e_1$, $e_2$ by $tp_2$, $q_1p_3$, $p_1q_2$, and removing edge $e$. Since there are no points between $p_1$ and $p_3$, and $e_1$ is only crossed by  $e_2$ and $e_3$,  an edge that intersects one of the new edges also crosses at least one of the removed edges (see Figure~\ref{fig:TwoCrossingsa} bottom left). Thus, the new path has fewer crossings than $Opt$, which is a contradiction.

\vspace{0.3cm}
\noindent\emph{Case 2.2.2: Point $p$ is placed between $p_1$ and $p_3$, clockwise, and no edge of  path $P'$ crosses edge $e_3$}. See Figure~\ref{fig:TwoCrossingsa} top center.
\vspace{0.3cm}

As in case 2.2.1, we again reach a contradiction by obtaining a new path from $p$ to $q$  with fewer crossings than $Opt$. To do this, it suffices to replace in $C$ edges $tp_3$, $e_1$ by $tp_1$, $p_3q_1$, and remove edge $e$ (see Figure~\ref{fig:TwoCrossingsa} bottom center).

\vspace{0.3cm}
\noindent\emph{Case 2.2.3: Point $p$ is placed between $p_1$ and $p_3$, clockwise, and exactly one edge $uv$ of path $P'$ crosses edge $e_3$}. See Figure~\ref{fig:TwoCrossingsa} top right.
\vspace{0.3cm}

By the non-opposite edges property,  $uv$ crosses $e_3$ from right to left. Further,  Corollary~\ref{cor:quadrangular} implies that $u$ is red and $v$ is blue. Thus, edges $uv$, $e_1$ and $e_2$ in $C$ can be replaced by $uq_1$, $p_2v$ and $p_1q_2$ to obtain a new path after removing the dummy edge $e$. Since $e_1$ is only intersected by $e_2$ and $e_3$, and $e_3$ is only crossed by $e_1$ and $uv$, one can easily check that this new path has fewer crossings than $Opt$ (see Figure~\ref{fig:TwoCrossingsa} bottom right), which is a contradiction.

\begin{figure}[tb]
\centering
\includegraphics[scale=0.7]{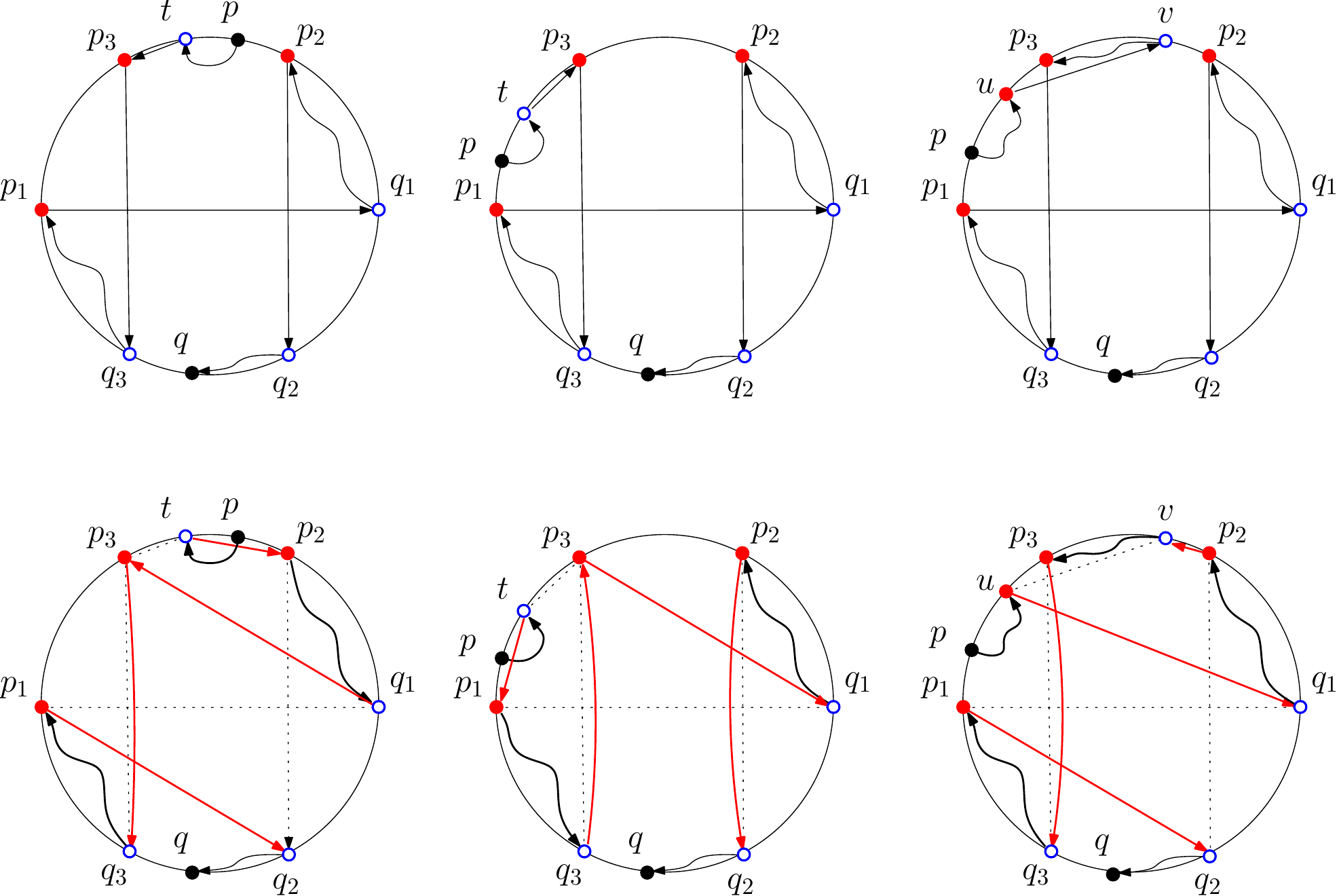}
\caption{Alternative paths for cases 2.2.1, 2.2.2 and 2.2.3.}\label{fig:TwoCrossingsa}
\end{figure}

\vspace{0.3cm}
\noindent\emph{Case 2.2.4: $p=p_3$}
\vspace{0.3cm}

By symmetry  and using $q$ instead of $p$, the same analysis as above can be done to show that $q$ must coincide with $q_2$. Thus, the two endpoints of $Opt$ have distinct color which implies that $|R|=|B|$ and, by hypothesis, the configuration $(S,p,q)$ is non-special.

Since $P_2$ consists of the dummy edge $e=qp$, path $P_3$ only visits points between $q_3$ and $p_1$ (clockwise), and $P_1$ only visits points between $p_2$ and $q_1$ (clockwise) it follows that $p=p_3$ is the unique point between $p_1$ and $p_2$, and $q=q_2$ is the unique point between $q_1$ and $q_3$. Thus, the neighbors of $p$ and $q$ on $S$ are both red and blue, respectively.

Let $R_1$ and $ B_1$ be, respectively, the sets of red and blue points that belong to path $P_1$, which has endpoints of different color, and so $|R_1|=|B_1|$. Further, this path only visits points between $p_2$ and $q_1$ (clockwise) which implies that it can be completed to a cycle $C_1$ (with edge $p_2q_1$) adding no new crossings.

By Theorem~\ref{the:convexcycles}, cycle $C_1$ (and hence $P_1$) has at least $|R_1|-\rr(R_1\cup B_1)$ crossings.
Analogously,  path $P_3$ has at least $|R_3|-\rr(R_3\cup B_3)$ crossings, where $R_3$ and $B_3$ are the sets of red and blue points that belong to that path. Therefore, $Opt$ has at least $$|R_1|-\rr(R_1\cup B_1)+|R_3|-\rr(R_3\cup B_3)+2$$ crossings (the two crossings in edge $e_1$ are also counted). Since $n=|R|=|R_1|+|R_2|+1$ (note that $R\setminus (R_1\cup R_2)=\{p\}$) and $\rr(R_1\cup B_1)+\rr(R_3\cup B_3)=\rr(S) +1$ (the red run to which $p$ belongs is counted twice), then the number of crossings in $Opt$ is at least $n-\rr(S)$.

Now, the configuration $(S,p,q)$ is non-special and the neighbors of $p$ and $q$ have their same respective color so there must be a partition $S_1\cup S_2$ of $S$, say the counterclockwise partition around $p$, such that $p_{i+1}\neq q$. Moreover, this partition is obtained by rotating a ray counterclockwise around $p$ and $S_1$ has minimum size, which implies that $p_{i+1}$ is located between $q_3$ and $p_1$ (clockwise). Then, we can argue as in case 2 of the proof of Lemma~\ref{lemm:path} to obtain a $1$-PHAP with at most $n-\rr(S)-1$ crossings: connect the paths $P_{S_1\cup\{p,p_{i+1}\}}(p,p_{i+1})$ and $P_{S_2}(p_{i+1},q)$ (see Figure~\ref{fig:consecutive} right). This is a contradiction since such a path has fewer crossings than $Opt$.

Therefore, in all cases we reach a contradiction and so the path $Opt$ is $1$-plane.
\end{proof}

For special configurations $(S,r,b)$, one cannot obtain  Hamiltonian alternating paths with endpoints $r$ and $b$ that are 1-plane (Theorem~\ref{the:specialconfig}), and thus, it is natural to ask about the minimum number of crossings that one might expect. The answer to this question is obtained as a consequence of Theorem~\ref{the:minimalpath}. Indeed, consider a special configuration $(S,r,b)$ and an optimum Hamiltonian alternating path $Opt$ with endpoints $r$ and $b$.
We can argue as in the proof of Theorem~\ref{the:minimalpath}, for all cases except the 2.2.4, to show that no edge of $Opt$ can be crossed twice. In case 2.2.4, the hypothesis of the configuration being non-special is only used in the last paragraph, and  the preceding construction can be used: $Opt$ has $n-\rr(S)$ crossings and consists of two optimum $1$-plane paths $P_1,P_3$ and one edge $e_1$, which is crossed twice. Further, by Theorem~\ref{Theorem}, $P_1$ and $P_3$ can be computed in linear time. Thus, we have the following corollary.

\begin{corollary}\label{the:special}
Let $(S,r,b)$ be a special configuration for a point set $S=R\cup B$ in convex position with $|R|=|B|=n\geq 1$. Then, every Hamiltonian alternating path on $S$ with endpoints $r$ and $b$ has at least $n-\rr(S)$ crossings. Moreover, an optimum Hamiltonian alternating path on $S$ with endpoints $r$ and $b$ can be computed in $O(n)$ time, has $n-\rr(S)$ crossings, and all its edges are crossed at most once, except for one of them that is crossed exactly twice.
\end{corollary}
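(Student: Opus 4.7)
The plan is to reuse the case analysis from the proof of Theorem~\ref{the:minimalpath} almost verbatim. Let $Opt$ be an optimum Hamiltonian alternating path on $S$ with endpoints $r$ and $b$, and suppose that some edge of $Opt$ is crossed by two distinct other edges. Each of Cases~1, 2.1, 2.2.1, 2.2.2 and 2.2.3 in the proof of Theorem~\ref{the:minimalpath} produces, via Property~\ref{pro:quadrangular} and Corollary~\ref{cor:quadrangular} alone, a Hamiltonian alternating path on $S$ with strictly fewer crossings, and none of them invokes the non-special hypothesis. Thus $Opt$ must fall into Case~2.2.4, and therefore takes the form
$$r=p_3 \to q_3 \to P_3 \to p_1 \to q_1 \to P_1 \to p_2 \to q_2=b,$$
where $p_1,p_2$ are the two red neighbors of $r$ on $\ch(S)$, $q_1,q_3$ are the two blue neighbors of $b$, the sub-paths $P_1$ and $P_3$ are $1$-plane, and the unique twice-crossed edge is the chord $e_1=p_1q_1$, crossed exactly by $e_2=p_2b$ and $e_3=rq_3$.

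Next I would read off the lower bound from the counting in Case~2.2.4. Closing $P_1$ and $P_3$ into Hamiltonian alternating cycles on their respective point sets by adding the chord between their endpoints (which is an edge of the relevant convex hull and so adds no crossing), Theorem~\ref{the:convexcycles}(i) lower-bounds their individual crossings by $|R_1|-\rr(R_1\cup B_1)$ and $|R_3|-\rr(R_3\cup B_3)$. Adding the two crossings on $e_1$ and using $|R_1|+|R_3|=n-1$ together with $\rr(R_1\cup B_1)+\rr(R_3\cup B_3)=\rr(S)+1$ (the red run containing $r$ splits between the two halves and is counted once in each), the number of crossings in $Opt$ is at least $n-\rr(S)$.

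To prove tightness and compute an optimum path in $O(n)$ time, I would build one explicitly. Label the distinguished points so that clockwise around $\ch(S)$ the order is $p_1,r,p_2,A_1,q_1,b,q_3,A_2$, where $A_1,A_2$ are the two open arcs into which the six points $r,p_1,p_2,b,q_1,q_3$ split $\ch(S)$. The counterclockwise partition of $S\setminus\{r\}$ around $r$ from Lemma~\ref{lemm:partition}, combined with the identity $p_{i+1}=b$ forced by the special-configuration hypothesis, implies that each of the subsets $T_3=\{p_1\}\cup A_2\cup\{q_3\}$ and $T_1=\{p_2\}\cup A_1\cup\{q_1\}$ contains equal numbers of red and blue points; by construction their prescribed endpoints are consecutive on the corresponding convex hull and of different colors. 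Theorem~\ref{Theorem}(ii) therefore computes optimum $1$-plane paths $P_3$ on $T_3$ and $P_1$ on $T_1$ in linear time, and concatenating them through the chords $rq_3$, $p_1q_1$, $p_2b$ produces a Hamiltonian alternating path whose only twice-crossed edge is $p_1q_1$ and whose crossing count realizes $n-\rr(S)$ by the identity $\rr(T_1)+\rr(T_3)=\rr(S)+1$. The one delicate step is checking this run-bookkeeping identity: every red run of $S$ lies entirely in one of $T_1,T_3$ except for the run through $r$, whose parts in $T_3$ (ending at $p_1$) and in $T_1$ (starting at $p_2$) are counted as two distinct runs, yielding the extra $+1$.
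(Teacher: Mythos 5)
Your proposal is correct and follows essentially the same route as the paper: the paper likewise reuses the case analysis of Theorem~\ref{the:minimalpath}, observes that only Case~2.2.4 survives without the non-special hypothesis, reads off the lower bound $n-\rr(S)$ from that case's counting, and assembles the optimum path from the two linear-time sub-paths given by Theorem~\ref{Theorem} joined through the three chords with $e_1$ crossed exactly twice. Your explicit verification that $T_1$ and $T_3$ are balanced and that $\rr(T_1)+\rr(T_3)=\rr(S)+1$ merely spells out details the paper leaves implicit in the phrase ``the preceding construction can be used.''
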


Finally, we analyze the complexity of computing an optimum $1$-PHAP on  $S$ in convex position, provided that the endpoints of the path are fixed and the configuration is non-special (otherwise, by Theorem~\ref{the:specialconfig}, there does not exist such a path).

Let $<p_1,\ldots,p_{h}>$ be the clockwise circular order of the points of $S$. Recall the following notations from Section~\ref{subsec:cycles}; here, we consider $|R|=|B|=n\geq 1$ or $|R|=|B|+1=n+1\geq 2$.
 \begin{itemize}
\item sets $S[p_i,p_j]=\{p_i,p_{i+1}\ldots ,p_{j}\}$ (clockwise) and $S(p_i,p_j)=S[p_i,p_j]\setminus \{ p_i, p_j\}$.
 \item point $p_{J(i)}$ is the first point of $S$ (clockwise) such that $|S[p_i, p_{J(i)}]\cap R |=|S[p_i, p_{J(i)}]\cap B |$ for a given point $p_i$.
\item point $p_{J'(i)}$ is the first point of $S$ (counterclockwise) such that $|S[p_{J'(i)},p_i]\cap R |=|S[p_{J'(i)},p_i]\cap B |$ for a given point $p_i$.
\end{itemize}
Theorem \ref{the:quadratic} below describes an $O(n^2)$ time and space dynamic programming algorithm for computing an optimum $1$-PHAP on $S$ with fixed endpoints $p_s$ and $p_t$. As a main tool, it uses the following lemma. Observe that in both results the paths are considered to be oriented.

\begin{figure}[tb]
\centering
\includegraphics[scale=0.7]{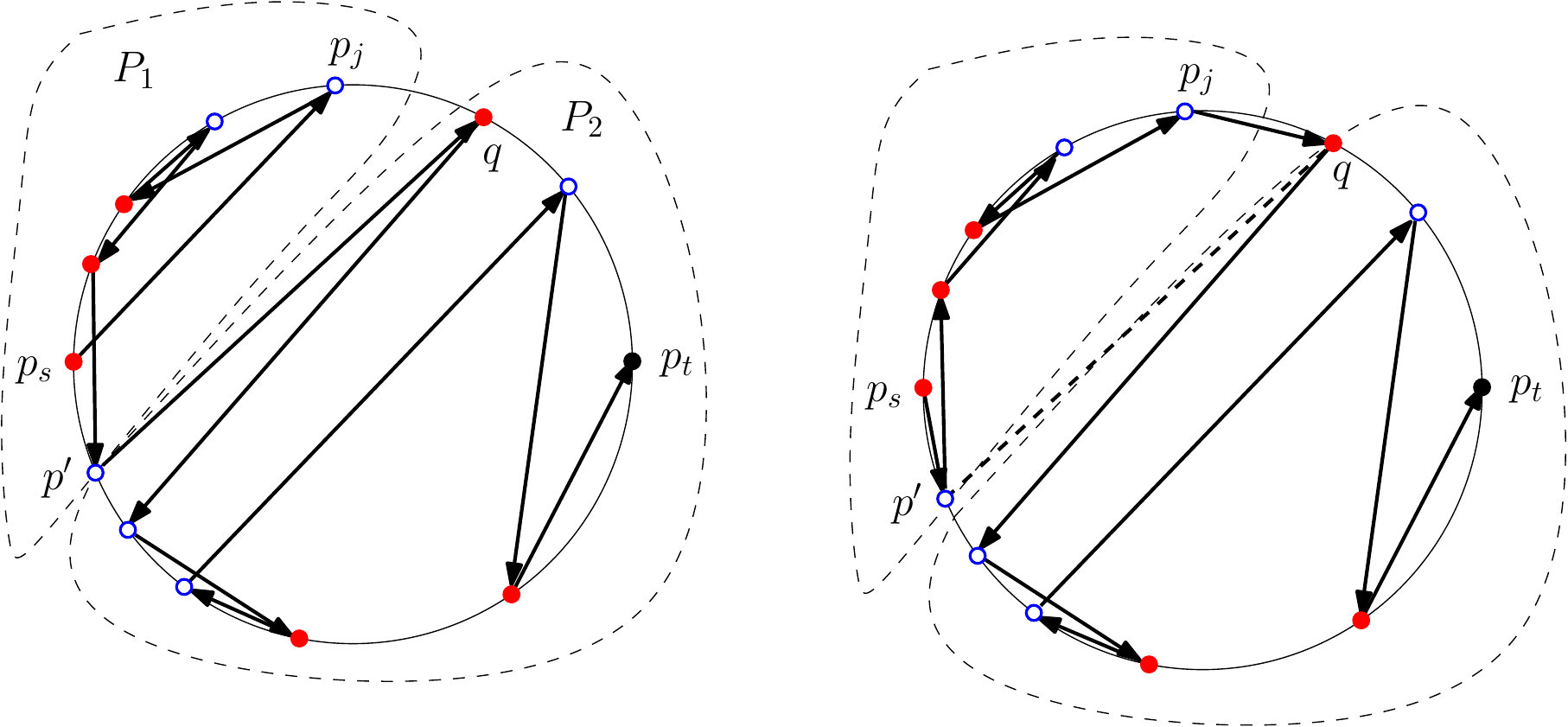}
\caption{Point $p'$ coincides with $p_{s-1}$}\label{AlgorithmOptimalPath1}
\end{figure}

\begin{lemma}\label{AuxiliarForPaths}
Let $S=R\cup B$ be in convex position with $|R|=|B|\geq 1$ or $|R|=|B|+1\geq 2$, and let $p_s,p_t$ be two non-consecutive points of $S$ such that the configuration $(S,p_s,p_t)$ is non-special if $p_s,p_t$ have distinct color. Then, there exists an optimum $1$-PHAP on $S$ from $p_s$ to $p_t$ that contains either an optimum sub-path from $p_s$ to $p_{J(s)}$, or an optimum sub-path from $p_s$ to $p_{J'(s)}$.
\end{lemma}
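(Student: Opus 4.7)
The plan is to combine the structural analysis of Lemma~\ref{lemm:1convex} with an exchange argument. By Theorem~\ref{the:specialconfig} (or Lemma~\ref{lemm:path}(ii) if $p_s$ and $p_t$ have the same color) combined with Theorem~\ref{the:minimalpath}, an optimum Hamiltonian alternating path $P$ on $S$ from $p_s$ to $p_t$ exists and is automatically $1$-plane. Orient $P$ from $p_s$ to $p_t$ and let $p_s p_k$ be its first edge.

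I would next apply Lemma~\ref{lemm:1convex} to the oriented chord $p_sp_k$: the initial portion of $P$ visits all points on the side of this chord opposite to $p_t$, ending at a hull-neighbor $p'$ of $p_s$ (either $p_{s-1}$ or $p_{s+1}$) before crossing back to the side containing $p_t$. Thus the initial portion traverses a contiguous arc of $S$ containing $p_s$ and $p_k$, and being a Hamiltonian alternating sub-path of $P$, that arc is red/blue balanced up to the appropriate parity. The two candidate decomposition points are $p_{J(s)}$ and $p_{J'(s)}$, which are by definition the closest clockwise and counterclockwise balanced points from $p_s$, so the arc visited by this initial portion contains either $p_{J(s)}$ or $p_{J'(s)}$ (in fact, the one lying on the same side as $p_k$).

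The heart of the proof is to show that, among all optimum $1$-PHAPs on $S$ from $p_s$ to $p_t$, one can choose $P$ whose initial portion covers exactly the minimal balanced arc, that is, $S[p_s, p_{J(s)}]$ or $S[p_{J'(s)}, p_s]$. To that end, I would select an optimum $P$ whose initial arc is as short as possible. If this arc were strictly larger than $S[p_s, p_{J(s)}]$ (say), then the initial sub-path itself is a $1$-plane alternating path on a larger balanced arc containing an internal balanced cut at $p_{J(s)}$. Applying Lemma~\ref{lemm:1convex} recursively within this sub-path and rearranging via Property~\ref{pro:quadrangular} would split the initial portion at $p_{J(s)}$ into a sub-path on $S[p_s, p_{J(s)}]$ followed by a separate sub-path on the remainder; this rearrangement does not increase crossings, producing another optimum with a strictly shorter initial arc and contradicting the minimality choice. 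The resulting optimum then contains a sub-path from $p_s$ to $p_{J(s)}$ on $S[p_s, p_{J(s)}]$, or from $p_s$ to $p_{J'(s)}$ on $S[p_{J'(s)}, p_s]$, as required.

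Finally, this initial sub-path must be optimum on its sub-arc: any strictly better $1$-PHAP on $S[p_s, p_{J(s)}]$ (respectively $S[p_{J'(s)}, p_s]$) with the same endpoints could be substituted for it to yield a $1$-PHAP on $S$ from $p_s$ to $p_t$ with strictly fewer crossings, contradicting the overall optimality of $P$. The main obstacle will be the exchange step in the middle: carefully verifying that the reorganization within the overlong initial portion, obtained by applying Property~\ref{pro:quadrangular} after locating the internal cut at $p_{J(s)}$ or $p_{J'(s)}$, preserves the $1$-plane character and does not introduce new crossings. This step requires a case analysis in the spirit of the proof of Theorem~\ref{the:minimalpath}, handling the different positions of the cut point relative to the initial chord $p_s p_k$.
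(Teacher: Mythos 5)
Your skeleton matches the paper's: orient an optimum path, use Lemma~\ref{lemm:1convex} to see that the portion up to the first return near $p_s$ covers a contiguous arc, and then split that arc at $p_{J(s)}$ or $p_{J'(s)}$. But there are two genuine gaps. First, if the second vertex of $Opt$ is $p_j\in S(p_s,p_t)$, the initial sweep ends at $p_{s+1}$ after covering $S[p_s,p_j]$, and Lemma~\ref{lemm:1convex} only says the \emph{next} vertex $p'$ is either $p_{j+1}$ or $p_{s-1}$. Your claim that the arc covered so far contains $p_{J(s)}$ (or $p_{J'(s)}$) fails in the second alternative: when $p_s$ and $p_{s+1}$ have the same color, $S[p_s,p_j]$ is unbalanced by one and need not contain $p_{J(s)}$ at all, while the set actually covered, $S[p_s,p_j]\cup\{p_{s-1}\}$, is not an arc anchored at $p_s$ in the required sense (in this situation $p_{J'(s)}=p_{s-1}$, yet $Opt$ does not contain the edge $p_sp_{s-1}$, so no sub-path to $p_{J'(s)}$ is present either). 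The paper has to rule this case out by an explicit rerouting --- delete $p_sp_j$ and $p_{s-1}q$, insert $p_sp_{s-1}$ and $p_jq$, reverse the middle segment --- which strictly decreases the number of crossings and yields a contradiction; nothing in your proposal plays this role.

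Second, the step you yourself flag as ``the main obstacle'' --- splitting the overlong initial sub-path at $p_{J(s)}$ without increasing crossings --- is essentially the content of the lemma, and Property~\ref{pro:quadrangular} is not the right tool for it: a quadrilateral swap on two edges of a Hamiltonian path can disconnect it into a path plus a cycle, and Corollary~\ref{cor:quadrangular} only applies when the relevant endpoints have matching colors, so ``rearranging via Property~\ref{pro:quadrangular}'' is an appeal to a case analysis you have not done. The paper avoids it entirely: once $p'=p_{j+1}$ is forced, the initial sub-path $P_1$ is an optimum path on the balanced arc $S[p_s,p_{j+1}]$ whose endpoints are the two consecutive extreme points of that arc, so by Corollary~\ref{cor:consecutive} its crossing number equals $|S[p_s,p_{j+1}]|/2-\rr(S[p_s,p_{j+1}])$ exactly; the same formula applied to the two pieces $S[p_s,p_{J(s)}]$ and $S[p_{J(s)},p_{j+1}]$ shows that concatenating their optima gives the same total, so the substitution is free. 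You would also need the short observation (again via Corollary~\ref{cor:consecutive}) that if $p_s$ and $p_{s+1}$ have different colors then the first edge must be $p_sp_{s+1}=p_sp_{J(s)}$ and there is nothing to prove. As written, the proposal identifies the right landmarks but leaves the two decisive exchanges unproved.
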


\begin{proof}
Consider an optimum $1$-PHAP, say $Opt$, directed from $p_s$ to $p_t$, and let $p_j$ be the point of $S$ visited after $p_s$ in $Opt$. Assuming that $p_j$ belongs to $S(p_s,p_{t})$, we show that there exists an optimum $1$-PHAP on $S$ from $p_s$ to $p_t$ that contains an optimum sub-path from $p_s$ to $p_{J(s)}$. A similar analysis can be done when $p_j$ belongs to $S(p_t,p_s)$, to show that there exists an optimum $1$-PHAP on $S$ from $p_s$ to $p_t$ that contains an optimum sub-path from $p_s$ to $p_{J'(s)}$.

If $p_j=p_{s+1}$, then $p_j$ coincides with $p_{J(s)}$. Otherwise, by Lemma~\ref{lemm:1convex}, path $Opt$ visits the following points in the given order: point $p_j$, all points in $\{p_{s+2}, \ldots , p_{j-1}\}$, point $p_{s+1}$, and  point $p'$, where $p'$ is either $p_{j+1}$ or $p_{s-1}$. Thus, $Opt$ consists of two optimum 1-plane paths that are connected at $p'$:
path $P_1$ from $p_s$ to $p'$ on $S[p_s,p_j]\cup \{p'\}$, and path $P_2$ from $p'$ to $p_t$ visiting the remaining points.

If $p_s$ and $p_{s+1}$ have different colors, then $p_s$ and $p'$ have the same color and are consecutive on $S[p_s,p_j]\cup \{p'\}$. By Corollary~\ref{cor:consecutive} $(ii)$, edge $p_sp_{s+1}$ must appear in $P_1$, and so $p_j=p_{s+1}$ which coincides with $p_{J(s)}$. Hence, we may assume that the color of $p_s$ is the same of $p_{s+1}$ and different of the $p'$.

Suppose that $p'=p_{s-1}$, and let $q$ be the point of $S$ visited  after $p'$ in $Opt$. A new path from $p_s$ to $p_t$ can be drawn by removing edges $p_sp_j$ and $p'q$, by adding edges $p_sp'$ and $p_jq$, and by reversing the orientation of the path from $p_j$ to $p_{s+1}$ (see Figure~\ref{AlgorithmOptimalPath1}). One can check that this new path has fewer crossings than $Opt$, a contradiction. Thus,  $p'\neq p_{s-1}$ which implies that $p'=p_{j+1}$.

Since the endpoints of $P_1$ have different colors, necessarily $S[p_{s},p_{j+1}]$ contains the same number of red and blue points, and so $p_{J(s)}\in S[p_{s},p_{j+1}]$. By Corollary \ref{cor:consecutive}$(i)$, path $P_1$ has $$\frac{|S[p_{s},p_{j+1}]|}{2}-\rr(S[p_s,p_{j+1}])$$ crossings. If $p_{J(s)} \ne p_{j+1}$, then $P_1$ can be replaced by a path that consists of two optimum 1-plane paths  connected at $p_{J(s)}$: path $P'$ on $S[p_{s},p_{J(s)}]$ from $p_s$ to $p_{J(s)}$, and path $P''$ on $S[p_{J(s)}, p_{j+1}]$ from $p_{J(s)}$ to $p_{j+1}$. By Corollary~\ref{cor:consecutive}, one can check that $P_1$ and $P'\cup P''$ have the same number of crossings.
\end{proof}

\begin{theorem}\label{the:quadratic}
Let $S=R\cup B$ be in convex position with $|R|=|B|\geq 1$ or $|R|=|B|+1\geq 2$, and let $p_s,p_t\in S$.
Then, an optimum Hamiltonian alternating path on $S$ from $p_s$ to $p_t$ can be computed in $O(n^2)$ time and space.
\end{theorem}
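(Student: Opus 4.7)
The plan is to give an $O(n^2)$-time dynamic-programming algorithm built upon Lemma~\ref{AuxiliarForPaths}.

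\emph{Preprocessing.} Using \textsc{Procedure $J$-PAIRS} and its counterclockwise analogue together with prefix sums of colors and of run boundaries, I would precompute, for every clockwise sub-arc $S[p_a,p_b]$: the cardinalities $|R\cap S[p_a,p_b]|$ and $|B\cap S[p_a,p_b]|$, the run count $\rr(S[p_a,p_b])$, and the first clockwise/counterclockwise balanced points from each of its boundary vertices; this takes $O(n^2)$ time and space and enables $O(1)$ queries for balance, first-balanced-point, and special-configuration detection on any sub-arc. I would also fill the auxiliary table $F(a,b)$, the minimum number of crossings of an optimum $1$-PHAP on $S[p_a,p_b]$ from $p_a$ to $p_b$; since these endpoints are adjacent in the sub-convex-hull of $S[p_a,p_b]$, Corollary~\ref{cor:consecutive} yields the closed form $F(a,b)=|R\cap S[p_a,p_b]|-\rr(S[p_a,p_b])$ (with the analogous formula in the red-red case). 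Similarly I would tabulate $F_{\mathrm{gen}}(a,k,b)$, the optimum path between the adjacent vertices $p_a$ and $p_k$ of the convex point set $\{p_a\}\cup S[p_k,p_b]$; the same corollary applies, and only the $O(n^2)$ triples that actually arise in the main DP need be evaluated.

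\emph{Main DP.} Let $D_L(a,b)$ and $D_R(a,b)$ be the minimum number of crossings of a Hamiltonian alternating path on $S[p_a,p_b]$ from $p_a$ or $p_b$ (respectively) to the fixed endpoint $p_t$, over all arcs $S[p_a,p_b]$ that contain $p_t$ and have compatible color balance. Applying Lemma~\ref{AuxiliarForPaths} to the sub-arc $S[p_a,p_b]$ with start $p_a$, an optimum path begins either with a clockwise initial sub-path to $k_1=J_{S[p_a,p_b]}(a)$ or with a counterclockwise initial sub-path (through $p_b$) to $k_2=J'_{S[p_a,p_b]}(a)$, and at least one of these is consistent with $p_t$ lying in the continuation arc. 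Because $S$ is in convex position, the two sub-polygons produced by the split share only the splitting vertex, so edges of one cannot cross edges of the other and the total number of crossings is exactly the sum; this yields the $O(1)$-time recurrence
\[
D_L(a,b)=\min\bigl\{F(a,k_1)+D_L(k_1,b),\ F_{\mathrm{gen}}(a,k_2,b)+D_R(a+1,k_2)\bigr\}
\]
taken over the valid transitions, with the symmetric recurrence for $D_R$. The original problem is then solved by applying the lemma at $p_s$ on the full set $S$, giving
\[
\mathrm{ANS}=\min\bigl\{F(s,J(s))+D_L(J(s),s-1),\ F(J'(s),s)+D_R(s+1,J'(s))\bigr\},
\]
with straightforward handling of the degenerate cases when $J(s)$ or $J'(s)$ coincides with $p_t$. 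Reconstruction backtraces through the $O(n)$ DP choices along the optimum and emits each $F$- or $F_{\mathrm{gen}}$-sub-path in linear time using the algorithm underlying Theorem~\ref{Theorem}; since these sub-paths partition $S$, total reconstruction takes $O(n)$.

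\emph{Main obstacle.} The principal subtlety is that Lemma~\ref{AuxiliarForPaths} excludes distinct-color special configurations, so a sub-state $(S[p_a,p_b],p_a,p_t)$ may fall outside its scope. Such states are detected in $O(1)$ from the precomputed $J,J'$ values and neighbor colors; when special, Corollary~\ref{the:special} supplies the value directly as $|R\cap S[p_a,p_b]|-\rr(S[p_a,p_b])$ and prescribes the concrete structure (two $1$-plane halves joined by one doubly-crossed edge) used during reconstruction. With this handling the DP still has $O(n^2)$ states and $O(1)$ time per transition, giving the total $O(n^2)$ time and space bound stated in the theorem.
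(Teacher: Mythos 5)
Your proposal is correct and follows essentially the same route as the paper: the same decomposition via Lemma~\ref{AuxiliarForPaths}, the same two DP tables indexed by sub-arcs containing $p_t$ (your $D_L,D_R$ are the paper's $Cross_1,Cross_2$, and your $F,F_{\mathrm{gen}}$ are its $Cross,Cross'$), the same constant-time evaluation of initial sub-paths via Corollary~\ref{cor:consecutive}, and the same $O(n^2)$ accounting. Your explicit treatment of special sub-configurations arising as DP states (via Corollary~\ref{the:special}) is a welcome extra care point, but it does not change the approach.
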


\begin{proof}
If $p_s$ and $p_t$ are consecutive points of $S$, by Theorem \ref{Theorem}, an optimum path from $p_s$ to $p_t$ can be computed in linear time. By Corollary \ref{the:special}, the same happens when $(S,p_s,p_t)$ is a special configuration. Thus, we may assume that $p_s$ and $p_t$ are non-consecutive, and that $(S,p_s,p_t)$ is non-special, and so, by Theorem \ref{the:minimalpath}, the optimum path is 1-plane.

By Lemma~\ref{AuxiliarForPaths}, an optimum path can be decomposed into two optimum paths: one from $p_s$ to $p_{J(s)}$ on $S[p_s,p_{J(s)}]$ (or from $p_s$ to $p_{J'(s)}$ on $S[p_{J'(s)},p_s]$) and the other from $p_{J(s)}$ to $p_t$ on $S[p_{J(s)}, p_{s-1}]$ (or from $p_{J'(s)}$ to $p_t$ on $S[p_{s+1}, p_{J'(s)}]$). By Theorem \ref{Theorem}, the first path can be computed in linear time.
Therefore, given $p_s\le p_i<p_t$ and $p_t\le p_j<p_s$, we have to compute an optimum $1$-PHAP on $S[p_i,p_j]$ with endpoints $p_t$ and either $p_i$ or $p_j$.

We explore all subsets $S[p_i,p_j]$ containing point $p_t$ (but not $p_s$) in increasing order of their size. Let $Cross_1[p_i,p_j]$ denote the number of crossings of an optimum $1$-PHAP on $S[p_i,p_j]$ from $p_i$ to $p_t$. Analogously, $Cross_2[p_i,p_j]$ is the number of crossings of an optimum $1$-PHAP on $S[p_i,p_j]$ from $p_j$ to $p_t$. Let $Cross[p_i,p_j]$ be the number of crossings of an optimum $1$-PHAP on $S[p_i,p_j]$ from $p_i$ to $p_j$. The value $Cross_1[p_i,p_j]$ can be computed by defining the following values $c_1$ and $c_2$:

$$c_1 = \left\{
  \begin{array}{ll}
    \infty & \hbox{if $p_{J(i)}\not\in S(p_i,p_t)$} \\
    Cross[p_i,p_{J(i)}]+Cross_1[p_{J(i)},p_j] & \hbox{if $p_{J(i)}\in S(p_i,p_t)$}
  \end{array}
\right.
$$

$$c_2 = \left\{
  \begin{array}{ll}
    \infty & \hbox{if $p_{J_1(i)}\not\in S[p_{t+1},p_j]$} \\
    Cross'[p_i,p_j]+Cross_2[p_{i+1},p_{J_1(j)}] & \hbox{if $p_{J_1(i)}\in S[p_{t+1},p_j]$}
  \end{array}
\right.
$$
where $p_{J_1(j)}$ is the counterclockwise first point after $p_j$ such that there are the same number of red and blue points in $\{p_i\}\cup S[p_{J_1(j)}, p_j]$, and $Cross'[p_i,p_j]$ is the number of crossings of an optimum $1$-PHAP from $p_i$ to $p_{J_1(j)}$ on $\{p_i\}\cup S[p_{J_1(j)},p_j]$ (see Figure~\ref{AlgorithmOptimalPath2} right).

Thus, by Lemma~\ref{AuxiliarForPaths}, $Cross_1[p_i,p_j]=\min\{c_1,c_2\}$. Note that, when $p_i$ and $p_t$ are consecutive,  by Corollary~\ref{cor:consecutive},  $Cross_1[p_i,p_j]$ is the difference between the red (blue) points and the red (blue) runs of $S[p_i,p_j]$.

We next show that indices $J(i), J_1(i)$ can be pre-computed in $O(n)$ time, which implies that $Cross[p_i,p_j]$ and $Cross'[p_i,p_j]$ can be computed in constant time.

\textsc{Procedure $J$-PAIRS} computes $p_{J(i)}$ in linear time. Moreover, if $ J_1(i)\ne j$, then $p_{J_1(i)}$ is the first counterclockwise point from $p_j$ such that the difference between the number of red and blue points in $S[p_{J_1(i)},p_j]$ equals one. It is not difficult to check that a slight modification of step 2 of \textsc{Procedure $J$-PAIRS} also computes $p_{J_1(i)}$ in linear time.

By numbering cyclically the runs of $S$ we can keep, for each point, the number of the run to which it belongs. Using this information, the number of red (blue) runs of any set $S[p_i,p_j]$ can be computed in constant time. In addition, by Corollary \ref{cor:consecutive}, $Cross[p_i,p_{J(i)}] = |S[p_i, p_{J(i)}]|/2 - \rr(S[p_i, p_{J(i)}])$, and so $Cross[p_i,p_{J(i)}]$ can  also be computed in constant time. The same reasoning applies to show that $Cross'[p_i,p_j]$ is computed in constant time.

A similar analysis can be done to compute $Cross_2[p_i,p_j]$. Therefore, each of the $|S(p_s,p_t)|\times |S(p_t,p_s)|$ values of $Cross_1[p_i,p_j]$ and $Cross_2[p_i,p_j]$ can be obtained in constant time. With these pre-computed data and using a standard dynamic programming algorithm, we can compute the number of crossings and the edges of an optimum $1$-PHAP on $S$ from $p_s$ to $p_t$ in $O(n^2)$ time and space.
\end{proof}

\begin{figure}[tb]
\centering
\includegraphics[scale=0.7]{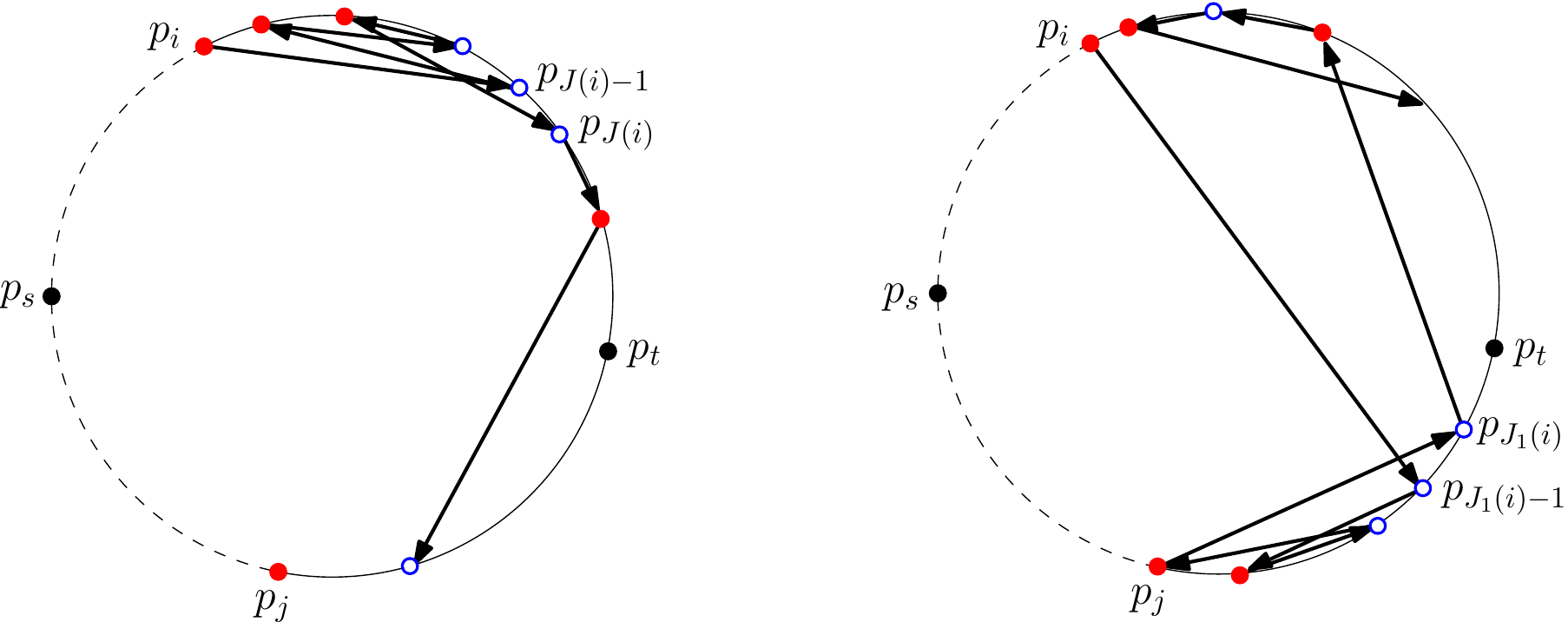}
\caption{The two cases to compute $Cross_1[p_i,p_j]$.}\label{AlgorithmOptimalPath2}
\end{figure}

To conclude this section, observe that $Cross_1[p_i, p_{i-1}]$ gives the minimum number of crossings of a Hamiltonian alternating path on $S$ from $p_i$ to $p_t$. Thus, the previous algorithm can be used to compute (in $O(n^2)$ time and space) all the optimum Hamiltonian alternating paths from any $p_i\neq p_t$ to $p_t$. Applying the algorithm $n$ times for any possible endpoint $p_t$, one can obtain the all-pairs optimum Hamiltonian alternating paths in $O(n^3)$ time.

\section{Open questions}\label{sec:conclu}

We conclude this paper by formulating a number of problems, related to those here considered, that remain open.

\begin{problem}
Characterize the special configurations $(S,r,b)$ that do admit a $1$-PHAP.
\end{problem}

Although for some point configurations $S$ in general position, it is not difficult to find a $1$-PHAC on $S$ with minimum number of crossings, in general, computing such a path seems to be an NP-hard problem. Thus, it would be interesting to solve the following two problems.

\begin{problem}
Among all Hamiltonian alternating cycles on $S$ with minimum number of crossings, determine whether there is at least one that is 1-plane.
\end{problem}

\begin{problem}
Decide whether it is possible to compute in polynomial time a $1$-PHAC on $S$ with minimum number of crossings.
\end{problem}

The construction of Remark~\ref{rem:converse}  works because the number of runs in $S$ is high. If $S$ contains only one red run and one blue run, then there is a unique $1$-PHAC on $S$. Thus, the following problem arises.

\begin{problem}
Prove that the number of 1-plane Hamiltonian alternating cycles on $S$ is exponential in $\rr(S)$ for $S$ in convex position.
\end{problem}

Finally, we would like to improve the $O(n^2)$ running time of the dynamic programming algorithm of Theorem~\ref{the:quadratic}, and so the following problem should be addressed.

\begin{problem}
Determine whether it is possible to reduce the number of explored subsets $S[p_i,p_j]$ to $o(n^2)$ (for example a linear subfamily, obtaining a linear time algorithm).
\end{problem}

\end{document}